\documentclass[11pt]{article}

\usepackage[letterpaper,top=1in,bottom=1in,left=1in,right=1in]{geometry}

\usepackage{graphicx}
\usepackage{tikz}
\usetikzlibrary{arrows.meta, positioning, shapes.geometric}
\usepackage{booktabs}
\usepackage{lineno}
\usepackage{enumerate}
\usepackage[shortlabels]{enumitem}
\usepackage{algorithm}
\usepackage[noend]{algpseudocode}
\usepackage{appendix}
\usepackage{todonotes}
\usepackage{xcolor}
\usepackage{wrapfig}

\usepackage{amsmath, amsfonts}
\usepackage{amsthm}
\usepackage{mathtools}
\usepackage{thmtools}
\usepackage{thm-restate}

\usepackage[
	style=alphabetic,
	backref=true,
	doi=true,
	url=false,
	maxcitenames=3,
	mincitenames=3,
	maxbibnames=10,
	minbibnames=10,
    maxalphanames=6,
    minalphanames=3,
	backend=biber,
	sortlocale=en_US
]{biblatex}
\AtEveryCitekey{\clearfield{note}}

\usepackage{hyperref}
\usepackage[capitalize]{cleveref}

\makeatletter
\crefname{ALG@line}{line}{lines}
\Crefname{ALG@line}{Line}{Lines}

\AddToHook{env/algorithmic/begin}{\def\@currentcounter{ALG@line}}
\makeatother

\declaretheorem[name=Theorem]{theorem}
\declaretheorem[name=Claim, sibling=theorem]{claim}
\declaretheorem[name=Lemma, sibling=theorem]{lemma}

\declaretheorem[name=Definition, sibling=theorem]{definition}

\declaretheorem[name=Conjecture, sibling=theorem]{conjecture}
\graphicspath{ {figures/} }

\newcommand{\pa}[1]{\left(#1\right)}

\newcommand{\set}[1]{\left\{#1\right\}}
\newcommand{\abs}[1]{\left|#1\right|}

\newcommand{\cO}{O}

\newcommand{\cA}{\mathcal{A}}
\newcommand{\tO}{\widetilde{O}}

\newcommand{\ol}{\overline}

\newcommand{\reversed}[1]{\overleftarrow{#1}}

\DeclareMathOperator{\poly}{poly}
\DeclareMathOperator{\volop}{vol}

\newcommand{\inc}[2][G]{\partial_{#1}\!\pa{#2}}
\newcommand{\incout}[2][G]{\partial_{#1}^{+}\!\pa{#2}}
\newcommand{\incin}[2][G]{\partial_{#1}^{-}\!\pa{#2}}

\newcommand{\dg}[2][G]{\deg_{#1}\!\pa{#2}}
\newcommand{\dgout}[2][G]{\deg_{#1}^{+}\!\pa{#2}}
\newcommand{\dgin}[2][G]{\deg_{#1}^{-}\!\pa{#2}}

\newcommand{\mindeg}[1][G]{\delta\!\pa{#1}}
\newcommand{\mindegout}[1][G]{\delta^{+}\!\pa{#1}}
\newcommand{\mindegin}[1][G]{\delta^{-}\!\pa{#1}}
\newcommand{\maxdeg}[1][G]{\Delta\!\pa{#1}}
\newcommand{\maxdegout}[1][G]{\Delta^{+}\!\pa{#1}}
\newcommand{\maxdegin}[1][G]{\Delta^{-}\!\pa{#1}}

\newcommand{\nbr}[2][G]{N_{#1}\!\pa{#2}}
\newcommand{\nbrout}[2][G]{N_{#1}^{+}\!\pa{#2}}
\newcommand{\nbrin}[2][G]{N_{#1}^{-}\!\pa{#2}}

\newcommand{\bnd}[2][G]{\partial_{#1}\!\pa{#2}}
\newcommand{\bndout}[2][G]{\partial_{#1}^{+}\!\pa{#2}}
\newcommand{\bndin}[2][G]{\partial_{#1}^{-}\!\pa{#2}}

\newcommand{\vol}[2][G]{\volop_{#1}\!\pa{#2}}
\newcommand{\volout}[2][G]{\volop_{#1}^{+}\!\pa{#2}}
\newcommand{\volin}[2][G]{\volop_{#1}^{-}\!\pa{#2}}

\newcommand{\cutsize}{\kappa}

\newcommand{\volest}{\nu}
\newcommand{\event}{\mathcal E}

\newcommand{\volcut}[3]{$(#1, #2\!#3)$-separation triple}
\newcommand{\volvertcut}[4]{$#4$-rooted $(#1, #2\!#3)$-separation triple}
\newcommand{\voledgecut}[4]{$#4$-rooted $(#1, #2\!#3)$-edge cut}

\newcommand{\ready}[2]{$(#1, #2)$-ready}

\newcommand{\ceil}[1]{\lceil #1 \rceil}
\newcommand{\floor}[1]{\lfloor #1 \rfloor}
\newcommand{\boosts}{\texttt{boosts}}

\newcommand{\mh}[2][says]{\textcolor{red}{\texttt{MH #1:}} {\color{red} #2}}

\newcommand{\runtimead}{{$O(\kappa^4 \log^2(n))$}}
\newcommand{\runtimeob}{{$O(\kappa^3 \log^2(n))$}}
\title{Incremental Exact Vertex Connectivity}
\author{
Felix Adena\thanks{Institute of Science and Technology Austria (ISTA), Klosterneuburg, Austria}
\and
Christoph Kern\footnotemark[1]
\and
Monika Henzinger\footnotemark[1]
\and
Daniel Rutschmann\footnotemark[1]
}

\date{}

\begin{document}

\maketitle

\begin{abstract}
We study the problem of maintaining the vertex connectivity $\kappa$ of a graph under edge insertions.
For undirected graphs, our algorithm uses $O(\kappa^4 \log^2 n)$ amortized time per edge insertion.
After each edge insertion, the algorithm outputs the current value of $\kappa$, plus a set of $\kappa$ vertices whose removal disconnects the graph.
Our algorithm is Monte-Carlo randomized and works with a randomness-adaptive adversary.
The algorithm also works on directed graphs against an oblivious adversary, which is equivalent to an output-adaptive adversary if we only have to output the current value of $\kappa$.

Previously, algorithms using SPQR trees could maintain the vertex connectivity in undirected graphs
with $O(\log n)$ time per edge insertion for $\kappa \le 3$,
but no algorithm with $o(n)$ time per edge insertion was known for $\kappa \ge 4$.
Moreover, for $\kappa \ge 4$, on general directed graphs or sparse undirected graphs, no algorithm faster than recomputation from scratch was known.

We complement our algorithm with a conditional lower bound: Under the OMv conjecture, any insertions-only algorithm that outputs the current value of $\kappa$ after every insertion needs $\Omega(n^{1-\varepsilon})$ amortized time per edge insertion for $\kappa = \Theta(n)$.
Previous conditional lower bounds for this problem only apply to worst case running time, or algorithms that support both edge insertions and edge deletions.

\end{abstract}

\subparagraph{Funding} This research was supported by the European Research Council (ERC) under the European Union's Horizon 2020 research and innovation programme (Grant agreement No.\ 101019564), and the Austrian Science Fund (FWF) under grant DOI 10.55776/Z422.

For open access purposes, the authors have applied a CC BY public copyright license to any author-accepted manuscript version arising from this submission.
Views and opinions expressed are however those of the author(s)
only and do not necessarily reflect those of the European Union or the 
\begin{wrapfigure}[3]{r}[0pt]{4.7cm}%
\vspace{-2em}
\includegraphics[width=5cm]{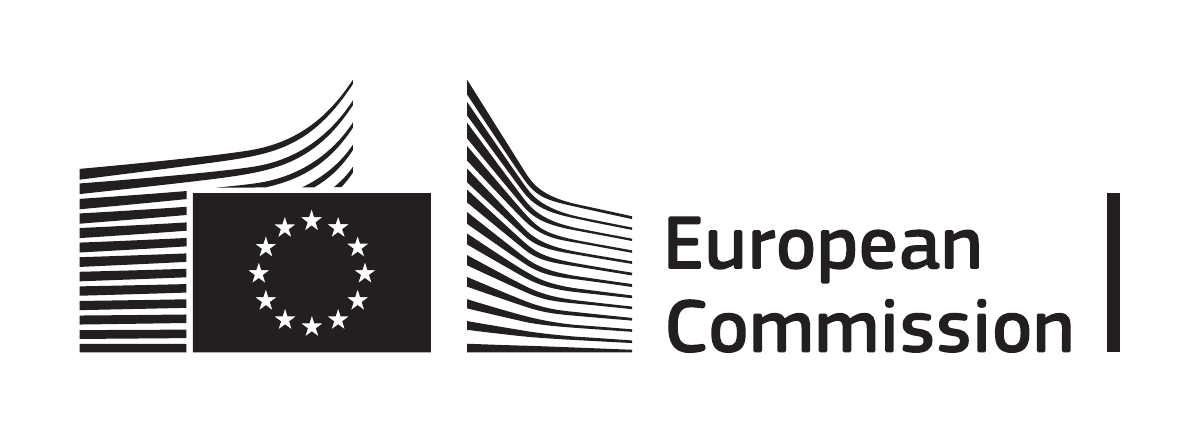}
\end{wrapfigure}
European Research Council Executive Agency.
Neither the European Union nor the granting authority can be held responsible for them.
\vspace{1em}

\pagenumbering{gobble}
\pagebreak
\pagenumbering{arabic}

\section{Introduction}
The \textit{vertex connectivity} $\kappa(G)$ of a (directed) graph $G=(V,E)$ represents the minimum number of \textit{vertices} whose removal causes $G$ to no longer be strongly connected \textit{or} turns the graph into a 1-vertex graph, whereas the \textit{edge connectivity} $\lambda(G)$ denotes the minimum number of \textit{edges} whose removal causes $G$ to no longer be strongly connected. 
Computing the vertex and edge connectivity are fundamental problems needed, for example, to determine the resilience of a network to vertex and edge failures, respectively.

\textbf{Edge connectivity.} In the past, a large body of work has concentrated on computing \textit{edge connectivity in undirected graphs}, leading to $\tilde O(m)$\footnote{The $\tilde O(\cdot)$ notation hides polylogarithmic factors.} time randomized~\cite{karger1996new} and deterministic~\cite{kawarabayashi2015deterministic,DBLP:journals/siamcomp/HenzingerRW20,DBLP:conf/soda/HenzingerLRW24} algorithms, where $m=|E|$. The problem has also been studied in the \textit{dynamic setting}, where each update operation can insert or delete one edge:
%in the unweighted setting 
It can be solved in update time sublinear in $m$~\cite{DBLP:conf/soda/GoranciHNSTW23,DBLP:journals/algorithmica/VosC26} and in polylogarithmic amortized time per insertion in the insertions-only setting~\cite{DBLP:journals/talg/GoranciHT18}\footnote{There is also a large body of work on $(1+\varepsilon)$-approximate dynamic edge connectivity algorithms, omitted here to keep the presentation simple.}.
In \textit{directed graphs}, our understanding still lags behind: Currently there are three different algorithms
with incomparable running times, namely $O(\min\{\frac{n}{m^{1/3}}, \sqrt{n}\} \cdot m^{1+o(1)})$~\cite{DBLP:conf/focs/Cen0NPSQ21}, $O(\kappa m)$~\cite{DBLP:journals/jcss/Gabow95}, and $O(n^2)$~\cite{DBLP:conf/icalp/ChekuriQ21}, where $n = |V|$. In the fully dynamic setting\footnote{The fully dynamic setting allows both edge insertions and deletions.} there are strong conditional lower bounds showing that even just testing whether the graph is strongly connected (which is equivalent to testing whether $\lambda >0$) takes near-linear time per update or query (based on SETH)~\cite{DBLP:conf/focs/AbboudW14} or $\Omega(m^{1/2-\varepsilon})$  (based on OMv)
~\cite{DBLP:conf/stoc/HenzingerKNS15}. In the fully dynamic setting these lower bounds are for the \textit{amortized} running time; in the insertions-only or deletions-only setting these bounds are for the \textit{worst-case} running time.
Recently,
insertions-only and deletions-only algorithms for testing strong connectivity and 2-edge connectivity in $O(m^{1+o(1)})$ and $\tilde O(n^2)$ total time for all updates, respectively, were given~\cite{chen2024almost,van2024almost,DBLP:conf/esa/GeorgiadisGI25}. There is no known insertions-only or deletions-only algorithm for computing edge-connectivity in directed graphs better than recomputation from scratch.

%deterministic algorithms for the insertions-only~\cite{chen2024almost} and deletions-only~\cite{van2024almost} setting where developed with $O(m^{1+o(1)})$ total time for all operations

\textbf{Vertex connectivity.} As $k$-vertex connectivity implies $k$-edge connectivity but not vice versa, computing the vertex connectivity is generally considered more challenging than computing the edge connectivity.
Motivated by the progress for edge connectivity, computing \textit{vertex connectivity} has also seen renewed interest: 
In a recent series of results, the running time for \textit{undirected graphs} was reduced to $\tilde{O}(m + n \kappa^3)$~\cite{DBLP:conf/stoc/NanongkaiSY19,DBLP:conf/soda/ForsterNYSY20} and to $O(m^{1+o(1)})$~\cite{DBLP:conf/stoc/LiNPSY21}.
In \textit{directed graphs} the $\tilde O(nm)$ running time of the randomized algorithm of~\cite{DBLP:journals/jal/HenzingerRG00} was recently improved in four ways: (1) by a deterministic algorithm~\cite{DBLP:conf/stoc/JiangNSY25} with almost the same running time, namely $O(mn^{1+o(1)})$, (2) by randomized algorithms with $O(\min\{m n^{11/12 + o(1)}, n^{2+o(1)}\})$~\cite{DBLP:conf/stoc/LiNPSY21} and $O(\min\{m^{1+o(1)} \kappa, n^{2+o(1)}\})$~\cite{DBLP:conf/soda/ChuzhoyMT26} running time, (3) by a reduction~\cite{DBLP:conf/sosa/FischerJMY26} to the undirected setting, resulting in a randomized algorithm with $O(n^{2+o(1)})$ running time, and (4) by an algorithm with time linear in $m$ \textit{and} polynomial in $\kappa$, namely $\tilde{O}(m \kappa^2)$, by \cite{DBLP:conf/soda/ForsterNYSY20}.

However, the \textit{dynamic setting} for vertex connectivity is still barely understood.
In \emph{undirected graphs} the classic results consist of (1) a $(2+\varepsilon)$-approximate insertions-only algorithm with $O(n/\varepsilon)$ amortized time per update~\cite{DBLP:journals/jal/Henzinger97}, (2) algorithms for $k \le 4$, namely
(2a) fully dynamic algorithms for testing whether, for a fixed value $k = 2, 3$ or $4$, the graph is $k$-connected and, if not, returning a cut of size at most $k-1$, in $O(\poly \log n)$~\cite{holmLT2001,DBLP:conf/stoc/HolmNR025}\footnote{This is not spelled out in their papers, but their top trees should contain sufficient information to do this.}; $O(n)$, and $O(n\alpha(n,n))$\footnote{Here $\alpha$ denotes the two-parameter inverse Ackermann function}\cite{DBLP:journals/jacm/EppsteinGIN97} time per update, respectively, and
(2b) insertions-only algorithms in the same setting for $k = 2$ with $O(\log n)$ worst-case update time~\cite{PoutreW98},
and $k=3$ with $O(\log n)$ amortized update time~\cite{DBLP:journals/algorithmica/BattistaT96} or $O(\log^3 n)$ worst-case update time~\cite{DBLP:conf/soda/HolmR20},
and (3) testing whether the graph is $k$-connected for a fixed value $k$ in time $O(n^{1.575} + k^2n)$ per update~\cite{DBLP:conf/soda/Sankowski07}.
Note that for sparse undirected graphs the results for $k=4$ are almost the same as the best static algorithm.

Additionally, given
two vertices $s$ and $t$ as query parameters, there are (4) fully dynamic algorithms that can test whether $s$ and $t$ are 2-connected in $O(\log n)$ time and process updates in $\tilde O(\log^2 n)$ time~\cite{DBLP:conf/stoc/HolmNR025},
and (5) incremental algorithms based on SPQR trees that can test whether $s$ and $t$ are 3-connected and process updates in $O(\log n)$ amortized time~\cite{DBLP:journals/algorithmica/BattistaT96} or $O(\log^3 n)$ worst-case time~\cite{DBLP:conf/soda/HolmR20}.
Comparing with the performance for dynamic edge connectivity in undirected graphs, it is interesting to ask (A) whether there exists an insertions-only algorithm with polylogarithmic amortized update time, and (B) whether there exists a sublinear (in $m$) time algorithm for the fully dynamic setting. Furthermore, this leads to question (C) whether an 
 insertions-only or deletions-only algorithm with amortized running time $\tilde O(\poly(\kappa))$ is possible.

In \textit{directed graphs} the same conditional  lower bounds for fully dynamic algorithms as above for edge connectivity apply, showing that even  testing whether $\kappa >0$ takes at least near-linear time per update or query (from SETH)~\cite{DBLP:conf/focs/AbboudW14} or $\Omega(m^{1/2-\varepsilon})$  (from OMv)
~\cite{DBLP:conf/stoc/HenzingerKNS15}. As mentioned above, these lower bounds also apply to incremental/decremental algorithms with worst-case (but not amortized) running time.
This leads to the question (D) whether an 
 insertions-only or deletions-only algorithm with amortized running time $\tilde O(\poly(\kappa))$ is also possible in directed graphs.

%there are strong conditional lower bounds: Based on SETH any algorithm with polynomial preprocessing time that decides whether $\kappa >0$ (i.e.~testing strong connectivity) either has update time $\Omega(m^{1-\varepsilon})$ or query time $\Omega(m^{1-\varepsilon})$~\cite{DBLP:conf/focs/AbboudW14} and based on the OMv conjecture any algorithm with polynomial preprocessing time that decides whether $\kappa >0$ either has update time $\Omega(m^{1/2-\varepsilon})$ or query time $\Omega(m^{1-\varepsilon})$
The goal of this paper is to answer these questions. More specifically, we study algorithms for dynamic vertex connectivity both in directed and undirected graphs.

In our analysis we distinguish between two different notions of modeling an adaptive adversary, similarly to \cite{DBLP:conf/stoc/NanongkaiS17, DBLP:conf/stoc/BrandLS23, DBLP:conf/soda/0003ST25, DBLP:conf/nips/BanihashemGGHJM25}.
An \emph{output-adaptive} adversary gets to pick the next update or query based on the algorithm's previous outputs, but not its internal state or random bits;
and a \emph{randomness-adaptive} adversary additionally has full knowledge of the algorithm's internal state, including past random choices, but not future random bits.
In contrast, an \emph{oblivious} adversary has to fix the full update sequence in advance, i.e., independently of the algorithm's outputs and internal state.
We remark that, for problems where each query has a unique correct answer,
any algorithm that works with an oblivious adversary automatically also works with an output-adaptive adversary.

% \iffalse
\begin{table}[htbp]
    \centering
    \renewcommand{\arraystretch}{1.4}
    \begin{tabular}{@{} l p{0.25\textwidth} p{0.55\textwidth} @{}}
        \toprule
        \textbf{Graph} & \textbf{Edge Connectivity} & \textbf{Vertex Connectivity} \tabularnewline
        \midrule
        \textbf{Undirected} 
        & \raggedright $\poly\log(n)$~\cite{DBLP:journals/talg/GoranciHT18} 
        & \raggedright \textbf{$\tO(\poly(\kappa))$ with randomness-adaptive adversary} \tabularnewline
        \addlinespace
        & 
        & \raggedright \textbf{Amortized $\mathbf{\Omega(\kappa^{3-\varepsilon} / m)}$ update} \newline
          \textbf{or $\mathbf{\Omega(\kappa^{2-\varepsilon})}$ query time} \tabularnewline
        \midrule
        \textbf{Directed} 
        & --- 
        & \raggedright \textbf{$\tO(\poly(\kappa))$ with oblivious adversary} \tabularnewline  
        \addlinespace
        & 
        & \raggedright \textbf{Amortized $\mathbf{\Omega(\kappa^{3-\varepsilon} / m)}$ update} \newline
          \textbf{or $\mathbf{\Omega(\kappa^{2-\varepsilon})}$ query time} \tabularnewline
        \addlinespace
        & 
        & \raggedright Worst-case $\Omega(m^{1-\varepsilon})$ update \newline
          or $\Omega(m^{2-\varepsilon})$ query time~\cite{DBLP:conf/focs/AbboudW14,DBLP:conf/stoc/HenzingerKNS15} \tabularnewline
        \bottomrule
    \end{tabular}
    \caption{Running time for insertions-only algorithms; our new results are in bold. The results hold for arbitrary $\kappa$, including $\kappa = \Theta(n)$. All lower bounds are conditional lower bounds.}
    \label{tab:connectivity_metrics}
\end{table}
% \fi
\iffalse
\begin{table}[htbp]
\centering
\label{tab:connectivity_metrics}
\renewcommand{\arraystretch}{1.3}
\begin{tabular}{@{}lll@{}}
\toprule
\textbf{Graph} & \textbf{Edge Connectivity} & \textbf{Vertex Connectivity} \\
\midrule
Undirected       & $\poly\log(n)$~\cite{DBLP:journals/talg/GoranciHT18}           & \textbf{$\tilde O(\poly(\kappa))$ with randomness-adaptive adversary}               \\
       &            & \textbf{amortized $\mathbf{\Omega(\kappa^{3-\varepsilon} / m)}$ update or $\mathbf{\Omega(\kappa^{2-\varepsilon})}$ query time }               \\
\hline
Directed       &  &\textbf{$\tilde O(\poly(\kappa))$ with oblivious adversary}          \\  

      &         & \textbf{amortized $\mathbf{\Omega(\kappa^{3-\varepsilon} / m)}$  update or $\mathbf{\Omega(\kappa^{2-\varepsilon})}$ query time }    \\
     &        & worst-case $\Omega(m^{1-\varepsilon})$  update or $\Omega(m^{2-\varepsilon})$ query time~\cite{DBLP:conf/focs/AbboudW14,DBLP:conf/stoc/HenzingerKNS15}              \\
\bottomrule
\end{tabular}
\caption{Running time for insertions-only algorithms, our new results are in bold. The results hold for arbitrary $\kappa$, including $\kappa = \Theta(n)$. All lower bounds are conditional lower bounds. \mh{feel free to make the table look better}}
\end{table} 
\fi
\smallskip
\textbf{Our contribution.}
In this paper we show the following two results (see also \cref{tab:connectivity_metrics}):

(1) We give an insertions-only algorithm for \textit{vertex connectivity in undirected graphs} with amortized update time \runtimead, where $\kappa$ is the vertex connectivity in the final graph.
It works with a randomness-adaptive adversary and can either return $\kappa$ in constant time or a minimum vertex cut in time linear in its size.
Our algorithm also works for \textit{vertex connectivity in directed graphs} if the adversary is oblivious, or, if the adversary is output-adaptive and the algorithm only has to return the value $\kappa$.
For these types of adversaries, the running time improves to \runtimeob, both on directed and undirected graphs.
Note that the above mentioned lower bounds that are \textit{polynomial in $n$} do not apply as our algorithm achieves a bound of $\tilde O(\poly(\kappa))$ on the \textit{amortized} time per insertion. Our algorithm answers questions (C) and (D) positively and
is the first dynamic algorithm to achieve such a bound.

(2) We give a novel conditional lower bound based on the OMv assumption: There is no insertions-only (or deletions-only) algorithm for minimum vertex cuts on unweighted, undirected graphs that, given polynomial (in $n$ and $m$) preprocessing time, supports edge insertions (or deletions) in \textit{amortized} time $O(\kappa^{3-\varepsilon} / m)$ and queries (that return the value of $\kappa$) in $O(\kappa^{2-\varepsilon})$. There are no assumptions on $\kappa$. For $\kappa \in \Theta(n)$, these bounds are $O(n^{1-\varepsilon})$ for insertions (or deletions), and $O(n^{2-\varepsilon})$ for queries.

Note that this lower bound answers question (A) in the negative and (for graphs with  $\kappa = \Theta(n)$) also answers question (B) in the negative. Said differently, it helps to explain the lack of progress in finding insertions-only algorithms with running time polylogarithmic in $n$ (and \textit{not} polynomial in $\kappa$): No such algorithm can exist. It also shows a dichotomy between computing edge connectivity and vertex connectivity and substantiates the intuition that vertex connectivity is more challenging: As mentioned above, there exists an insertions-only algorithm for edge connectivity with polylogarithmic in $n$ (and independent of the edge connectivity value) running time, while no such algorithm can exist for computing vertex connectivity.

The remainder of the paper is organized as follows:
We introduce all necessary definitions and asymptotic notation in \cref{sec:prelim}, followed by a comprehensive technical overview of our insertions-only algorithm and its distinct phases in \cref{sec:overview}.
In \cref{sec:lower_bound}, we formally establish our conditional lower bounds based on the OMv conjecture.
\Cref{sec:analysis} contains the rigorous running time and correctness analysis of our algorithm.
The subsequent sections are dedicated to the proofs and analysis of the subroutines used by the algorithm: \Cref{sec:sampling} details our randomized sampling procedures, \cref{sec:dyn_rooted_vc} presents the incremental data structure that is the core of our algorithm, and \cref{sec:aux_datastructs} concludes with some auxiliary data structures from the literature that we use to handle special cases, such as the graph not being strongly connected.

\section{Preliminaries}\label{sec:prelim}

Throughout this paper, all logarithms are base-2 unless otherwise specified.
\emph{With high probability} (w.h.p.) means with probability at least $1 - n^{-c}$ for some constant $c \ge 1$.
Unless explicitly stated otherwise, we always consider \emph{simple} graphs, that is, graphs without self-loops or parallel edges.

\paragraph{Graphs and Neighborhoods.}
Let $G = (V, E)$ be a directed or undirected graph with $n = \abs{V}$ vertices and $m = \abs{E}$ edges.
For a subset of vertices $S \subseteq V$, we let $G \setminus S$ denote the subgraph induced by $V \setminus S$.
For two disjoint vertex sets $S, T \subseteq V$, we define $E_G(S, T) = \set{(u, v) \in E \mid u \in S, v \in T}$ as the set of directed edges from $S$ to $T$.
In undirected graphs, edges are unordered, and $E_G(S, T)$ denotes the set of edges with one endpoint in $S$ and the other in $T$.
We drop the subscript $G$ when the graph is clear from the context.

For an edge $e = (u, v)$, denote by $\reversed{e} = (v, u)$ the \emph{reversed edge}.
For a graph $G = (V, E)$, we denote by $\reversed{G} = (V, \{\reversed{e}: e \in E\})$ the \emph{reversed graph}, with the same vertex set but all edges reversed.

For a vertex $v \in V$, we denote its set of \emph{outgoing} and \emph{incoming edges} as $\incout{v} = E_G(\set{v}, V \setminus \set{v})$ and $\incin{v} = E_G(V \setminus \set{v}, \set{v})$, respectively.
The total set of \emph{incident edges} is $\inc{v} = \incout{v} \cup \incin{v}$.
The corresponding \emph{degrees} are $\dgout{v} = \abs{\incout{v}}$, $\dgin{v} = \abs{\incin{v}}$, and $\dg{v} = \dgout{v} + \dgin{v}$.
The \emph{out-neighbors} and \emph{in-neighbors} of $v$ are $\nbrout{v} = \set{u \mid (v, u) \in E}$ and $\nbrin{v} = \set{u \mid (u, v) \in E}$.
The \emph{minimum} and \emph{maximum out-degree} of $G$ are $\mindegout = \min_{v \in V} \dgout{v}$ and $\maxdegout = \max_{v \in V} \dgout{v}$, and analogously $\mindegin, \maxdegin$ for in-degrees and $\mindeg, \maxdeg$ for the total degree.
In undirected graphs, we drop the orientation superscripts.

\paragraph{Volumes and Boundaries.}
For a vertex subset $A \subseteq V$, we define its \emph{boundary edges} as the edges leaving $A$, denoted by $\bndout{A} = E_G(A, V \setminus A)$, and the edges entering $A$, denoted by $\bndin{A} = E_G(V \setminus A, A)$.
The total boundary is $\bnd{A} = \bndout{A} \cup \bndin{A}$.
This uses the same notation as the incident edges of a vertex introduced above, where $\inc{v} = \bnd{\set{v}}$.
Similarly, we overload the neighborhood notation to a set of vertices, defining its out- and in-neighbors as the endpoints of the boundary edges outside $A$, that is $\nbrout{A} = \set{v \mid (u, v) \in \bndout{A}}$ and $\nbrin{A} = \set{u \mid (u, v) \in \bndin{A}}$, with $\nbr{A} = \nbrout{A} \cup \nbrin{A}$.
We define the \emph{out-volume} and \emph{in-volume} of the set $A$ as the sum of the respective degrees of its vertices, meaning $\volout{A} = \sum_{v \in A} \dgout{v}$ and $\volin{A} = \sum_{v \in A} \dgin{v}$.
The total volume is $\vol{A} = \volout{A} + \volin{A}$.

\paragraph{Vertex Connectivity and Cuts.}
A directed graph $G$ is \emph{strongly connected} if there is a directed path between every pair of vertices.
A \emph{vertex cut} is a set $S \subseteq V(G)$ such that $G \setminus S$ is not strongly connected or consists of a single vertex.
The \emph{size} of the vertex cut is $\abs{S}$.
An ordered partition of the vertex set $V$ into three disjoint sets $(L, S, R)$ such that $L$ and $R$ are non-empty and there are no edges from $L$ to $R$,
meaning $E_G(L, R) = \emptyset$, is called a \emph{separation triple}.
If $(L, S, R)$ is a separation triple, then $S$ is always a vertex cut.
Conversely, every vertex cut $S$ with $|S| \ne n-1$ lies in some separation triple $(L, S, R)$:
for example, we may take $L$ to be a strongly connected component of $G \setminus S$ with no outgoing edges.

If $\abs{S} = \cutsize$, we refer to $(L, S, R)$ as a $\cutsize$-separation triple.
For two specific vertices $s$ and $t$, an \emph{$(s, t)$-separation triple} is a separation triple $(L, S, R)$ where $s \in L$ and $t \in R$.
The \emph{vertex connectivity} of $G$, denoted by $\cutsize(G)$, is the minimum size of a vertex cut in $G$.
A graph is \emph{$k$-connected} if $\cutsize(G) \ge k$.

\paragraph{Dynamic Setting.}
In the \emph{incremental} dynamic setting, the graph starts empty (or with an initial edge set) and undergoes a sequence of edge insertions.
We generally let $m$ denote the total number of edges in the graph after all insertions have been processed.

% BUILD_COMMAND make main.pdf -B
\section{Overview}\label{sec:overview}

We give an overview of our techniques.
As we use different subroutines depending on whether one side of the cut has small volume or not, we need the following definition:
\begin{definition}
    Let $G$ be a graph. A \emph{\volcut{\cutsize}{\le}{\volest}} is a separation triple $(L, S, R)$ with $|S| = \cutsize$ and $\volout{L} \le \volest$. 
    We similarly define a \emph{\volcut{\cutsize}{<}{\volest}}, \emph{\volcut{\cutsize}{\ge}{\volest}}, etc.
    For a vertex $s \in V(G)$, an \emph{\volvertcut{\cutsize}{\le}{\volest}{s}} is a \volcut{\cutsize}{\le}{\volest} with $s \in L$.
    We say that the graph $G$ is \emph{$(\kappa, \nu)$-ready} if $G$ is $\kappa$-connected and has no \emph{\volcut{\cutsize}{<}{\volest}}.
\end{definition}

For ease of presentation, suppose that the initial graph $G$ is strongly connected.
Our algorithm operates in rounds.
At the start of each round, we assume that $G$ is $\cutsize$-connected,
and during the round, we process edge insertions until $G$ has no vertex cuts of size $\cutsize$.
Then, we increment $\cutsize$ and start the next round.
We may assume that $\cutsize$ stays below $\frac{n}{2}$:
Indeed, once $\cutsize \ge \frac{n}{2}$, we can afford to run the algorithm of \cite{DBLP:conf/soda/ChuzhoyMT26,DBLP:conf/sosa/FischerJMY26} after every edge insertion, since $O(n^{2+o(1)}) \subseteq O(\kappa^3)$.

Each round consists of $O(\log m)$ phases, indexed by a parameter $\volest = 1, 2, 4, 8, \ldots, m$.
At the start of each phase, we require both $G$ and the reversed graph $\reversed{G}$ to be $(\kappa, \nu)$-ready.
The phase first processes edge insertions until there are no \volcut{\cutsize}{<}{2 \nu}s in $G$,
at which point $G$ is $(\kappa, 2 \nu)$-ready.
We do the same with $\reversed{G}$.
Thus, at the end of the phase, both $G$ and $\reversed{G}$ are $(\kappa, 2 \nu)$-ready.
Hence, every remaining $\cutsize$-separation triple $(L, S, R)$ has $\volout{L} \ge 2 \nu$ and $\volin{R} \ge 2 \nu$.

To implement a phase, we first sample a set $T$ of terminals
    with the goal that $L \cap T \neq \emptyset$ for any $\kappa$-separation triple $(L, S, R)$ with large enough volume $\volout{L} \geq \volest$.
For this, we adapt the sampling procedure of \cite{Mosenzon26} that samples each vertex $u$ with probability roughly $2 \dgout{u} / \nu$.

\begin{restatable}[\textsc{SampleRoots}, modified from {\cite[Lemma 2.5]{Mosenzon26}}]{theorem}{thmSampling} \label{thm:sampling}
    There is a randomized algorithm that, given a directed graph $G$ on $n$ vertices and $m$ edges and a volume parameter $\volest$, outputs a set $T \subseteq V(G)$ of size at most $\frac{2 m}{\volest}$, such that for every set $A \subseteq V(G)$ with $\volout{A} \geq \volest$, we have $\Pr[A \cap T = \emptyset] \leq \frac{1}{2}$.
    The algorithm runs in $O(m)$ time.
\end{restatable}

Then, for each vertex $s \in T$, we build an incremental data structure that reports an $s$-rooted vertex cut of size $\cutsize$ whenever an \volvertcut{\cutsize}{\le}{2\volest}{s} exists, and otherwise certifies that no such triple exists.
Since we instantiate one such structure per vertex $s \in T$, its initialization must run in $O(\volest \poly(\cutsize) \poly \log n)$ time rather than $\Omega(m)$.
\cite{DBLP:conf/soda/ForsterNYSY20} give a randomized subroutine $\textsc{LocalVC}(G, s, \volest, \cutsize)$ that solves this problem in the static setting in $O(\volest \cutsize^2)$ time.
This subroutine grows a search tree from $s$, so we can make it incremental by maintaining this tree under edge insertions via single-source incremental reachability, similar to how \cite{GuptaK21} adapt the Ford--Fulkerson max-flow algorithm to the incremental setting; we pause this growth process whenever it finds a cut and resume it as new edges arrive.
We thus obtain:

\begin{restatable}[\textsc{IncrementalRootedVC}]{theorem}{theoLocalVC} \label{thm:local_vc}
    Given a graph $G$ with $m$ edges and two positive integers $\cutsize, \volest$ with $\kappa \le \frac{n}{2}$, $\volest \leq \frac{m}{24 \cutsize}$, and $\mindegout[G] > \kappa$, as well as a vertex $s \in V(G)$,
    there is a data structure that maintains $G$ under $\ell$ edge insertions supplied by a randomness-adaptive adversary $\cA$.
    After every insertion, the data structure reports either the vertex cut $S$ of some \volvertcut{\cutsize}{\le}{6 \volest \cutsize}{s} or the symbol $\bot$.
    Let $(L, S, R)$ be one particular \volvertcut{\cutsize}{\le}{\volest}{s} in $G$.
    With probability $\ge \frac{1}{2}$, the data structure only outputs $\bot$ once $(L, S, R)$ is no longer a valid \volvertcut{\cutsize}{\le}{\volest}{s}.
    The total running time is $\cO(\volest  \cutsize^2 + \ell)$.
\end{restatable}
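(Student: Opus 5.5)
We make \textsc{LocalVC} of \cite{DBLP:conf/soda/ForsterNYSY20} incremental. Recall the static routine: on the split graph (each vertex $v$ replaced by $v_{\mathrm{in}}\to v_{\mathrm{out}}$ with unit capacity, original edges of infinite capacity), it performs $\cutsize+1$ iterations. Each iteration grows a DFS from $s_{\mathrm{out}}$ in the residual graph and explores up to $\Theta(\volest\cutsize)$ edges. If the search gets stuck before exhausting this budget, the reached set $L'$ has $\volout{L'}\le 6\volest\cutsize$ and out-neighborhood $S'$ of size at most $\cutsize$. Otherwise the routine picks a uniformly random explored edge, reverses the tree path from $s$ to its head (one augmenting step toward a ``virtual sink''), and repeats.

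The analysis rests on a witness argument. If $(L,S,R)$ is an $s$-rooted triple with $\volout{L}\le\volest$, then each sampled edge lands outside $L$ with probability at least $1-\volest/(6\volest\cutsize)$. Hence with constant probability all $\cutsize$ augmentations have heads outside $L$, or more precisely their paths pass through $S$. Once this happens, $S$ is saturated and the next search gets stuck inside $L\cup S$. The condition $\mindegout>\cutsize$ guarantees that the reported set is nontrivial, i.e.\ that $R\neq\emptyset$ using $\cutsize\le n/2$. The condition $\volest\le m/(24\cutsize)$ guarantees that the budget is below $m$ and is not exhausted trivially.

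For the incremental version, we keep the current residual graph and a single-source incremental reachability (DFS) tree from $s_{\mathrm{out}}$ in the style of \cite{GuptaK21}. This tree supports edge insertions in total time linear in the number of edges ever explored, since an inserted edge whose tail is reached triggers continued exploration from its head.

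The state is the current iteration index $i\le\cutsize+1$ and the explored edge count. While the search is stuck with fewer than the budget explored, we output the cut $S'=N^+(L')$, which is a valid $s$-rooted $(\cutsize,\le 6\volest\cutsize)$-triple. Only when an insertion unsticks the search do we continue. If the budget is reached, we sample an edge, augment, rebuild the search from scratch for the new residual graph, and increment $i$. If $i$ exceeds $\cutsize+1$, we output $\bot$ permanently. Each iteration costs $O(\volest\cutsize)$, so over $\cutsize+1$ iterations plus $O(1)$ per insertion we get $O(\volest\cutsize^2+\ell)$.

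The main obstacle is correctness against a randomness-adaptive adversary. The fixed triple $(L,S,R)$ is chosen by the adversary and the graph changes after samples are revealed. We handle this by deferring each random choice: the edge is sampled only at the moment the budget fills, using fresh randomness. Conditioned on the past, the explored edge set is then determined, and for the particular triple (which remains valid, else we are done), at most $\volest$ of the explored out-edges lie in $L$. Insertions only add edges, so a triple valid now was valid at every earlier sampling time with $\volout{L}$ no larger. We then argue that each ``good'' event (sample outside $L$) has conditional probability at least $1-1/(6\cutsize)$ regardless of the adversary's strategy. A union bound over the $\cutsize+1$ samples gives probability at least $1/2$ that no bad sample occurs.

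It remains to check that good samples imply the search gets stuck inside $L$ whenever $(L,S,R)$ is valid, so $\bot$ is never output while it exists. Each augmenting path from $s$ leaving $L$ must use a vertex of $S$, and $|S|=\cutsize$ paths exhaust $S$'s capacity. The care needed is that paths may re-enter $L$ through reversed residual edges, which we handle via the standard flow-decomposition argument of \cite{DBLP:conf/soda/ForsterNYSY20}.
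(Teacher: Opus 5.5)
Your plan follows the paper's route: incremental reachability on the split graph, pausing while the search is stuck, sampling with fresh randomness only when the budget fills, and a union bound over the first $\kappa$ samples conditioned on the triple still being valid. That conditioning is handled correctly. The gap is that the augmentation rule as you state it breaks the key probability bound.

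You reverse the tree path to the \emph{head} of the sampled edge and require that ``all $\kappa$ augmentations have heads outside $L$''. But your bound counts explored edges with \emph{tail} in $L$, and tail outside does not imply head outside. The search also explores edges entering $L'=\{v^\pm : v\in L\}\cup\{v^- : v\in S\}$ from outside, namely the split-graph edges $(u^+,v^-)$ with $u\in S\cup R$ and $v\in L\cup S$. How many of these there are depends on in-degrees, which nothing controls. If the vertices of $R$ mostly point back into $L\cup S$, almost every explored edge has its head in $L'$. Reversing a path that ends in $L'$ leaves $|\partial^+_H(L')|$ unchanged (\cref{lem:path_reversal}), so almost every iteration is wasted and the $\ge\frac12$ guarantee fails.

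The fix is the paper's rule: reverse the path to the tail $u$ of the sampled edge $(u,v)$, which is always a tree vertex. Explored edges with tail in $L'$ number at most $\volout[H]{L'}\le\volout[G']{L'}$, and a path ending outside $L'$ lowers $|\partial^+_H(L')|$ by exactly one. This volume monotonicity requires all split-graph edges to be unit capacity and literally reversed. With infinite-capacity original edges, an augmentation keeps the forward residual edge and adds a backward one, so the residual out-volume of $L'$ grows and your count of bad samples no longer holds. In the unit model, cuts through original edges are instead handled by \cref{lem:ec_to_vc}, via $|N^+_G(L)|\le|\partial^+_{G'}(L')|$.

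Some smaller points:
\begin{itemize}
\item In the split graph, $\volout[G']{L'}$ can be $\nu+|L|+\kappa-1$. It is bounded by $2\nu$ only because $\delta^+(G)>\kappa$ (\cref{lem:vc_to_ec}). So the budget must be calibrated against $2\nu$, not $\nu$: the paper runs the edge-cut structure with $\nu'=2\nu$ and budget $2\nu'\kappa$, giving failure probability $\frac{1}{2\kappa}$ per sample over $\kappa$ samples.
\item $R\ne\emptyset$ does not follow from $\delta^+(G)>\kappa$ and $\kappa\le n/2$ alone. It also needs $6\nu\kappa\le m'/4$, via the two-case counting argument in \cref{lem:ec_to_vc}.
\item The bound $O(\nu\kappa^2+\ell)$ needs implementation care you do not address. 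The split graph must be available without $O(m)$ construction; the paper uses one shared copy and undoes all reversals before outputting $\bot$.
\item The separator must be reported in $O(1)$ per insertion, as a pointer to a maintained list whose updates are deferred until an output. Eager updates are too expensive because, before the search gets stuck, $L$ may contain vertices whose out-degree far exceeds the budget.
\end{itemize}
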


\cref{thm:local_vc} requires $\mindegout[G] > \kappa$, that is, that the graph contains no trivial vertex cuts.
We handle these trivial cuts by maintaining the in- and out-degree of each vertex, plus a linked list of all vertices with in- or out-degree equal to $\cutsize$ at the start of each round.
We do this until there are no trivial vertex cuts.
Updating the degrees or the linked list takes constant time per edge insertion, so this additional step takes $O(n)$ time per round, or $O(\cutsize n)$ total time across all rounds.

\cref{thm:sampling,thm:local_vc} are Monte-Carlo randomized and only have constant success probability to detect any particular \volvertcut{\cutsize}{<}{2 \volest}{s}.
If the adversary is oblivious, then we can consider the last remaining \volvertcut{\cutsize}{<}{2 \volest}{s} $(L, S, R)$
right before the moment when no such triple exists, and show that our algorithm detects this separation triple.
Thus, it suffices to perform $O(\log n)$ repetitions to boost the success probability of \cref{thm:sampling,thm:local_vc} to $1 - n^{-5}$.
This works both on directed and undirected graphs.

If the adversary is randomness-adaptive, then we cannot argue about a last remaining \volvertcut{\cutsize}{<}{2 \volest}{s}.
For undirected graphs, we show an upper bound of $O(n^{\kappa+1})$ for the number of \volvertcut{\cutsize}{<}{2 \volest}{s}s. Thus, it suffices to perform $O(\kappa \log n)$ repetitions to boost the success probability to $1 - n^{-\Omega(\kappa)}$, and union bound over all such triples. This approach does not work on directed graphs: while there are $O(n^{\kappa})$ choices for $S$, there might be exponentially many choices for $L$. This highlights a subtle difference between vertex cuts and separation triples.

\begin{restatable}{lemma}{lemmTripleCount} \label{lemm:triple_count}
    If $G$ is an \emph{undirected} graph that is $(\kappa, \nu)$-ready, then there are at most $n^{\kappa+1}$ triples $(L, S, R)$ that are \volcut{\cutsize}{<}{2 \nu}s.
\end{restatable}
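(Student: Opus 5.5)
The plan is to count triples by first choosing the separator $S$ and then showing that, once $S$ is fixed, the side $L$ has very few possibilities. There are at most $\binom{n}{\kappa} \le n^{\kappa}$ choices for a set $S$ with $|S| = \kappa$. So it suffices to show that, for each fixed $S$, at most $n$ sets $L$ complete $S$ to a \volcut{\cutsize}{<}{2 \nu}. Multiplying the two counts then gives the bound $n^{\kappa+1}$. Note that $R = V \setminus (L \cup S)$ is determined by $L$ and $S$.

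Fix such a triple $(L, S, R)$. Since $G$ is undirected and $E(L, R) = \emptyset$, every connected component of $G \setminus S$ lies entirely inside $L$ or entirely inside $R$. Hence $L$ is a nonempty union of components of $G \setminus S$.

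The key claim is that $L$ consists of exactly one component of $G \setminus S$. Suppose instead that $L$ contains two distinct components $C_1, C_2$. Then $\volout{C_1} + \volout{C_2} \le \volout{L} < 2\nu$, so one of them, say $C_1$, has $\volout{C_1} < \nu$. Now consider the triple $(C_1, S, V \setminus (S \cup C_1))$. Its third part is nonempty because it contains $C_2$. There are no edges from $C_1$ to $V \setminus (S \cup C_1)$, because $C_1$ is a component of $G \setminus S$. Finally, $|S| = \kappa$. So this is a \volcut{\cutsize}{<}{\volest} in $G$, which contradicts the assumption that $G$ is $(\kappa,\nu)$-ready. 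Only the "no \volcut{\cutsize}{<}{\volest}" part of readiness is used here; $\kappa$-connectivity is not needed for this step.

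It follows that $L$ is a single component of $G \setminus S$. There are at most $n$ such components, so there are at most $n$ choices of $L$ per $S$, which gives the claimed bound.

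The main point requiring care is exactly where undirectedness enters. In a directed graph, $L$ need not be a union of strongly connected components of $G \setminus S$ in a way that can be split off. A sub-piece of $L$ may have edges into $R$-side pieces, so the splitting step fails; this matches the paper's remark that directed graphs can have exponentially many choices of $L$.
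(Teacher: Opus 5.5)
Your proposal is correct and follows the paper's proof: you bound the choices of $S$ by $n^{\kappa}$, observe that $L$ is a union of components of $G \setminus S$, and use readiness to force $L$ to be a single component, which leaves at most $n$ choices of $L$. The only difference is cosmetic. You argue by contradiction that one of two components would have volume below $\nu$, whereas the paper directly notes that every component inside $L$ has volume at least $\nu$.
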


There is one caveat: \cref{thm:local_vc} only works when the volume is not too large, namely $\nu \leq \frac{m}{24 \cutsize}$.
However, if $\volest \ge \frac{m}{24 \cutsize}$, then we are guaranteed by \cref{thm:sampling} that the size of $T$ will be small, i.e., $|T| \in O(\cutsize)$.
    Recall that $T$ was sampled via \cref{thm:sampling} with the goal that $L \cap T \ne \emptyset$ for any $\kappa$-separation triple $(L, S, R)$ with $\volout{L} \ge \nu$.
    We also apply \cref{thm:sampling} to the reversed graph $\reversed{G}$ to get a set $\reversed{T}$ with the goal that $R \cap \reversed{T} \ne \emptyset$ for any $\kappa$-separation triple $(L, S, R)$ with $\volin{R} \ge \nu$.
Then, we could afford to use an incremental exact max-flow data structure to maintain an $(s, t)$-vertex cut of size $\cutsize$ for all pairs $(s, t) \in T \times \reversed{T}$.
There are $|T|\cdot |\reversed{T}| = \Theta(\cutsize^2)$ such pairs $(s, t)$, so we would need $\Theta(\cutsize^2)$ incremental exact max-flow data structures.
We can do slightly better and reduce the number of pairs to $\Theta(\cutsize)$, via a different sampling approach that just samples $O(1)$ vertices uniformly at random with repetition.

\begin{restatable}[\textsc{UnbiasedSample}]{lemma}{lemmUnbiased} \label{lemm:unbiased_sampling}
    Let $G$ be a directed graph with $n$ vertices and $m$ edges.
    Let $u$ be a single vertex sampled uniformly at random.
    Then, for any separation triple $(L, S, R)$ with $|S| = \cutsize \le \frac{n}{2}$, with probability at least $\frac{1}{2}$, we have $u \in (L \cup R)$.
\end{restatable}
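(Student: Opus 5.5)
This is a direct counting argument. Since $(L, S, R)$ partitions $V$ into three disjoint sets, we have $|L \cup R| = n - |S| = n - \cutsize$. A vertex $u$ sampled uniformly at random from $V$ therefore satisfies
\[
\Pr[u \in L \cup R] = \frac{n - \cutsize}{n} = 1 - \frac{\cutsize}{n}.
\]
The hypothesis $\cutsize \le \frac{n}{2}$ gives $\frac{\cutsize}{n} \le \frac{1}{2}$, so this probability is at least $\frac{1}{2}$.

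The steps, in order, are as follows. First, use that the three parts are disjoint and cover $V$. Second, compute the probability of landing outside $S$. Third, plug in the assumption $\cutsize \le n/2$.

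Nothing in the argument uses the edges of $G$, its directedness, the condition $E_G(L,R) = \emptyset$, or the nonemptiness of $L$ and $R$. It also does not matter whether the triple is fixed before sampling: the bound holds for each fixed triple, which is what the statement asks for. The union over triples, or the use of repetitions, is handled elsewhere in the paper.

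There is no real obstacle here. The only point needing care is that the statement is per-triple, so no union bound is required and the bound follows immediately.
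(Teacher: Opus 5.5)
Your proof is correct and is essentially the paper's argument: both use that $L$, $S$, $R$ partition $V$, so $\Pr[u \in L \cup R] = (n - \kappa)/n \ge \frac{1}{2}$ by $\kappa \le \frac{n}{2}$. One small caution: the bound needs the triple to be fixed independently of $u$, since a triple chosen after seeing $u$ could place $u$ in $S$. So read your remark that it ``does not matter whether the triple is fixed before sampling'' only in the per-fixed-triple sense you go on to state.
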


In addition to $T$ and $\reversed{T}$, we sample a vertex $u$ via \cref{lemm:unbiased_sampling}.
Let $(L, S, R)$ be any $\kappa$-separation triple with $|S| = \kappa \le \frac{n}{2}$, $\volout{L} \ge \nu$, and $\volin{R} \ge \nu$.
With constant probability, the vertex $u$ lies in $L \cup R$. 
If $u$ lies in $L$, then by \cref{thm:sampling}, with constant probability, $\reversed{T}$ contains a vertex in $R$. 
Similarly, if $u$ lies in $R$, then by \cref{thm:sampling}, with constant probability, $T$ contains a vertex in $L$. 
Thus, we only have to consider the pairs $(s, t) \in T \times \{u\}\, \cup\, \{u\} \times \reversed{T}$.
For each of these $O(\kappa)$ pairs, we use the incremental exact max-flow data structure of~\cite{DBLP:journals/jal/Henzinger97,GuptaK21} to report an $(s, t)$-vertex cut of size $\cutsize$ until no such cut exists.

\begin{restatable}[\textsc{IncrementalMaxFlowVC}~\cite{DBLP:journals/jal/Henzinger97,GuptaK21}]{theorem}{thmMaxFlowVC} \label{thm:maxflow_vc}
    Given a graph $G$ with $m$ edges and an integer $\cutsize$ such that $G$ is $\cutsize$-connected, as well as two vertices $s, t \in V(G)$, there is a data structure that maintains $G$ under $\ell$ edge insertions and reports an $(s, t)$-vertex cut of size $\cutsize$ after every insertion, or the fact that no such cut exists.
    The total running time is $O(\cutsize (m + \ell))$.
\end{restatable}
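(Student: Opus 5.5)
The theorem is cited from \cite{DBLP:journals/jal/Henzinger97,GuptaK21}. I would prove it by the standard reduction from vertex cuts to edge cuts, combined with an incremental Ford--Fulkerson that stops after $\cutsize+1$ augmentations.

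\textbf{Split graph.} Build the graph $H$: each vertex $v$ becomes an edge $v_{\mathrm{in}} \to v_{\mathrm{out}}$ of capacity $1$ (infinite capacity for $v \in \{s,t\}$), and each edge $(u,v)$ of $G$ becomes an arc $u_{\mathrm{out}} \to v_{\mathrm{in}}$ of infinite capacity (for undirected $G$, use both directions). By Menger's theorem, the maximum $s_{\mathrm{out}}$--$t_{\mathrm{in}}$ flow in $H$ equals the minimum size of an $(s,t)$-vertex cut in $G$, provided $(s,t) \notin E$. If $(s,t)\in E$, no $(s,t)$-vertex cut exists and the structure reports that permanently, since edges are never deleted.

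\textbf{Initialization.} Since $G$ is $\cutsize$-connected, the $(s,t)$ max flow is at least $\cutsize$. Run $\cutsize$ augmenting-path searches, each a BFS/DFS in $O(m+n)$ time, to get a flow $f$ of value $\cutsize$. Isolated vertices are irrelevant, and if $n > m+1$ the graph is disconnected, so $\cutsize = 0$ and the statement is trivial. Hence this costs $O(\cutsize m)$. Then try one more augmentation: maintain, by incremental single-source reachability, the set $X$ of vertices reachable from $s_{\mathrm{out}}$ in the residual graph $H_f$.

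\textbf{Maintenance.} As long as $t_{\mathrm{in}} \notin X$, the flow has value exactly $\cutsize$ and $X$ is a minimum cut. The saturated split edges leaving $X$ give $\cutsize$ vertices $S$ forming an $(s,t)$-vertex cut, with $L$ the $G$-vertices whose in-copy lies in $X$. When an edge is inserted, its residual arc is added to $H_f$. Residual graphs only gain arcs under insertions as long as $f$ is unchanged, so $X$ only grows. An incremental graph search (every vertex and arc is scanned at most once) keeps $X$ up to date in $O(m+\ell)$ total time per fixed flow. If $t_{\mathrm{in}}$ enters $X$, we augment once more to value $\cutsize+1$ and report permanently that no $(s,t)$-cut of size $\cutsize$ exists, since the max flow never decreases under insertions. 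This event happens at most once.

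\textbf{Reporting.} To output the cut after every insertion without rescanning it, keep a counter of saturated split edges crossing out of $X$. The set $S$ changes only when $X$ grows, and each vertex enters $X$ once, so the updates are amortized into the search. We can also simply report $S$ implicitly.

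\textbf{Total cost.} This gives $O(\cutsize m)$ for the initial flows plus $O(m+\ell)$ for the incremental search, which is within $O(\cutsize(m+\ell))$. The one step needing care is monotonicity of $X$: the flow is modified only at the terminal augmentation, so between augmentations the residual graph changes by arc additions alone. This is the same argument \cite{GuptaK21} use for incremental Ford--Fulkerson.
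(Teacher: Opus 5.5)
Your proposal is correct and is essentially the paper's proof: the same vertex-splitting reduction to an $(s^+, t^-)$ edge-cut problem, maintained with the incremental Ford--Fulkerson structure of \cite{GuptaK21} and stopped once the flow exceeds $\kappa$, giving $O(\kappa m + \ell)$ plus the cost of output. The differences are that you open up the \cite{GuptaK21} black box (a residual-reachability set that only grows between augmentations), handle the case $(s,t) \in E$ explicitly, and report $S$ implicitly instead of paying the paper's $O(\kappa \ell)$ output charge; one small slip is that the source side $L$ should be the vertices whose \emph{out}-copy lies in $X$, since the in-copies in $X$ also include $S$, but this does not affect the reported cut.
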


\begin{algorithm}[t]
\algrenewcommand{\algorithmicloop}{\textbf{repeat}}
    \caption{\textsc{IncrementalVC}} \label{alg:main}
    \begin{algorithmic}[1]
        \State Process edge insertions until $G$ is strongly connected. \label{line:reachability}
        \For{$\cutsize = 1, 2, 3, \dots, \floor{\frac n 2}$} \label{line:kappa_loop}
            \State $\boosts \gets \begin{cases} \ceil{\kappa \cdot (33 \log n + 288)} &\text{if the adversary is randomness-adaptive}\\ \ceil{27 \log n} &\text{if the adversary is oblivious}  \end{cases}$
            \For{each vertex $v \in V(G)$} \label{line:deg_check}
                \While{$\dgout{v} = \cutsize$ or $\dgin{v} = \cutsize$}
                    \State Report the $\cutsize$-vertex cut $\nbrout{v}$ or $\nbrin{v}$, respectively, and process edge insertions. \label{line:deg_report}
                \EndWhile
            \EndFor
            \For{$\volest = 1, 2, 4, 8, \dots, 2^{\floor{\log \frac{m}{48 \cutsize}}}$} \label{line:nu_loop}
                \Loop{} for both $G$ and the reversed graph $\reversed{G}$
                    \Loop{} $\boosts$ times \label{line:first_boost}
                        \State Run $\textsc{SampleRoots}(\cutsize, \volest)$ to obtain a set of terminals $T$. \label{line:loop_start}
                        \ForAll{$u \in T$} \label{line:terminal_loop}
                            \State Initialize the data structure for $\textsc{IncrementalRootedVC}(u, \cutsize, 2 \volest)$.
                            \While{$\textsc{IncrementalRootedVC}(u, \cutsize, 2 \volest)$ reports a $\cutsize$-vertex cut} \label{line:local_vc}
                                \State Process edge insertions with this data structure. \label{line:first_process} \label{line:loop_end}
                            \EndWhile
                        \EndFor
                    \EndLoop
                \EndLoop
            \EndFor
            \For{$i = 1, \dots, \boosts$} \label{line:middle_boost}
                \State Run $\textsc{SampleRoots}_{G}(\cutsize, \frac{m}{48 \cutsize})$ to obtain a set of terminals $T_i$.
                \State Run $\textsc{SampleRoots}_{\reversed{G}}(\cutsize, \frac{m}{48 \cutsize})$ to obtain a set of terminals $\reversed{T_i}$.
                \State Run $\textsc{UnbiasedSample}()$ to obtain a terminal $u_i$. \label{line:sampling_end}
            \EndFor
            \For{$i = 1, \dots, \boosts$} \label{line:boost_flows}
                \ForAll{$(s, t) \in T_i \times \{u_i\}$} \label{line:first_flow_loop}
                    \State Initialize the data structure for $\textsc{IncrementalMaxFlowVC}_{G}(s, t, \cutsize)$. \label{line:first_flow_start}
                    \While{$\textsc{IncrementalMaxFlowVC}_{G}(s, t, \cutsize)$ reports a $\cutsize$-vertex cut} \label{line:first_flow}
                        \State Process edge insertions with this data structure. \label{line:second_process} \label{line:first_flow_end}
                    \EndWhile
                \EndFor
                \ForAll{$(s, t) \in \reversed{T_i} \times \{u_i\}$} \label{line:second_flow_loop}
                    \State Initialize the data structure for $\textsc{IncrementalMaxFlowVC}_{\reversed{G}}(s, t, \cutsize)$. \label{line:second_flow_start}
                    \While{$\textsc{IncrementalMaxFlowVC}_{\reversed{G}}(s, t, \cutsize)$ reports a $\cutsize$-vertex cut} \label{line:second_flow}
                        \State Process edge insertions with this data structure. \label{line:third_process} \label{line:second_flow_end}
                    \EndWhile
                \EndFor
            \EndFor
        \EndFor
    \end{algorithmic}
\end{algorithm}

Finally, let us address the case where the initial graph $G$ is not strongly connected.
In this case, we pick an arbitrary vertex $u$ and use a simple folklore algorithm (see, e.g., \cite{italiano1986})
to maintain the sets of vertices reachable from $u$ in $G$ and $\reversed{G}$, respectively,
with total running time $O(m)$.
The graph $G$ is strongly connected as soon as every vertex is reachable from $u$ in both $G$ and $\reversed{G}$.
For completeness, we state this result in the following lemma, which we prove in \cref{sec:aux_datastructs}.
We outline the complete algorithm in \cref{alg:main}.
\begin{restatable}[Incremental 1-connectivity]{lemma}{lemmIncConn} \label{lemm:incReachability}
    Given a graph $G$ with $n$ vertices, there is a data structure that maintains $G$ under edge insertions and, after every insertion, reports whether the graph is strongly connected or not.
    The total running time is $O(m)$, where $m$ is the number of edges in $G$ after all insertions.
\end{restatable}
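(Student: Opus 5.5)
The proof is the folklore incremental single-source reachability algorithm, run once on $G$ and once on $\reversed{G}$. Fix an arbitrary vertex $u \in V(G)$. $G$ is strongly connected iff every vertex is reachable from $u$ in $G$ and every vertex is reachable from $u$ in $\reversed{G}$, that is, every vertex can reach $u$ in $G$. So it suffices to maintain, under edge insertions, the set $R^+$ of vertices reachable from $u$ in $G$ and the set $R^-$ of vertices reachable from $u$ in $\reversed{G}$, together with the counters $|R^+|$ and $|R^-|$. After each insertion we answer ``strongly connected'' exactly when $|R^+| = |R^-| = n$, which takes $O(1)$ time.

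\textbf{Data structure for $R^+$.}
\begin{itemize}
\item Store a boolean mark per vertex and the adjacency lists (out-lists for $R^+$, in-lists for $R^-$).
\item Initially, run a graph search from $u$ over the initial edges and mark everything it reaches.
\item When an edge $(x,y)$ is inserted, append it to the adjacency list of $x$. If $x$ is marked and $y$ is not, start a search from $y$ that marks newly reached vertices and only explores edges out of newly marked vertices. It never continues past an already marked vertex.
\item In all other cases the insertion costs $O(1)$.
\end{itemize}
The same procedure runs on $\reversed{G}$ for $R^-$.

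\textbf{Correctness.} The invariant is that the marked set equals the set of vertices reachable from $u$ in the current graph. An insertion of $(x,y)$ can only enlarge reachability if $x$ is already reachable. In that case the newly reachable vertices are exactly those reachable from $y$ via unmarked vertices, because any path leaving the old reachable set must pass through the new edge. The search from $y$ explores exactly this region: every out-edge of a newly marked vertex is scanned, including edges inserted earlier, since they are already stored in its adjacency list.

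\textbf{Running time.} This is the only step needing care, and it is resolved by charging. Each vertex becomes marked at most once, because marks are never removed. When a vertex is marked, its current adjacency list is scanned once. Any later edge out of an already marked vertex is handled in $O(1)$ at insertion time: we just check whether its head is marked and possibly start a search, whose cost is charged to the vertices it marks. Hence every edge is scanned $O(1)$ times over the whole sequence and every vertex is marked at most once. The total time is therefore $O(n + m)$ per direction, where $m$ is the final number of edges.

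To get the bound $O(m)$ as stated, we assume, as is standard in this setting, that isolated vertices are handled trivially or that $n = O(m)$; otherwise the graph cannot be strongly connected until $m \ge n$ anyway. The $O(n)$ initialization of marks can then be absorbed into $O(m)$, giving the claimed total running time.
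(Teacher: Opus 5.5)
Your proposal is correct and follows essentially the same route as the paper. Fix $u$, run the folklore incremental single-source reachability structure on both $G$ and $\reversed{G}$, report strong connectivity once both reachable sets have size $n$, and bound the running time by charging each edge $O(1)$ scans. Your closing remark on the $O(n)$ initialization is more careful than the paper, which ignores it, but as written it is an assumption rather than an argument; to make it rigorous for $n \ge 2$, answer ``not strongly connected'' and only buffer edges while fewer than $n$ edges are present, and build the structure once $m \ge n$, at cost $O(n+m) = O(m)$.
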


\paragraph{Subtle Pitfalls} Let us point out three important subtleties with \cref{alg:main}:
First, it is important that the for loop over $\nu$ goes from small to large.
A separation triple $(L, S, R)$ with small initial $\volout{L}$ might acquire a large $\volout{L}$ due to edge insertions.
(This is a nasty possibility that does not occur for static algorithms.)
Our property of being $(\kappa, \nu)$-ready is monotone under edge insertions, so we do not miss any such triple.
Similarly, for a separation triple $(L, S, R)$, the property whether
$\volout{L}$ or $\volin{R}$ is bigger can change back and forth due to edge insertions,
so it is important that we consider both $G$ and $\reversed{G}$ for a given value of $\nu$.

Second, it is important that the sampling on \cref{line:loop_start} takes place inside the for loop over $\nu$.
An algorithm that first performs all its sampling upfront and then spends $\Omega(\volout{L})$ time to detect a separation triple $(L, S, R)$
also struggles with the case where $\volout{L}$ starts out small and then drastically increases:
At first, $\volout{L}$ is small, so we would need a large number of samples to hit some vertex in $L$. 
Then, if $\volout{L}$ greatly increases, we can no longer afford to spend $\Omega(\volout{L})$ time on each of those samples.

Third, the value of $m$ also increases with edge insertions. The for loop over $\volest$ in \cref{alg:main}
runs until $\nu > \frac{m}{48 \kappa}$ where $m$ denotes the current number of edges.
When this happens, it is important to do all the sampling (\crefrange{line:middle_boost}{line:sampling_end}) at this very moment,
as future edge insertions could increase $m$ to the point where the inequality $\nu > \frac{m}{48 \kappa}$ no longer holds.

\paragraph{Oblivious vs.\ Adaptive Adversary} Our algorithm is Monte-Carlo randomized,
and works on both directed and undirected graphs if the adversary is oblivious:
for every value of $\kappa$, it suffices to consider the final remaining $\kappa$-separation triple $(L, S, R)$
and apply \cref{thm:sampling,thm:local_vc} to this particular cut.
If we only have to output the current value of $\kappa$,
then an output-adaptive adversary is equivalent to an oblivious one,
as the output yields no information about the internal state of our algorithm.
Indeed, the value of $\kappa$ is uniquely determined by the current graph.

However, this argument no longer works if the adversary is randomness-adaptive or we have to output the vertices of some $\kappa$-separation triple:
an adaptive adversary could first learn which vertices we sampled via \cref{thm:sampling}
or what the internal state of \cref{thm:local_vc} is,
    pick a $\kappa$-separation triple $(L, S, R)$ that our sampling misses (or that \cref{thm:local_vc} misses)
    and then make this the final remaining $\kappa$-separation triple.
For \emph{undirected graphs}, \cref{lemm:triple_count} shows that, if $G$ is \emph{$(\kappa, \nu)$-ready}, then there are $O(n^{\kappa+1})$ triples $(L, S, R)$
that are \volcut{\kappa}{<}{2\nu}s.
Therefore, we can repeat the sampling $O(\kappa \log n)$ times and union bound over all such triples.
This argument does not extend to \emph{directed graphs}, however: while there are $O(n^{\kappa})$ choices for $S$, there might be $\binom{n}{\nu-1} = n^{\Theta(\nu)}$ many choices for $L$.
This would require $\Omega(\nu)$ repetitions, which is too slow.
Thus, our algorithm only works on undirected graphs in this setting.

\subsection{Lower Bound Overview}

Our lower bounds are conditional on the Online Matrix-Vector Multiplication conjecture (OMv).

\begin{definition}[OMv problem~\cite{DBLP:conf/stoc/HenzingerKNS15}]
    In the OMv problem, you are first given an $n \times n$ boolean matrix $M$ and you may spend some time preprocessing $M$.
    Then, you are given $n$ boolean vectors $v_1, \dots, v_n$ of size $n$ one by one.
    For each vector $v_i$, you have to output the product $M v_i$ before seeing the next vector $v_{i+1}$.
    The computation time is the total time spent after seeing the first vector.
\end{definition}

\begin{conjecture}[\cite{DBLP:conf/stoc/HenzingerKNS15}]
    There is no randomized algorithm for OMv with success probability at least $\frac 1 2$, $O(\poly(n))$ preprocessing time and $O(n^{3 - \varepsilon})$ computation time for any $\varepsilon > 0$.
\end{conjecture}

There is an existing reduction from OMv to the incremental or decremental maximum cardinality bipartite matching problem.

\begin{theorem}[Theorem 1 in \cite{Dahlgaard16}] \label{theo:matching_lower_bound}
    There is no randomized algorithm for solving incremental (or decremental) maximum cardinality bipartite matching
    with success probability at least $\frac 1 2$, amortized $O(n^{1-\varepsilon})$ insertion (or deletion) time and
    $O(n^{2-\varepsilon})$ query time for any $\varepsilon > 0$, unless the OMv conjecture is false.
    
    More precisely, under OMv there is no such algorithm for processing $\Theta(n^2)$ edge insertions (or deletions) and $\Theta(n)$ queries on a graph with $\Theta(n)$ vertices in $O(n^{3-\varepsilon})$ total time,
    even if the size of the maximum cardinality matching is $\Theta(n)$ throughout.
\end{theorem}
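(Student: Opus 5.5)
The plan is a direct reduction from the OuMv problem, which \cite{DBLP:conf/stoc/HenzingerKNS15} show is equivalent to OMv under the conjecture. In OuMv we get an $n\times n$ boolean matrix $M$, and then $n$ pairs $(u_t,v_t)$ online; each time we must decide whether $u_t^{\top} M v_t = 1$. An $O(n^{3-\varepsilon})$-time OuMv algorithm would refute the OMv conjecture. So it suffices to answer OuMv with $O(n)$ vertices, $\Theta(n^2)$ insertions and $\Theta(n)$ matching-size queries. With the assumed amortized update time $O(n^{1-\varepsilon})$ and query time $O(n^{2-\varepsilon})$, the total would then be $O(n^{3-\varepsilon})$. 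The decremental version is symmetric: pre-insert everything and delete instead of insert.

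\textbf{Construction.} Take left vertices $a_1,\dots,a_n$ (rows) and $\ell_1,\dots,\ell_n$, and right vertices $r_1,\dots,r_n$ and $b_1,\dots,b_n$ (columns). Add the edges $a_i r_i$ and $\ell_j b_j$, plus $a_i b_j$ whenever $M_{ij}=1$. This is preprocessing, and the graph has a perfect matching of size $2n$.

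For the $t$-th query $(u_t,v_t)$, insert a fresh left vertex $x_t$ with edges $x_t r_i$ for all $i$ with $u_{t,i}=1$. Also insert a fresh right vertex $y_t$ with edges $\ell_j y_t$ for all $j$ with $v_{t,j}=1$. Then query the maximum matching size. Afterwards insert pendant vertices $x'_t$ (right, adjacent only to $x_t$) and $y'_t$ (left, adjacent only to $y_t$). Since the vertices are created only as needed, isolated vertices can be thought of as present from the start. After all $n$ queries the graph has $6n$ vertices and $O(n^2)$ edges. The matching size is always between $2n$ and $4n$, so it is $\Theta(n)$ throughout.

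\textbf{Correctness.} Before the $t$-th query the graph has a perfect matching $P$ of size $2n+2(t-1)$: take the base pairs together with all pairs $x_kx'_k$ and $y'_ky_k$ for $k<t$. Adding $x_t,y_t$ adds one vertex to each side, so the maximum matching is either $2n+2(t-1)$ or one larger. It is larger iff there is an $x_t$--$y_t$ augmenting path with respect to $P$.

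This is the step needing care, because earlier queries leave old vertices and edges behind. I will trace the alternating path from $x_t$ step by step, and the degrees force it:
\begin{itemize}[nosep]
\item $x_t$ is adjacent only to some $r_i$ with $u_{t,i}=1$.
\item $r_i$ is matched to $a_i$.
\item $a_i$'s non-matching edges go only to $b_j$ with $M_{ij}=1$.
\item $b_j$ is matched to $\ell_j$.
\item $\ell_j$'s non-matching neighbours are the $y_k$ with $v_{k,j}=1$. Every old $y_k$ is matched to its pendant $y'_k$, which has no other edges, so the path dead-ends there.
\end{itemize}
Also, the old $x_k$ hang off $r_i$ only via non-matching edges, so a path reaching $r_i$ must continue along the matching edge to $a_i$ and can never enter an $x_k$. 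Hence every augmenting path has the form $x_t\, r_i\, a_i\, b_j\, \ell_j\, y_t$, which exists iff $u_{t,i}=M_{ij}=v_{t,j}=1$ for some $i,j$, i.e.\ iff $u_t^{\top}Mv_t=1$.

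After the pendants are inserted, $P\cup\{x_tx'_t, y'_ty_t\}$ is again perfect, so the invariant is restored for the next query. This invariant is what makes the reuse of a single graph across all queries sound.

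\textbf{Counting and probability.} There are $O(n)$ insertions per query, so $\Theta(n^2)$ in total, and $n$ queries. A matching algorithm with the stated bounds answers all of OuMv in $O(n^{3-\varepsilon})$ time, plus polynomial preprocessing, contradicting OMv. The success probability $\frac12$ transfers unchanged, since the reduction is deterministic.
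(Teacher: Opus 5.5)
The paper gives no proof of this statement. It imports it from \cite{Dahlgaard16} and uses it as a black box in the vertex-connectivity lower bound. Your proposal is therefore a self-contained reconstruction rather than an alternative to an argument in the paper, and for the incremental case it is correct.

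The alternating-path trace works:
\begin{itemize}
\item the non-matching edges at $a_i$ and $\ell_j$ lead only to $b$'s and $y$'s;
\item an old $y_k$ dead-ends at its leaf partner $y'_k$;
\item an old $x_k$ is unreachable, because its partner $x'_k$ is a leaf;
\item each side has $2n+2t-1$ non-isolated vertices, which caps the gain at one.
\end{itemize}
So the query returns $2n+2t-1$ exactly when $u_t^{\top}Mv_t=1$. The pendant trick guarantees that no insertion ever has to be undone, which is precisely what an \emph{amortized} bound needs. Compared with the bare citation, your argument also makes explicit the two properties the paper relies on later: $\Theta(n)$ vertices, and matching size between $2n$ and $4n$ throughout.

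Two minor slips, both harmless for the contradiction:
\begin{itemize}
\item the final graph has $8n$ vertices, not $6n$, since you omitted the $2n$ pendants;
\item the online phase has only $O(n^2)$ insertions unless you pad.
\end{itemize}

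The decremental case is where your sketch is too thin. It is not symmetric in any literal sense. The final incremental graph depends on all the pairs $(u_t,v_t)$, which arrive online, so you cannot pre-insert it and run time backwards. What does work is the following.
\begin{enumerate}
\item Start with every gadget complete and blocked: for all $t$, make $x_t$ adjacent to all $r_i$ and $y_t$ adjacent to all $\ell_j$, and add the pendant edges $x_tx'_t$ and $y'_ty_t$.
\item At query $t$, delete the two pendant edges, the edges $x_tr_i$ with $u_{t,i}=0$, and the edges $\ell_jy_t$ with $v_{t,j}=0$. Then query.
\item After the query, delete all remaining edges at $x_t$ and $y_t$.
\end{enumerate}
Your path argument then goes through with the roles exchanged. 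The \emph{future} gadgets $k>t$ are neutralized by their leaves, and the past ones are isolated. The answer is $2n+2(n-t)$ or one more, and there are $n(2n+2)$ deletions in total. This should be written out rather than called symmetric.

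A caveat on the probability constant: the OMv$\Rightarrow$OuMv implication of \cite{DBLP:conf/stoc/HenzingerKNS15} is established for error probability at most $\frac13$. At success probability exactly $\frac12$ that step is therefore not automatic. This comes from how the paper phrases the conjecture, not from a flaw in your reduction.
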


We design a graph transformation that takes a bipartite graph $G$ and produces a (non-bipartite) graph $H$
such that a minimum vertex cover in $G$ corresponds to a minimum vertex cut in $H$.
By K\"onig's theorem, the size of a minimum vertex cover in a bipartite graph is equal to the size of a maximum bipartite matching.
This shows a reduction from incremental maximum matching to incremental minimum vertex cut.
We apply this reduction to \cref{theo:matching_lower_bound} to get the following lower bound:

\begin{restatable}{theorem}{theoLowerBound} \label{theo:lower_bound}
    There is no incremental (or decremental) randomized algorithm that maintains the vertex connectivity of an undirected graph
    with success probability at least $\frac 1 2$, amortized $O(n^{1-\varepsilon})$ time per insertion (or deletion) and
    amortized $O(n^{2-\varepsilon})$ time per query for any $\varepsilon > 0$, unless the OMv conjecture is false.
    
    More precisely, under OMv there is no such algorithm for processing $\Theta(n^2)$ edge insertions (or deletions) and $\Theta(n)$ queries on a graph with $\Theta(n)$ vertices in $O(n^{3-\varepsilon})$ time,
    even if the vertex connectivity is $\Theta(n)$ throughout.
\end{restatable}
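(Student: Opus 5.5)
We reduce from incremental (or decremental) maximum cardinality bipartite matching via \cref{theo:matching_lower_bound}. Given a bipartite graph $G=(A\cup B,E)$ with $|A|=|B|=n$, we build an undirected graph $H$ with $\Theta(n)$ vertices in which the vertex connectivity equals the size of a minimum vertex cover of $G$ plus a fixed, known offset. By K\"onig's theorem this is the maximum matching size. Each edge insertion (deletion) in $G$ becomes exactly one edge \emph{deletion} (insertion) in $H$, so the incremental matching problem maps to the decremental connectivity problem and vice versa. Queries transfer directly by subtracting the offset.

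The construction takes $H$ on vertex set $A\cup B$. We make $A$ a clique and $B$ a clique, and for every pair $(a,b)\in A\times B$ we include the edge $ab$ in $H$ if and only if $ab\notin E$, so $H$ is the complement of $G$ within the bipartite pairs. A vertex set $S$ then separates $H$ exactly when $H\setminus S$ has some $a\in A\setminus S$ and $b\in B\setminus S$ in different components. Because $A\setminus S$ and $B\setminus S$ are cliques, this happens if and only if no non-edge of $G$ joins $A\setminus S$ and $B\setminus S$, that is, $(A\setminus S)\times(B\setminus S)\subseteq E$. This is the complete-bipartite-subgraph condition, which is the wrong object: it corresponds to a biclique in $G$, not to a vertex cover. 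So I would instead keep the edge $ab\in H$ if and only if $ab\in E$, and additionally require that every vertex of $H$ be adjacent to its own side. The separation condition for $H\setminus S$ then reads $E(A\setminus S,B\setminus S)=\emptyset$, which says precisely that $S$ is a vertex cover of $G$, as long as both sides keep a survivor.

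To rule out degenerate cuts, namely those that delete a whole side or isolate a single vertex, I would pad $H$. Add $n$ extra vertices to each side, giving padded sets $A'\supseteq A$ and $B'\supseteq B$ that are still cliques. The padding vertices have no cross edges, so no vertex cover ever needs to contain them. Any separation must then leave a survivor on each side, since removing all of $A'$ already costs at least $2n>\tau(G)$. A single vertex can be isolated only by deleting its whole clique side, which is again too expensive. Hence $\kappa(H)=\tau(G)=\nu(G)$ with offset $0$, and $H$ has $\Theta(n)$ vertices. We can also make $\kappa=\Theta(n)$ hold throughout, because \cref{theo:matching_lower_bound} guarantees the matching size is $\Theta(n)$. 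In this version, insertions correspond to insertions, so the incremental and decremental cases each map to themselves. The construction is built once, using $O(n^2)$ preprocessing insertions, which we can fold into the polynomial preprocessing (or start from the clique skeleton).

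The main obstacle is the careful verification that every minimum vertex cut of $H$ has this form. We must exclude cuts that separate within a side, which is impossible since each side is a clique; cuts that trivially delete a side; and the ``single remaining vertex'' clause in the definition of a vertex cut. We also have to check that $\tau(G)<2n$ keeps all of these strictly costlier. Once that holds, the time accounting is immediate. An algorithm with amortized $O(n^{1-\varepsilon})$ update time and $O(n^{2-\varepsilon})$ query time, run on $\Theta(n^2)$ updates and $\Theta(n)$ queries, would solve the matching instance in $O(n^{3-\varepsilon})$ total time with the same success probability, contradicting \cref{theo:matching_lower_bound} and hence OMv.
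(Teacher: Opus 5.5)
Your final construction (both sides made into cliques, cross edges exactly $E$, and padding vertices added to each side's clique so that every vertex cover of $G$ leaves a survivor on both sides) is correct. It is the paper's construction with $n$ padding vertices per side in place of the single vertices $\hat{x}$ and $\hat{y}$, and together with K\"onig's theorem and the one-to-one transfer of updates and queries it gives the same proof. The one thing to clean up is your opening paragraph: its claim that insertions become deletions belongs to the discarded complement attempt, whereas in the construction you actually use, insertions map to insertions and deletions to deletions.
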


To obtain a lower bound in the restricted case where the size of a minimum cut may not exceed $\kappa$,
we apply \cref{theo:matching_lower_bound} with $n \in \Theta(\kappa)$ and add additional vertices and edges to the graph, without introducing any new small vertex cuts, until we have $m$ edge insertions (or deletions).
\cref{theo:lower_bound} shows that the total running time of an incremental (or decremental) algorithm cannot be $O(\kappa^{3-\varepsilon})$.
This time can be distributed among all edge insertions, so we get the following bound:

\begin{restatable}{theorem}{theoLowerBoundKappa} \label{theo:lower_bound_kappa}
    There is no incremental (or decremental) algorithm that maintains the vertex connectivity of an undirected graph
    with amortized time $O(\kappa^{3-\varepsilon} / m)$ per insertion (or deletion) and
    $O(\kappa^{2-\varepsilon})$ time per query for any $\varepsilon > 0$, unless the OMv conjecture is false.
\end{restatable}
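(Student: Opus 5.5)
We derive \cref{theo:lower_bound_kappa} from the second, more precise formulation of \cref{theo:lower_bound} by padding. Suppose an algorithm $\cA$ maintains vertex connectivity with amortized insertion time $O(\kappa^{3-\varepsilon}/m)$ and query time $O(\kappa^{2-\varepsilon})$. From it we build an algorithm for the hard instances of \cref{theo:lower_bound}. For a parameter $k$, those instances have $\Theta(k)$ vertices, $\Theta(k^2)$ insertions, $\Theta(k)$ queries and connectivity $\Theta(k)$ throughout. Embedding such an instance $H$ into a larger graph should keep $\kappa=\Theta(k)$ while making $m$ as large as the instance allows. The total running time is then $O(m\cdot \kappa^{3-\varepsilon}/m + k\cdot\kappa^{2-\varepsilon}) = O(k^{3-\varepsilon})$, which contradicts \cref{theo:lower_bound}.

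\textbf{Construction.} Let $H$ range over the instance graphs on vertex set $V_H$ with $|V_H| = \Theta(k)$. We add a padding graph $P$ that is a large graph of connectivity well above $\kappa(H)$, for example a union of cliques of size $\Theta(k)$. We join $P$ to $H$ so that every vertex of $P$ is adjacent to all of $V_H$; this part is dense but still has $O(k)$ neighbours per vertex in $V_H$. Equivalently, $P$ can consist of many disjoint blocks $B_j$, each a clique on $c\,k$ vertices that is fully joined to $V_H$.

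We then argue that $\kappa(H\cup P)=\kappa(H)$ as long as the minimum cut of $H$ is of the right kind. Any separator must either contain all of $V_H$, which costs $\Theta(k)$ vertices and is larger than $\kappa(H)$ if the constants are chosen correctly, or leave some vertices of $V_H$ alive. In the latter case all surviving $P$ vertices attach to every surviving $V_H$ vertex, so the separator must separate $H$ itself. This gives $\kappa(H\cup P)\ge\kappa(H)$ up to the following subtlety. A separator $S\subseteq V_H$ that disconnects $H$ leaves the $P$ vertices adjacent to both sides, which reconnects the graph. The fix is to join each block not to all of $V_H$ but to a $(\kappa(H)+1)$-connected structure, or, more simply, to make $\kappa(H\cup P)$ exactly computable from $\kappa(H)$, say $\kappa(H\cup P)=\kappa(H)+|X|$ for a fixed attachment set $X$.

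\textbf{Main obstacle.} The main difficulty is designing the padding so that the new connectivity is an explicit function of $\kappa(H)$, and in particular so that $\kappa$ stays $\Theta(k)$ with no new small cuts. I would try first to attach every padding vertex to a fixed set $X$ of $\Theta(k)$ new hub vertices that are themselves adjacent to all of $V_H$, with the padding vertices among themselves forming high-degree cliques. Then any cut separating padding from $H$ costs at least $|X|$, and cuts inside $H$ are unaffected since $X$ lies on both sides only through $V_H$. Checking this case analysis carefully is the crux.

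\textbf{Accounting.} The padding edges are inserted first in the incremental setting, or deleted last in the decremental setting. If padding is done as preprocessing it costs polynomial time, which is allowed. Counting it among the $m$ insertions is also fine because of the $m$ in the denominator of the per-insertion bound. Each query answer is translated back by subtracting the known offset. Choosing $m$ arbitrary with $m=\Omega(k^2)$, the total time is $O(k^{3-\varepsilon})$, which contradicts \cref{theo:lower_bound} with $n=\Theta(k)$.
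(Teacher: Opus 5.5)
\textbf{The padding construction is missing, and the candidate you give fails.} Your plan and accounting match the paper's: pad with $\Theta(m)$ dummy insertions, then pay $m\cdot\kappa^{3-\varepsilon}/m+\kappa\cdot\kappa^{2-\varepsilon}=O(\kappa^{3-\varepsilon})$, contradicting \cref{theo:lower_bound} at $n=\Theta(\kappa)$. But you never actually construct the padding, which you yourself call the crux. Take your hub version: $X$ is fully joined to $V_H$, and the padding attaches only to $X$.

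\begin{itemize}
\item Every hub is a common neighbour of every pair of vertices of $V_H$. So any cut separating two vertices of $V_H$ must contain all of $X$, and your claim that cuts inside $H$ are unaffected is false.
\item $X$ by itself separates the padding from $V_H$.
\end{itemize}

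Together these give connectivity exactly $|X|$, independent of $H$. The offset formula $\kappa(H)+|X|$ cannot hold, since $X$ alone is a cut, and the queries reveal nothing about the matching. This is the same obstruction you found in your first attempt: padding adjacent to both sides of a minimum cut of $H$ destroys that cut. Because you treat the instance of \cref{theo:lower_bound} as a black box whose minimum cuts move under insertions, you have no way to know where padding can safely attach.

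\textbf{The missing idea is to use the structure of the reduction graph} from \cref{lemm:cover_cut,lemm:matching_cut}. Every minimum cut there is a vertex cover $S\subseteq V(G)$ that avoids $\hat x$, and the clique $(X\cup\{\hat x\})\setminus S$ lies in a single component. The paper's construction is as follows.

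\begin{itemize}
\item Start from a bipartite $G$ on $2\kappa$ vertices, so $|X|=\kappa$.
\item Replace $\hat x$ by $\Theta(\sqrt m)$ copies, each with neighbourhood exactly $X$. The copies form an independent set in the incremental case and a clique in the decremental case.
\item Each copy always falls into the component of $X\setminus S$, so it never reconnects the two sides of a vertex-cover cut.
\item The exchange argument showing $\hat x\notin S$ applies verbatim to every copy, so a minimum cut never contains a copy. Any cut containing all of $X$ has size at least $\kappa$, which is at least the matching size. Hence \cref{lemm:matching_cut} still holds.
\item The $\Theta(m)$ dummy updates are edges among the copies and never affect the connectivity.
\item $Y$ remains a cut of size $\kappa$, so the connectivity never exceeds $\kappa$.
\end{itemize}

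In general terms, the padding must hang off a vertex set of size at least the current connectivity that no minimum cut ever splits; here that set is the clique $X$. These dummy edges must also be genuine insertions, not preprocessing. Otherwise the per-insertion budget $\kappa^{3-\varepsilon}/m$ is sub-constant and the statement becomes vacuous.
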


\subparagraph{Directed Graphs} The lower bounds in \cref{theo:lower_bound,theo:lower_bound_kappa} also apply to the directed case.
Indeed, the vertex connectivity of an undirected graph is equal to the vertex connectivity of the directed graph obtained by replacing each edge $uv$ by two directed edges $u \to v, v \to u$.
This increases the number of edge insertions or deletions by a constant factor.
Thus, an incremental (or decremental) algorithm for directed vertex connectivity implies such an algorithm for the undirected case, with the same asymptotic running time.

% BUILD_COMMAND make main.pdf -B
\section{Lower Bound Details} \label{sec:lower_bound}

% In this section, we show that the OMv conjecture implies that there is no efficient incremental (or decremental) algorithm for maintaing the vertex connectivity.
% \begin{theorem} \label{theo:lower_bound}
%     There is no incremental (or decremental) algorithm for maintaining the vertex connectivity of a graph
%     with amortized time $O(n^{1-\varepsilon})$ per insertion (or deletion) and
%     $O(n^{2-\varepsilon})$ time per query unless the OMv conjecture is false.
% \end{theorem}
% 
% We show a reduction from incremental maximum bipartite matching
% to incremental vertex connectivity.
% Then, we apply an existing reduction from OMv to incremental maximum bipartite matching.
% 
% \begin{theorem}[Theorem 1 in Dahlgaard \cite{Dahlgaard16}] \label{theo:matching_lower_bound}
%     There is no algorithm for solving incremental (or decremental) maximum cardinality bipartite matching
%     with amortized time $O(n^{1-\varepsilon})$ per insertion (or deletion) and
%     $O(n^{2-\varepsilon})$ time per query unless the OMv conjecture of \cite{Henzinger2015} is false.
% \end{theorem}

As mentioned in the overview, we want to reduce from incremental (or decremental) maximum bipartite matching
to incremental (or decremental) minimum vertex cut.
The core of this reduction is the following graph construction, which we also illustrate with an example in \cref{fig:lower_bound_transformation}:
For any bipartite graph $G$ with sides $X, Y$, construct a graph $H$ as follows:
Add two additional vertices $\hat{x}$ and $\hat{y}$, and then add edges to make $X \cup \{\hat{x}\}$ and $Y \cup \{\hat{y}\}$ into cliques.
Note that $G$ is naturally a subgraph of $H$, so a set of vertices in $G$ is also a set of vertices in $H$.

\begin{figure}
    \centering
    \begin{tikzpicture}[
    % Global settings
    >=Stealth,
    % Node Styles
    v1/.style={circle, draw=blue!80!black, fill=blue!10, very thick, minimum size=8mm, font=\large},
    v2/.style={circle, draw=red!80!black, fill=red!10, very thick, minimum size=8mm, font=\large},
    v1new/.style={v1, fill=blue!20, densely dashed},
    v2new/.style={v2, fill=red!20, densely dashed},
    % Koenig's Theorem Highlight Styles (Vertex Cover)
    v1cover/.style={v1, draw=green!60!black, fill=blue!25, line width=2pt},
    v2cover/.style={v2, draw=green!60!black, fill=red!25, line width=2pt},
    % Edge Styles
    bipedge/.style={thick, draw=black!70},
    clique1/.style={very thick, draw=blue!60, dashed},
    clique2/.style={very thick, draw=red!60, dashed},
    % Koenig's Theorem Highlight Style (Matching)
    matchedge/.style={line width=2.5pt, draw=green!65!black},
    % Arrow Style
    transition/.style={-{Stealth[length=6mm, width=5mm]}, line width=1.5mm, draw=black!30}
]

% ==========================================
%            ORIGINAL GRAPH G
% ==========================================

% V1 (Left side) 
\node[v1] (u1) at (0, 1.8) {$x_1$};
\node[v1] (u2) at (0, 0) {$x_2$};
\node[v1] (u3) at (0, -1.8) {$x_3$};

% V2 (Right side)
\node[v2] (v1) at (3.0, 1.8) {$y_1$};
\node[v2] (v2) at (3.0, 0) {$y_2$};
\node[v2] (v3) at (3.0, -1.8) {$y_3$};

% Bipartite edges designed for a max matching of size 2
\draw[matchedge] (u1) -- (v1); % Matching Edge 1
\draw[bipedge]   (u1) -- (v3);
\draw[bipedge]   (u2) -- (v2);
\draw[matchedge] (u3) -- (v2); % Matching Edge 2

% Subcaption 
\node[font=\bfseries, text=black!70] at (1.5, -3.2) {Original Graph $G$};

% ==========================================
%           TRANSITION ARROW
% ==========================================
\draw[transition] (4.5, 0) -- (6.5, 0);

% ==========================================
%          TRANSFORMED GRAPH H
% ==========================================

% V1 Transformed (Left side) 
\node[v1new]   (u4_t) at (8.0, 0) {$\hat x$}; 
\node[v1cover] (u1_t) at (10.0, 1.8) {$x_1$}; % Minimum Vertex Cover Node
\node[v1]      (u2_t) at (10.0, 0) {$x_2$};
\node[v1]      (u3_t) at (10.0, -1.8) {$x_3$};

% V2 Transformed (Right side) 
\node[v2]      (v1_t) at (13.0, 1.8) {$y_1$};
\node[v2cover] (v2_t) at (13.0, 0) {$y_2$}; % Minimum Vertex Cover Node
\node[v2]      (v3_t) at (13.0, -1.8) {$y_3$};
\node[v2new]   (v4_t) at (15.0, 0) {$\hat y$}; 

% Transformed Bipartite edges (mirroring the original, but standard style)
\draw[bipedge] (u1_t) -- (v1_t);
\draw[bipedge] (u1_t) -- (v3_t);
\draw[bipedge] (u2_t) -- (v2_t);
\draw[bipedge] (u3_t) -- (v2_t);

% New Clique 1 edges (V1 side - Blue)
\draw[clique1] (u1_t) -- (u2_t);
\draw[clique1] (u2_t) -- (u3_t);
\draw[clique1] (u1_t) to[bend right=40] (u3_t); 
\draw[clique1] (u4_t) -- (u1_t);
\draw[clique1] (u4_t) -- (u2_t);
\draw[clique1] (u4_t) -- (u3_t);

% New Clique 2 edges (V2 side - Red)
\draw[clique2] (v1_t) -- (v2_t);
\draw[clique2] (v2_t) -- (v3_t);
\draw[clique2] (v1_t) to[bend left=40] (v3_t); 
\draw[clique2] (v4_t) -- (v1_t);
\draw[clique2] (v4_t) -- (v2_t);
\draw[clique2] (v4_t) -- (v3_t);

% Subcaption
\node[font=\bfseries, text=black!70] at (11.5, -3.2) {Transformed Graph $H$};

\end{tikzpicture}
    \caption{The graph transformation used in our lower bound.
    The size of a maximum matching in the bipartite graph $G$ corresponds to the size of a minimum vertex cover in $G$, which in turn corresponds to the size of a minimum vertex cut in $H$.
    A maximum matching in $G$ and a minimum vertex cut in $H$ are marked green.}
    \label{fig:lower_bound_transformation}
\end{figure}

\begin{lemma} \label{lemm:cover_cut}
    A set $S \subseteq V(G)$ is a vertex cover in $G$ if and only if $S$ is a vertex cut in $H$.
\end{lemma}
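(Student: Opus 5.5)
We prove the two directions separately, using the structure of $H$: the vertex set $V(H) = X \cup Y \cup \{\hat x, \hat y\}$ splits into the cliques $C_X = X \cup \{\hat x\}$ and $C_Y = Y \cup \{\hat y\}$. The only edges between $C_X$ and $C_Y$ are the bipartite edges of $G$. Note that $\hat x$ and $\hat y$ are never adjacent and never lie in $S$, since $S \subseteq V(G)$.

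For the forward direction, let $S$ be a vertex cover of $G$. Then $\hat x \in C_X \setminus S$ and $\hat y \in C_Y \setminus S$ both survive in $H \setminus S$. Every edge between $C_X \setminus S$ and $C_Y \setminus S$ would be an edge of $G$ with both endpoints outside $S$. This contradicts the cover property, so no such edge exists. Hence $H \setminus S$ has no edge between these two nonempty sets and is disconnected. Equivalently, $(C_X \setminus S, S, C_Y \setminus S)$ is a separation triple, so $S$ is a vertex cut.

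For the reverse direction, suppose $S \subseteq V(G)$ is not a vertex cover. We show $H \setminus S$ is connected and has at least two vertices, so $S$ is not a vertex cut. The graph $H \setminus S$ contains $\hat x$ and $\hat y$, so it has at least two vertices. Since $C_X \setminus S$ is a clique containing $\hat x$, it is connected. The same holds for $C_Y \setminus S$, which contains $\hat y$. Because $S$ is not a cover, some edge $xy$ of $G$ has $x, y \notin S$. This edge joins the two cliques, so $H \setminus S$ is connected.

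The only delicate point is the convention that a vertex cut may also leave a single vertex. The presence of $\hat x$ and $\hat y$ outside every $S \subseteq V(G)$ rules this out, and I expect this to be the step needing explicit mention. Everything else is immediate from the construction.
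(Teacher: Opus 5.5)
Your proof is correct and follows the paper's argument: in the forward direction $\hat x$ and $\hat y$ witness that both sides of the cut are nonempty, and in the reverse direction (your contrapositive, the paper's contradiction) an uncovered edge $xy$ joins the two cliques so that $H \setminus S$ is connected. Your explicit remark that $\hat x, \hat y \notin S$ rules out the single-vertex case of the vertex-cut definition is a detail the paper leaves implicit.
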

\begin{proof}
    We first show that every vertex cover in $G$ is a vertex cut in $H$.
    Let $S \subseteq V(G)$ be a vertex cover. Then, every edge in $G$ is incident to at least one vertex in $S$.
    In particular, in $H \setminus S$, there are no edges between $(X \cup \{\hat{x}\}) \setminus S$ and $(Y \cup \{\hat{y}\}) \setminus S$.
    Moreover, each of these sets contains at least one vertex, namely $\hat{x}$ and $\hat{y}$, respectively.
    Thus, $S$ is a vertex cut in $H$.

    Conversely, we now show that every vertex cut in $H$ that avoids the vertices $\hat{x}, \hat{y}$
    is a vertex cover in $G$.
    Let $S \subseteq V(G)$ be a vertex cut in $H$.
    We show that every edge in $G$ is incident to a vertex in $S$.
    Indeed, suppose for contradiction that there is an edge $\{x, y\} \in E(G)$ with $x, y \notin S$.
    But then $H \setminus S$ is connected, as it consists of two cliques $(X \cup \{\hat{x}\}) \setminus S$ and $(Y \cup \{\hat{y}\}) \setminus S$
    connected by at least one edge $\{x, y\}$. But $S$ was assumed to be a vertex cut in $H$.
    This contradiction shows that there is no such edge $\{x, y\}$, so $S$ is a vertex cover.
\end{proof}

\begin{lemma} \label{lemm:matching_cut}
    The vertex connectivity of $H$ is equal to the size of a maximum matching in $G$.
\end{lemma}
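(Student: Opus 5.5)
We prove two inequalities: $\kappa(H) \le \nu(G)$ and $\kappa(H) \ge \nu(G)$, where $\nu(G)$ denotes the size of a maximum matching in $G$. By K\"onig's theorem, $\nu(G)$ equals the size $\tau(G)$ of a minimum vertex cover of $G$, so it suffices to show $\kappa(H) = \tau(G)$.

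For the upper bound, take a minimum vertex cover $S$ of $G$. By \cref{lemm:cover_cut}, $S$ is a vertex cut in $H$, so $\kappa(H) \le |S| = \tau(G)$.

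For the lower bound, let $S$ be a minimum vertex cut of $H$. Note that \cref{lemm:cover_cut} as proved only applies to cuts avoiding $\hat x, \hat y$, so we must first handle the case where $S$ contains $\hat x$ or $\hat y$. This is the main point requiring care.

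\begin{itemize}
\item \emph{Case 1: $S$ is a trivial cut}, i.e.\ $H \setminus S$ is a single vertex. Then $|S| = |V(H)|-1 \ge |X|+|Y|+1 > \tau(G)$, since $X$ is itself a vertex cover of $G$.
\item \emph{Case 2: $H\setminus S$ is disconnected and $\hat x \notin S$, $\hat y \notin S$.} Then $S \subseteq V(G)$, and \cref{lemm:cover_cut} makes $S$ a vertex cover, so $|S| \ge \tau(G)$.
\item \emph{Case 3: $H \setminus S$ is disconnected and $\hat x \in S$.} Since $\hat x$ is only adjacent to $X$, the set $S' = S \setminus \{\hat x\}$ is still a cut unless $\hat x$ was needed. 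Rather than argue this directly, I would argue via components. $H \setminus S$ is disconnected. The surviving parts $X' = (X\cup\{\hat x\})\setminus S$ and $Y' = (Y\cup\{\hat y\})\setminus S$ are each cliques, possibly empty. Disconnectedness therefore forces both $X'$ and $Y'$ to be nonempty, with no $G$-edge between $X'\cap X$ and $Y'\cap Y$. Hence $S \cap V(G)$ covers every edge of $G$ except possibly those incident to vertices of $X'$ or $Y'$. Since there is no edge between $X'$ and $Y'$, those edges have their other endpoint in $S$. So $S\cap V(G)$ is a vertex cover of $G$, giving $|S| \ge |S \cap V(G)| \ge \tau(G)$. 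The case $\hat y \in S$ is symmetric.
\end{itemize}

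In fact this component argument covers Case 2 as well, so I would present it uniformly: for any non-trivial cut $S$, $S \cap V(G)$ is a vertex cover of $G$.

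Combining the cases gives $\kappa(H) \ge \tau(G) = \nu(G)$, which together with the upper bound proves the lemma.
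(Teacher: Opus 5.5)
Your proof is correct. The upper bound is the paper's, but your lower bound takes a different route.

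The paper fixes a \emph{minimum} cut $S$ and uses that $X$ is itself a cut to get $|S|\le|X|$, which also rules out the trivial cut. It then runs an exchange argument. If $\hat x\in S$, some $x'\in X\setminus S$ survives, and every surviving neighbour of $\hat x$ lies in the component of $X\setminus S$. So re-adding $\hat x$ merely joins that component, and $S\setminus\{\hat x\}$ would be a smaller cut. Hence $S\subseteq V(G)$, and the converse half of \cref{lemm:cover_cut} applies as stated.

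You skip the exchange step. You prove directly, for \emph{every} cut $S$ with $H\setminus S$ disconnected, that $S\cap V(G)$ is a vertex cover of $G$. In effect you observe that the two-cliques argument in the converse half of \cref{lemm:cover_cut} never uses $\hat x,\hat y\notin S$. This is more uniform and independent of minimality. The cost is that you must treat the trivial cut separately, which you do correctly via $\tau(G)\le|X|<|V(H)|-1$.

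Nothing is lost. For a minimum cut the chain $|S|\ge|S\cap V(G)|\ge\tau(G)\ge\kappa(H)=|S|$ is tight. So you also recover the paper's by-product that minimum cuts of $H$ avoid $\hat x,\hat y$ and are exactly the minimum vertex covers of $G$.
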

\begin{proof}
    Since $G$ is bipartite, by K\"onig's theorem, the size of a maximum matching in $G$ is equal to the size of a minimum vertex cover in $G$.
    Let $S$ be a minimum vertex cover in $G$.
    By \cref{lemm:cover_cut}, the set $S$ is a vertex cut in $H$.
    Thus, the vertex connectivity of $H$ is at most the size of a minimum vertex cover in $G$.
    
    Conversely, let $S \subseteq V(H)$ be a minimum vertex cut.
    Then $H - S$ has two or more connected components.
    Since $X$ is a vertex cut in $H$, we have $|S| \le |X|$.
    Suppose for contradiction that $\hat{x} \in S$.
    Then, there is a vertex $x' \in X - S$.
    Since $x'$ has edges to all of $X - S - x'$,
    all vertices in $X - S$ lie in a single connected component of $H - S$.
    Since all neighbors of $\hat{x}$ lie in $X - S$,
    $H - (S - \hat{x})$ has the same number of connected components as $H - S$,
    so $S - \hat{x}$ is a vertex cut, of smaller cardinality than $S$.
    This contradiction shows that $\hat{x} \notin S$.
    Similarly, $\hat{y} \notin S$.
    Therefore $S \subseteq V(G)$, so by \cref{lemm:cover_cut}, $S$ is a vertex cover in $G$.
    Thus, the vertex connectivity of $H$ is at least the size of a minimum vertex cover in $G$.
\end{proof}

We now show the lower bound with no restrictions on $\kappa$.
\theoLowerBound*
\begin{proof}
    Suppose that there is an incremental (or decremental) algorithm $\cA$ for maintaining the vertex connectivity of a graph.
    We design an incremental (or decremental) algorithm $\cA'$ for maintaining the size of a maximum matching as follows:
    Given the initial graph $G$, construct $H$ as described above and initialize $\cA$ on $H$.
    For every edge insertion (or deletion), perform the corresponding insertion (or deletion) on $\cA$.
    For every maximum matching query, perform a vertex connectivity query in $\cA$ and output the result.
    By \cref{lemm:matching_cut}, this algorithm $\cA'$ is correct.
    Moreover, $\cA'$ has the same insertion (or deletion) and query time as $\cA$.
    Thus, \cref{theo:matching_lower_bound} implies \cref{theo:lower_bound}.
\end{proof}

Finally, we consider the case where the vertex connectivity may not exceed some small value $\kappa$.
For $m \in \Theta(\kappa^2)$, setting $n = \kappa$ in \cref{theo:lower_bound} yields a lower bound
of $\Omega(\kappa^{1-\varepsilon})$ per insertion (or deletion) or $\Omega(\kappa^{2-\varepsilon}) = \Omega(m^{1-\varepsilon})$ per query.
However, if we consider the total running time (instead of the amortized running time over a prefix of operations), this opens the door to a faster algorithm when $m \gg \kappa^2$.
(Note that, in a $\kappa$-connected graph, we always have $m \in \Omega(\kappa n)$ and $n > \kappa$, which forces $m \in \Omega(\kappa^2)$.)

We modify our graph construction to accommodate $m > \kappa^2$ edge insertions as follows:
Given a bipartite graph $G$ with $2 \kappa$ vertices, we construct the graph $H$ as described above.
Then, we replace the vertex $\hat{x}$ by $\Theta(\sqrt{m})$ copies with the same neighbors as $\hat{x}$,
so that these copies form an independent set (incremental setting) or a clique (decremental setting).
This yields a new graph $H'$ where we can perform $\Theta(m)$ edge insertions (or deletions) on these copies,
and crucially, the vertex connectivity of $H'$ will never exceed $\kappa$ since deleting $Y$ disconnects $\hat{x}$ from $\hat{y}$.
\cref{lemm:matching_cut} also holds for $H'$ since every vertex in $X$ is connected to all other vertices in $X$ and all copies of $\hat{x}$.

\theoLowerBoundKappa*

\begin{proof}
    Suppose that there is an incremental (or decremental) algorithm $\cA$ for maintaining the vertex connectivity of a graph.
    We design an incremental (or decremental) algorithm $\cA'$ for maintaining the size of a maximum matching as follows:
    Given the initial $2 \kappa$-vertex graph $G$, construct $H'$ as described above and initialize $\cA$ on $H'$.
    For every edge insertion (or deletion), perform the corresponding insertion (or deletion) on $\cA$.
    For every maximum matching query, perform a vertex connectivity query in $\cA$ and output the result.
    Then, perform $m$ (useless) edge insertions (or deletions) on the copies of $\hat{x}$.
    By \cref{lemm:matching_cut}, this algorithm $\cA'$ is correct.
    By \cref{theo:matching_lower_bound}, the total computation time of $\cA'$ cannot be $O(\kappa^{3 - \varepsilon})$.
    Since we perform $\Theta(m)$ insertions (or deletions) and $\kappa$ queries,
    this implies \cref{theo:lower_bound_kappa}.
\end{proof}

\section{Analysis of \cref{alg:main}} \label{sec:analysis}

In this section, we first analyze the running time of \cref{alg:main}
under the assumption that the algorithm is correct.
Then, we show that the algorithm produces a correct output with high probability.

\subsection{Running time analysis}

\begin{theorem}
    Let $\kappa$ be the vertex connectivity of the final graph $G$.
    If all outputs of \cref{alg:main} were correct,
    then the total running time of \cref{alg:main} over $m$ edge insertions on an initially empty graph is: $O(m \kappa^3 \log^2 n)$ if the adversary is oblivious; and $O(m \kappa^4 \log^2 n)$ if the adversary is randomness-adaptive.
\end{theorem}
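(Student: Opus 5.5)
We account for the cost of \cref{alg:main} one round at a time and then sum over the values $\cutsize = 1, \dots, \kappa+1$ that are reached. Write $b = \boosts$, so $b = O(\log n)$ for an oblivious adversary and $b = O(\kappa \log n)$ for a randomness-adaptive one. The edge insertions themselves are processed inside the data structures and are charged to their $\ell$ terms. Since each insertion is processed by exactly one active data structure, the $\ell$-terms summed over all structures contribute $O(m)$ per unit of $\cutsize$-dependence. More precisely, the $\ell$-terms total $O(m)$ for \cref{thm:local_vc} and $O(\cutsize m)$ for \cref{thm:maxflow_vc} over the whole run. \Cref{line:reachability} costs $O(m)$ by \cref{lemm:incReachability}. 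The degree checks on \cref{line:deg_check} cost $O(n + \text{insertions})$ per round, so $O(\kappa n) \subseteq O(m)$ in total, since a $\kappa$-connected final graph has $m \ge \kappa n/2$. Throughout, $m$ denotes the final edge count, which bounds the current count at every moment.

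Next we bound the local phase of a round with fixed $\cutsize$ and fixed $\volest \le \frac{m}{48\cutsize}$. Each of the $2b$ repetitions calls \textsc{SampleRoots}, which costs $O(m)$ by \cref{thm:sampling} and returns $|T| \le 2m/\volest$ roots. For each root, the initialization and run of \textsc{IncrementalRootedVC} with parameter $2\volest$ costs $O(\volest \cutsize^2)$ plus its insertions, by \cref{thm:local_vc}; its precondition $2\volest \le \frac{m}{24\cutsize}$ holds by the loop bound. This gives $O(m + (2m/\volest)\cdot \volest\cutsize^2) = O(m\cutsize^2)$ per repetition. Over $O(\log m) = O(\log n)$ values of $\volest$ and $2b$ repetitions, the local phase of round $\cutsize$ costs $O(b\, m \cutsize^2 \log n)$.

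The flow phase is handled as follows. Sampling at $\volest' = \frac{m}{48\cutsize}$ costs $O(m)$ per repetition, and each set $T_i, \reversed{T_i}$ has size at most $2m/\volest' = O(\cutsize)$. Here the subtle point is that $m$ is the edge count at sampling time, which is exactly why \cref{thm:sampling}'s bound applies then. Each of the $O(\cutsize)$ pairs $(s,u_i)$ initializes an \textsc{IncrementalMaxFlowVC} at cost $O(\cutsize(m+\ell))$ by \cref{thm:maxflow_vc}. Its precondition that $G$ is $\cutsize$-connected holds because the round invariant guarantees no cut below $\cutsize$. So each repetition costs $O(\cutsize^2 m)$ plus $O(\cutsize)$ per processed insertion, and the flow phase of round $\cutsize$ costs $O(b\,m\cutsize^2)$.

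Summing, round $\cutsize$ costs $O(b\, m\cutsize^2\log n)$. Summing over $\cutsize \le \kappa+1$ gives $O(b\, m \kappa^3 \log n)$. Substituting $b$ yields $O(m\kappa^3\log^2 n)$ in the oblivious case and $O(m\kappa^4\log^2 n)$ in the adaptive case.

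The main obstacle is the bookkeeping around the changing $m$. We must make sure that using the final $m$ in the per-root bound is legitimate. That is, the number of roots is bounded via the $m$ at sampling time, while initialization costs are bounded by the final $m$, and each insertion is charged only once. We must also check that rounds with $\cutsize > \kappa$ never start, or are cut off. The algorithm only reaches $\cutsize = \kappa+1$, assuming correct outputs, and that final unfinished round costs the same bound. The case $\cutsize \ge n/2$, handled by recomputation, is likewise absorbed since $n^{2+o(1)}$ per insertion is within $\kappa^3$ there.
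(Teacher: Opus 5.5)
Your proposal is correct and follows essentially the same accounting as the paper: $O(m)$ for \cref{line:reachability} and the degree checks, $O(m\cutsize^2)$ per repetition of the $\volest$-loop, $O(\cutsize)$ max-flow structures of cost $O(\cutsize m)$ each per boosting iteration, all multiplied by $\boosts$, the $O(\log n)$ values of $\volest$, and the $O(\kappa)$ rounds. The only item the paper charges that you skip is the $O(\cutsize)$ cost of reporting each trivial cut on \cref{line:deg_report}, which totals $O(\kappa m)$ and is dominated by your bound.
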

\begin{proof}
    Processing edge insertions until $G$ is strongly connected takes $O(m)$ time in total by \cref{lemm:incReachability}.
    Since we assume that all outputs of \cref{alg:main} were correct,
    the for loop on \cref{line:kappa_loop} executes $O(\kappa)$ times. We call one iteration of this loop a \emph{round}.

    We start by bounding the cumulative running time of \crefrange{line:deg_check}{line:deg_report} over all rounds.
    By maintaining the degree of each vertex in an array,
    we can check or update the degree of each vertex in constant time, while reporting a trivial vertex cut takes $O(\cutsize)$ time.
    In each round, we perform a degree check at least once for every vertex, as well as after each edge insertion.
    Since we have at most $m$ edge insertions across $O(\kappa)$ rounds, these checks take $O(\kappa n + m) \subseteq O(m)$ time in total. 
    All reports occurs right before the next edge insertion, so they take $O(\kappa m)$ time in total.
    Hence, \crefrange{line:deg_check}{line:deg_report} cumulatively contribute $O(\kappa m)$ to the running time.

    We now analyze the cumulative running time of \crefrange{line:nu_loop}{line:loop_end} over all rounds. Consider one particular iteration of the repeat loop on \cref{line:first_boost}.
    By \cref{thm:sampling}, invoking $\textsc{SampleRoots}(\cutsize, \volest)$ takes $O(m)$ time and
    yields a set of terminals $T$ with $|T| \in O(\frac{m}{\nu})$.
    By \cref{thm:local_vc}, a single $\textsc{IncrementalRootedVC}(s, \cutsize, 2 \volest)$ structure takes $O(\nu \kappa^2 + \ell)$ time in total, where $\ell$ is the number of edge insertions that happen while this structure exists.
    Since only one of these structures is active at a time, the $O(\ell)$ terms sum to $O(m)$.
    Thus, the running time of one iteration of the repeat loop on \cref{line:first_boost} is $O(m + |T| \cdot \nu \kappa^2 + m) \subseteq O(m \kappa^2)$.
    In every round, it is repeated $\boosts$ times for both the reversed and original graph for $O(\log n)$ different values of $\nu$ (loop in \cref{line:nu_loop}), resulting in a total of $O(\kappa \cdot \boosts \cdot \log n)$ iterations.
    If the adversary is oblivious,
    then $\boosts \in O(\log n)$, so the number of iterations is $O(\kappa \log^2 n)$ and the total cumulative running time of \crefrange{line:nu_loop}{line:loop_end} is $O(m \kappa^3 \log^2 n)$.
    If the adversary is randomness-adaptive,
    then $\boosts \in O(\kappa \log n)$, so the number of iterations is $O(\kappa^2 \log^2 n)$ and the total cumulative running time of \crefrange{line:nu_loop}{line:loop_end} is $O(m \kappa^4 \log^2 n)$.

    Next, we analyze \crefrange{line:middle_boost}{line:sampling_end}.
    By \cref{thm:sampling,lemm:unbiased_sampling}, the calls to $\textsc{SampleRoots}(\cutsize, \frac{m}{48 \cutsize})$ and $\textsc{UnbiasedSample}()$ take $O(m)$ time.
    Across all rounds, these lines get executed $\kappa \cdot \boosts$ times,
    so their total contribution to the running time is $O(m \kappa^2 \log n)$.
    
    Finally, we analyze \crefrange{line:boost_flows}{line:second_flow_end}.
    By \cref{thm:sampling}, we have $|T_i| \in O(\kappa)$, $|\reversed{T_i}| \in O(\kappa)$.
    By \cref{thm:maxflow_vc}, a single $\textsc{IncrementalMaxFlowVC}(s, t, \cutsize)$ structure takes $O(\kappa m)$ time in total.
    Thus, a single iteration of either the for loop on \cref{line:first_flow_loop} or the for loop on \cref{line:second_flow_loop} takes $O(\kappa m)$ time.
    Across all rounds, the total number of such iterations is $O(\kappa \cdot \boosts \cdot (|T_i| + |\reversed{T_i}|)) = O(\kappa^2 \cdot \boosts)$.
    If the adversary is oblivious,
    then $\boosts \in O(\log n)$, so the number of iterations is $O(\kappa^2 \log n)$ and the total cumulative running time of \crefrange{line:boost_flows}{line:second_flow_end} is $O(m \kappa^3 \log n)$.
    If the adversary is randomness-adaptive,
    then $\boosts \in O(\kappa \log n)$, so the number of iterations is $O(\kappa^3 \log n)$ and the total cumulative running time of \crefrange{line:boost_flows}{line:second_flow_end} is $O(m \kappa^4 \log n)$.
\end{proof}

\subsection{Correctness}

We now show the correctness of \cref{alg:main}, assuming that \cref{thm:sampling,thm:local_vc,lemm:unbiased_sampling,thm:maxflow_vc} hold.
We assume without loss of generality that the graph is initially empty.
If the initial graph is not empty, then we insert its edges one by one, and ignore the cuts output by the algorithm.

We show results both for an oblivious adversary that has to fix a sequence of insertions in advance and a randomness-adaptive adversary that can change its mind depending on the internal randomness our algorithm uses.
At any point in time, the interaction between our algorithm and the adversary can be described by a sequence $e_1, \dots, e_\ell$
of edges that have been inserted, and a sequence of cuts $S_1, \dots, S_\ell$ that our algorithm has output.
We show that $S_i$ is a minimum vertex cut of the graph with edges $e_1, \dots, e_i$ with high probability.
We can model a randomness-adaptive adversary as a function $\cA$ that takes the sequence $e_1, \dots, e_\ell$, as well as the sequence of random bits used by the algorithm so far,
and outputs a new edge $e_{\ell+1}$ that is inserted.

\begin{definition}
    We say that \cref{alg:main} \emph{succeeds} if, after every edge insertion, it outputs a set $S$ that is a minimum vertex cut in the current graph.
\end{definition}
\begin{theorem}[Correctness of \cref{alg:main}] \label{thm:main_correct}
    Given a graph $G$ on $n$ vertices undergoing $m$ edge insertions supplied by an adversary $\cA$,
    \cref{alg:main} succeeds with probability at least $1 - n^{-5}$,
    when either (1) $G$ is directed or undirected, $\cA$ is oblivious, and $\boosts \ge 27 \log n$ or (2) $G$ is undirected, $\cA$ is randomness-adaptive, and $\boosts \ge \kappa \cdot (33 \log n + 288)$.
\end{theorem}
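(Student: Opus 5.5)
The plan is to observe that \cref{alg:main} can only fail in one way: it moves on to the next value of $\cutsize$ while a $\cutsize$-separation triple still exists. Every cut it reports is a genuine vertex cut of size $\cutsize$. This holds for trivial cuts on \cref{line:deg_report}, for \cref{thm:local_vc} (whose reported cut is always valid, since only the $\bot$ answers can be wrong), and for \cref{thm:maxflow_vc}, which is deterministic. Whenever the algorithm works at level $\cutsize$, the graph is $\cutsize$-connected by the invariant. So the reported cut is minimum as long as no premature termination happened.

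It therefore suffices to bound, for each round $\cutsize$ and each \emph{checkpoint}, the probability that the checkpoint is passed incorrectly. The checkpoints are of three kinds: the end of a phase $\volest$ for $G$ or $\reversed{G}$, the end of the trivial-cut loop, and the end of the max-flow stage. Correct passing means the following.
\begin{itemize}
\item The trivial-cut loop leaves $\mindegout, \mindegin > \cutsize$; this step is deterministic.
\item Inductively, at the start of phase $\volest$ both $G$ and $\reversed{G}$ are $(\cutsize,\volest)$-ready, and at its end they are $(\cutsize,2\volest)$-ready. Readiness is monotone under insertions, because insertions only shrink the set of separation triples, so readiness once achieved persists.
\item After the last phase every remaining $\cutsize$-triple has $\volout{L},\volin{R} \ge \frac{m}{48\cutsize}$ at the sampling moment. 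The max-flow stage must then destroy them all.
\end{itemize}
There are $O(\cutsize \log m)\le O(n^2)$ checkpoints in total. Showing that each fails with probability at most $n^{-8}$ gives a total failure probability of at most $n^{-5}$ by a union bound.

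\textbf{Oblivious case.} Fix a phase and one of its $\boosts$ iterations. The insertion sequence is fixed in advance, but the times at which the algorithm consumes insertions depend on its randomness. To handle this, I will consider the moment $\tau$ at which the iteration starts and the graph $G_\tau$ at that moment. Let $(L,S,R)$ be the separation triple that is the \emph{last} one surviving among the triples in $G_\tau$ with $\volout{L}<2\volest$; the ordering is taken along the fixed future insertion sequence, and ties are broken canonically. This triple depends only on the insertion prefix, not on the samples drawn in this iteration. Since $\volout{L}\ge \volest$ by readiness, \cref{thm:sampling} places some $s\in L\cap T$ with probability $\ge\frac12$. Given that, \cref{thm:local_vc} at parameter $2\volest$ outputs $\bot$ only after this triple dies, again with probability $\ge\frac12$. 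Because the triple is the last one, its death certifies $(\cutsize,2\volest)$-readiness. Each iteration therefore succeeds with probability $\ge\frac14$, independently of previous iterations conditioned on the history. With $\boosts\ge 27\log n$ iterations the failure probability is $(3/4)^{27\log n}\le n^{-11}$.

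One subtlety needs care: iterations whose preceding iterations already achieved readiness are harmless. In addition, the last triple must be defined relative to $\tau$, so that it is independent of the fresh randomness. The max-flow stage is handled the same way. Take the last surviving triple at the sampling moment. \cref{lemm:unbiased_sampling} puts $u_i\in L\cup R$ with probability $\ge\frac12$. If $u_i\in R$, then $T_i$ hits $L$ with probability $\ge\frac12$, since $\volout{L}\ge\frac{m}{48\cutsize}$; the case $u_i\in L$ is symmetric via $\reversed{T_i}$. The deterministic \cref{thm:maxflow_vc} for the pair $(s,u_i)$ then keeps reporting until no $(s,u_i)$ cut of size $\cutsize$ exists, which kills this triple. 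So some $i$ succeeds except with probability $(3/4)^{\boosts}$.

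\textbf{Randomness-adaptive case (undirected).} The adversary may choose the last triple after seeing the samples, so the argument above cannot fix a single triple. Instead I will union bound over all candidates. At the start of an iteration the graph is $(\cutsize,\volest)$-ready, so by \cref{lemm:triple_count} there are at most $n^{\cutsize+1}$ triples that are $(\cutsize,<2\volest)$-separation triples. For each fixed such triple, sampling hits $L$ with probability $\ge\frac12$ using only the fresh randomness. \cref{thm:local_vc} is stated against a randomness-adaptive adversary, so it fails on that triple with probability $\le\frac12$. Each iteration thus handles any fixed triple with probability $\ge\frac14$.

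The main obstacle is that the boosting iterations are sequential and the graph changes between them. I would therefore fix the set $\mathcal{T}$ of candidate triples at the start of the phase; its size is at most $n^{\cutsize+1}$. Since insertions never create new triples and volumes only increase, every triple relevant later already lies in $\mathcal{T}$ (or has volume at least $2\volest$ and is irrelevant). For each $(L,S,R)\in\mathcal{T}$, the probability that it survives all $\boosts$ iterations while undetected is at most $(3/4)^{\boosts}$. This follows from a martingale or conditional product argument, because each iteration's fresh randomness is independent of the adversary's past. With $\boosts\ge\cutsize(33\log n+288)$, we get
\[ n^{\cutsize+1}(3/4)^{\boosts}\le n^{-10}. \]
This uses $\log_2(4/3)\approx 0.415$, so that $0.415\cdot 33>(\cutsize+11)/\cutsize$ after absorbing constants; the $+288$ term covers the $O(1)$ slack. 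The max-flow stage is handled the same way with the union bound over $\mathcal{T}$, where triples of $(\cutsize,<2\volest)$ type at the final $\volest$ again number at most $n^{\cutsize+1}$ by \cref{lemm:triple_count}. The remaining work is checking the exact constants.
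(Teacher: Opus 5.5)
Your outer structure matches the paper's: every reported set is a genuine $\cutsize$-cut, readiness is pushed from $\volest$ to $2\volest$ phase by phase, then comes the max-flow stage, then a union bound. Your oblivious case and your adaptive $\volest$-loop are fine.

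\textbf{The gap is the randomness-adaptive max-flow stage.} There you union bound over ``triples of $(\cutsize,<2\volest)$ type at the final $\volest$'' via \cref{lemm:triple_count}. But after the last phase the graph is already $(\cutsize,2\volest)$-ready, so no such triples remain. The triples this stage must destroy are \emph{all} $\cutsize$-separation triples alive at the sampling moment. Readiness gives only lower bounds $\volout{L},\volin{R}\ge\frac{m}{48\cutsize}$ and no upper bound on $\volout{L}$, so \cref{lemm:triple_count} does not apply to them. Its proof uses $\volout{L}<2\volest$ to force $L$ to be a single component of $G\setminus S$. Here $L$ can be any union of components, and for a fixed $S$ there can be $2^{\Theta(\cutsize)}$ valid choices of $L$. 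For example, take $S$ completely joined to $\Theta(\cutsize)$ disjoint equal-size cliques; this count is not bounded by $n$ once $\cutsize\gg\log n$. The detection event for a triple depends on the whole bipartition: some $i$ with $s\in T_i\cap L$ and $u_i\in R$, or the symmetric event via $\reversed{T_i}$. So your union bound does not cover the triples that can actually survive.

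\textbf{The missing ingredient is \cref{lemm:triple_count_volume}.} In a $(\cutsize,\volest)$-ready undirected graph, every component $C$ of $G\setminus S$ yields a $\cutsize$-separation triple, so it has volume at least $\volest$. Hence there are at most $2m/\volest$ components and at most $n^{\cutsize}2^{2m/\volest}$ triples; for $\volest=\frac{m}{48\cutsize}$ this is $n^{\cutsize}2^{96\cutsize}$. This factor is exactly what the $+288$ in $\boosts$ pays for: $2^{-\boosts/3}\le n^{-11\cutsize}2^{-96\cutsize}$, so the union bound gives failure probability $n^{-10\cutsize}$. The $+288$ is therefore not ``$O(1)$ slack''; the $\volest$-loop alone only needs $\boosts\ge 33\cutsize\log n$. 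With this count in place of \cref{lemm:triple_count} for the max-flow stage, the rest of your argument goes through.
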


We split the proof of \cref{thm:main_correct} into two parts: Correctly maintaining the current vertex connectivity (\cref{thm:kappa_correct}),
and correctly outputting a vertex cut of that size (\cref{lemm:set_correct}).

\begin{lemma} \label{lemm:set_correct}
    Whenever \cref{alg:main} outputs some set $S$ of vertices, $S$ is a vertex cut and $|S| = \kappa$, where $\kappa$ is the variable of the for loop on \cref{line:kappa_loop}.
\end{lemma}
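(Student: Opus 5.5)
We argue by going through every place in \cref{alg:main} where a set is reported. There are three sources: the trivial cuts on \cref{line:deg_report}, the cuts reported by \textsc{IncrementalRootedVC} on \cref{line:local_vc}, and the cuts reported by \textsc{IncrementalMaxFlowVC} on \cref{line:first_flow,line:second_flow}. In each case the key point is that the data structure reports a set that is a genuine cut of the \emph{current} graph. We must then check that its size is exactly the current loop value $\kappa$.

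\emph{Trivial cuts.} Suppose $\dgout{v} = \cutsize$. Then $S = \nbrout{v}$ has exactly $\cutsize$ vertices, since the graph is simple. Take $L = \{v\}$, this $S$, and $R = V \setminus (S \cup \{v\})$. Then $E(L, R) = \emptyset$. Moreover, $R \neq \emptyset$ because $\cutsize \le \floor{n/2} < n-1$ for $n \ge 3$ (small $n$ is handled trivially). So $(L, S, R)$ is a separation triple and $S$ is a vertex cut. The in-degree case is symmetric: use $R = \{v\}$ and $S = \nbrin{v}$.

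\emph{Local cuts.} By \cref{thm:local_vc}, any reported set is the separator of an $s$-rooted \volcut{\cutsize}{\le}{6 \cdot 2\volest\cutsize}, so it has size exactly $\cutsize$ and is a vertex cut. Before using this, we must check the preconditions of \cref{thm:local_vc}:
\begin{itemize}
\item $\cutsize \le n/2$ holds by the range of the loop on \cref{line:kappa_loop}.
\item $2\volest \le \frac{m}{24\cutsize}$ holds because $\volest \le 2^{\floor{\log \frac{m}{48\cutsize}}} \le \frac{m}{48\cutsize}$, and $m$ only grows afterwards.
\item $\mindegout[G] > \cutsize$ holds because the loop on \cref{line:deg_check} guarantees that no vertex has in- or out-degree $\cutsize$ at the end of it. Since $G$ is $\cutsize$-connected at that point (assumed inductively), all degrees are already $\ge \cutsize$, hence $> \cutsize$, and degrees only increase afterwards.
\end{itemize}
For $\reversed{G}$, a cut in $\reversed{G}$ is also a cut in $G$, since separation triples just swap $L$ and $R$.

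\emph{Max-flow cuts.} \cref{thm:maxflow_vc} reports an $(s,t)$-vertex cut of size $\cutsize$, which is in particular a vertex cut. Its precondition is that $G$ is $\cutsize$-connected. We need to be careful here: the lemma asks only that every output is a cut of size $\kappa$, and it does not rely on this being minimum. The precondition matters only for the structure's running-time and correctness guarantee. To argue it, we note that $G$ is $(\cutsize-1)$-connected with certainty: any vertex cut of size $< \cutsize$ would have been a cut in an earlier round, and the earlier round ended only when no such cut existed. The gap is that this last claim holds only with high probability, so instead we would use the weaker fact that an $(s,t)$-vertex cut reported by a max-flow structure is always a genuine separator, regardless of connectivity.

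\emph{Main obstacle.} The delicate step is ensuring the \emph{size} is exactly $\cutsize$ and not merely at most $\cutsize$. For the data structures this follows by definition of what they report. For the trivial cuts it needs simplicity of the graph.
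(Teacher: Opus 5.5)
Your case analysis over \cref{line:deg_report,line:local_vc,line:first_flow,line:second_flow} is the paper's proof: each output is justified by the specification of the routine that produced it. Your extra checks are fine: the trivial cuts via simplicity, the hypotheses of \cref{thm:local_vc}, and the remark that a cut of $\reversed{G}$ is a cut of $G$.

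The max-flow paragraph is where the argument does not hold together. You use $\cutsize$-connectivity ``inductively'' to get $\mindegout[G] > \cutsize$. You then reject the same assumption for \cref{thm:maxflow_vc} because it only holds w.h.p., and fall back on ``a max-flow structure always reports a genuine separator.'' That fallback gives only that $S$ is a vertex cut, not that $|S| = \cutsize$. Without $\cutsize$-connectivity, the current flow value, and hence the size of the reported minimum cut, may be smaller than $\cutsize$. So your later appeal to ``exact size by definition'' is no longer available. Even the exactness in \cref{thm:local_vc} ultimately rests on this assumption, since \cref{lem:ec_to_vc} only shows $|\nbrout[G]{L}| \le \cutsize$.

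The consistent fix is the paper's stance. The lemma is about the routines' specifications under the round's standing assumption that $G$ is $\cutsize$-connected. This is exactly the event $\event$ under which the lemma is invoked in the proof of \cref{thm:main_correct}, and under it all four cases are immediate.

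Separately, $\mindegout[G], \mindegin[G] > \cutsize$ after \cref{line:deg_check} holds deterministically, so it need not lean on the probabilistic invariant. This follows by induction over rounds. Strong connectivity after \cref{line:reachability} gives all in- and out-degrees at least $1$. Each degree-check loop turns ``all degrees $\ge \cutsize$'' into ``all degrees $> \cutsize$''. Degrees never decrease.
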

\begin{proof}
    \cref{alg:main} only outputs a set $S$ on \cref{line:deg_report,line:local_vc,line:first_flow,line:second_flow}.
    Each of these lines lies in a while loop that terminates as soon as the underlying routine (finding a trivial vertex cut, $\textsc{IncrementalRootedVC}$, or $\textsc{IncrementalMaxFlowVC}$) fails to report a $\kappa$-vertex cut.
\end{proof}

\begin{theorem} \label{thm:kappa_correct}
    Fix some iteration $k$ of the for loop on \cref{line:kappa_loop} of \cref{alg:main},
    and assume that the graph is $k$-vertex connected at the start of the iteration.
    Then, at the end of the iteration, with probability at least $1 - n^{-6}$, the graph is $(k + 1)$-vertex connected.
\end{theorem}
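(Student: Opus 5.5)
We would prove \cref{thm:kappa_correct} by following the structure of iteration $k$ and showing an invariant at each stage, then union bounding over the failure events. We argue by contradiction: suppose that at the end of the iteration some $k$-separation triple $(L,S,R)$ still exists. Because each while loop only ends once its routine stops reporting a cut, we examine the moment each loop ended. Since being $(k,\nu)$-ready is monotone under edge insertions, a triple present at the end was also a valid triple at every earlier checkpoint, with $\volout{L}$ and $\volin{R}$ at most their final values.

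\textbf{Step 1 (trivial cuts).} After \crefrange{line:deg_check}{line:deg_report} we have $\mindegout[G],\mindegin[G] > k$, so the preconditions of \cref{thm:local_vc} hold from then on. Moreover, any triple must satisfy $\volout{L}\ge 1$.

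\textbf{Step 2 (induction over $\nu$).} We claim that after the $\nu$-iteration, both $G$ and $\reversed{G}$ are $(k,2\nu)$-ready, except with probability $n^{-7}$. Assume both are $(k,\nu)$-ready at the start of the $\nu$-iteration. Consider the state right after the last boost for $G$ finishes.

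\emph{Oblivious adversary.} The insertion sequence is fixed, so we can define $(L,S,R)$ as the last remaining \volcut{k}{<}{2\nu} before the point where none exist. Its $L$ has $\volout{L}\ge\nu$ by readiness. Fix one boost. By \cref{thm:sampling}, with probability $\ge\frac12$ some $s\in T\cap L$. Conditioned on this, \cref{thm:local_vc}, applied with parameter $2\nu$ and success probability $\ge\frac12$, keeps reporting until this triple disappears. Here we must check that it is still $s$-rooted with $\volout{L}\le 2\nu$ at the time that structure is active; this holds because the triple is the last one and volumes only grow. Hence each boost fails with probability at most $\frac34$, and $27\log n$ boosts give failure probability $(3/4)^{27\log n}\le n^{-11}$. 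We then union bound over the $O(\log n)$ values of $\nu$, the two orientations, and the final flow stage.

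\emph{Adaptive adversary (undirected).} We cannot name a last triple in advance, so instead we use \cref{lemm:triple_count}: at the start of the $\nu$-phase there are at most $n^{k+1}$ candidate \volcut{k}{<}{2\nu}s. New ones cannot appear, because readiness is monotone and the candidate sets only shrink. For each fixed triple, each boost uses fresh randomness that the adversary has not yet seen, so each boost catches it with probability $\ge\frac14$. With $k(33\log n+288)$ boosts, the failure probability per triple is at most $n^{-(k+1)-8}$, and a union bound over all triples finishes this case.

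\textbf{Step 3 (large volume).} After the loop over $\nu$ ends at $\nu^*\approx m/(48k)$, every remaining triple has $\volout{L},\volin{R}\ge m/(48k)$. At the sampling moment, \cref{lemm:unbiased_sampling} and \cref{thm:sampling} give, per $i$, probability $\ge\frac18$ that either $u_i\in R$ and $T_i\cap L\neq\emptyset$, or $u_i\in L$ and $\reversed{T_i}\cap R\neq\emptyset$. Then \cref{thm:maxflow_vc} deterministically reports until no $(s,t)$ $k$-cut remains. Since the sets $L$ and $R$ of a fixed triple are fixed, the hit persists, and such a triple cannot survive. The same boosting and union bound (over the $n^{k+1}$ triples in the adaptive case, or over the last triple in the oblivious case) gives the claim.

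\textbf{Main obstacle.} The main difficulty is the adaptive argument. We must justify that the per-boost success events are independent of the adversary's choices for a fixed triple, so that the $\frac14$ bound holds conditionally on all past randomness. We must also handle triples whose $\volout{L}$ grows past $2\nu$ mid-phase; such triples are simply deferred to the next $\nu$, and this is where readiness monotonicity is used.
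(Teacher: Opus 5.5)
Your overall structure (invariant over the $\nu$-loop, then the flow stage, then a union bound) matches the paper, and Step~2 is essentially the paper's argument. There is a genuine gap in Step~3 for the randomness-adaptive case.

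\textbf{Main gap: the triple count in Step~3.} You union bound over ``the $n^{k+1}$ triples'', but \cref{lemm:triple_count} only counts \volcut{k}{<}{2\nu}s. Its proof needs $\vol{L} < 2\nu$ to force $L$ to be a \emph{single} component of $G \setminus S$. In the flow stage the surviving triples only satisfy the lower bound $\volout{L} \ge \frac{m}{48k}$. So for a fixed $S$, the set $L$ may be any nonempty proper union of the $c$ components of $G \setminus S$, which gives up to $2^{c}-2$ choices. This is not bounded by $n$. For instance, take $k$ independent vertices joined completely to $48k$ disjoint edges. This graph is $(k,\frac{m}{48k})$-ready and has $n = 97k$, yet it has $2^{48k}-2$ separation triples, all with the same $S$; for $k=1$ that is $2^{48}-2 \gg n^{2}$.

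The missing ingredient is \cref{lemm:triple_count_volume}. Readiness at $\nu=\frac{m}{48k}$ means every component of $G\setminus S$ has volume at least $\frac{m}{48k}$, while the total volume is $2m$. Hence $c \le 96k$, and there are at most $n^{k}2^{96k}$ triples. This is exactly why \boosts{} contains the additive term $288k$: it gives $2^{-\boosts/3} \le 2^{-96k}n^{-11k}$, which absorbs the $2^{96k}$ factor.

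\textbf{Secondary gap: the per-iteration bound $\frac18$.} This bound in Step~3 is too weak for the fixed number of boosts. In the oblivious case, $(7/8)^{27\log n}\approx n^{-5.2}$, which gives neither the $n^{-7}$ needed for the flow stage nor the $n^{-6}$ of the theorem. In the adaptive case, $n^{k}2^{96k}(7/8)^{\boosts}$ does not in general fall below $n^{-7}$; for $k=1$ it is about $n^{-5.36}2^{40.5}$.

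The correct bound is $\frac14$. Because $T_i$, $\reversed{T_i}$ and $u_i$ are sampled independently, conditioning on $u_i\in R$ gives success probability at least $\Pr[T_i\cap L\neq\emptyset]\ge\frac12$. Symmetrically, conditioning on $u_i\in L$ gives at least $\frac12$. Hence the success probability is at least $\frac12\Pr[u_i\in L\cup R]\ge\frac14$, and $(3/4)^{\boosts}\le 2^{-\boosts/3}$ then yields the required bounds in both cases.
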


Before proving \cref{thm:kappa_correct}, we show how it implies \cref{thm:main_correct} by union bounding over all iterations of the outer for loop.
\begin{proof}[Proof of \cref{thm:main_correct} using \cref{lemm:set_correct,thm:kappa_correct}]
    Let $\cutsize$ denote the vertex connectivity of the final graph.
    Let $\event_{i}$ be the good event that the graph is $i$-vertex connected at the start of the $i$-th iteration
    of the for loop on \cref{line:kappa_loop},
    and let $\event = \event_1 \cap \dots \cap \event_{\kappa}$.
    \Cref{line:reachability} ensures that the graph is $1$-vertex connected at the start of the first iteration.
    Thus the good event $\event_1$ occurs with probability $1$.
    \cref{thm:kappa_correct} shows that $\Pr[\event_{i} \mid \event_{i - 1}] \geq 1 - n^{-6}$ for $2 \leq i \le \cutsize$. Therefore,
    \[
    \Pr[\ol{\event_i} \cap \event_{i-1}] =  \Pr[\ol{\event_i} \mid \event_{i-1}] \cdot \Pr[\event_{i-1}] \le \Pr[\ol{\event_i} \mid \event_{i-1}] = 1 - \Pr[\event_{i} \mid \event_{i - 1}] \le n^{-6}.
    \]
    If $\ol{\event}$ happens, at least one of the events $\ol{\event_1}$ or $\ol{\event_i} \cap \event_{i-1}$ for $2 \le i \le \cutsize$ happens. By the union bound,
    \[
    \Pr[\ol{\event}] \le \Pr[\ol{\event_1}] + \sum_{i=2}^{\cutsize} \Pr[\ol{\event_i} \cap \event_{i-1}] \le 0 +  (\cutsize-1) \cdot n^{-6} \le n^{-5},
    \]
    since $\kappa \le n$. Hence, $\Pr[\event] \ge 1 - n^{-5}$

    If the event $\event$ happens, then at any point during the execution of \cref{alg:main},
    the vertex connectivity of the current graph is equal to the value $i$ of the variable of the for loop on \cref{line:kappa_loop}.
    By \cref{lemm:set_correct}, \cref{alg:main} always outputs a vertex cut of size $i$.
    Thus, the algorithm succeeds whenever the event $\event$ happens.
\end{proof}

What remains is to show \cref{thm:kappa_correct}.
For the remainder of this section, we fix some iteration of the for loop on \cref{line:kappa_loop} of \cref{alg:main} with parameter $\cutsize$, and assume that the graph is $\cutsize$-vertex connected at the start of the iteration.
We split the proof into two parts, first reasoning about the first part of the round, that is, the for loop on \cref{line:nu_loop}, and then reasoning about the second part of the round, that is, the for loop on \cref{line:boost_flows}.
In particular, we will show the following two lemmas.

\begin{restatable}{lemma}{lemmNuLoop} \label{lemm:correct_nu_loop}
    Suppose that either (1) $G$ is directed or undirected, $\cA$ is oblivious, and $\boosts \ge 27 \log n$ or (2) $G$ is undirected, $\cA$ is randomness-adaptive, and $\boosts \ge 33 \kappa \log n$.
    Then, with probability at least $1 - n^{-7}$, the for loop on \cref{line:nu_loop} of \cref{alg:main} only terminates once both the graph and the reversed graph are \ready{\cutsize}{\frac{m}{48 \cutsize}}.
\end{restatable}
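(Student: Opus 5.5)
We argue by induction over the phases of the for loop on \cref{line:nu_loop}, with a union bound over all phases and all boosting repetitions. The invariant is: with small failure probability, at the end of the phase with parameter $\volest$, both $G$ and $\reversed{G}$ are \ready{\cutsize}{2\volest}. Readiness is monotone under edge insertions: a separation triple in the new graph is one in the old graph, and volumes only grow. So once $G$ is \ready{\cutsize}{2\volest} it stays so for the rest of the round, and it suffices to prove the claim at the moment each inner loop ends. The base case $\volest = 1$ is fine because $\volout{L} \ge 1$ for $L\neq\emptyset$ in a $\cutsize$-connected graph (and the degree-check loop on \cref{line:deg_check} has removed trivial cuts, so $\mindegout[G] > \cutsize$ as \cref{thm:local_vc} requires). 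The last phase has $2\volest > \frac{m}{48\cutsize}$ for the value of $m$ at that moment, which gives the lemma.

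Fix a phase $\volest$ and the side $G$; the side $\reversed{G}$ is symmetric. At the start, $G$ is \ready{\cutsize}{\volest}, so every $\cutsize$-separation triple has $\volout{L}\ge \volest$. Consider one of the $\boosts$ repetitions. For a fixed triple $(L,S,R)$ with $\volout{L} < 2\volest$ that still exists when the repetition ends, \cref{thm:sampling} gives $\Pr[L\cap T=\emptyset]\le\frac12$. If $s\in L\cap T$, then \cref{thm:local_vc} with parameter $2\volest$ stops outputting cuts only once $(L,S,R)$ is no longer a valid \volvertcut{\cutsize}{\le}{2\volest}{s}, except with probability $\frac12$. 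A triple that exists at the end existed throughout, because triples are only destroyed by insertions. So the per-repetition failure probability for this triple is at most $\frac34$.

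\emph{Oblivious case.} The insertion sequence is fixed, so the triple that is "last alive" when the repetition's loop ends is determined independently of the algorithm's randomness for that repetition. More precisely, condition on the past; the future edge sequence is fixed. Consider the first repetition after which no such triple remains. If the phase ends with a triple surviving, then in each of the $\boosts$ repetitions some fixed triple was missed. Let $\tau$ be the last triple with $\volout{L}<2\volest$ to be destroyed in the fixed sequence, or one surviving forever. Each repetition misses $\tau$ with probability $\le\frac34$, independently given the fixed sequence. So the failure probability is $(\frac34)^{\boosts}\le n^{-27\log(4/3)}\le n^{-11}$. Union bounding over $O(\log m)\le O(\log n^2)$ phases and both directions gives $n^{-7}$.

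\emph{Randomness-adaptive undirected case.} Here we cannot fix $\tau$. Instead we apply \cref{lemm:triple_count}: at the start of the phase, $G$ is \ready{\cutsize}{\volest}, so at most $n^{\cutsize+1}$ triples are \volcut{\cutsize}{<}{2\volest}s. Since insertions only destroy triples and do not decrease volumes, no new ones appear. For each such triple, each repetition uses fresh random bits that the adversary has not yet seen. So, conditional on the history, the probability that the triple survives the repetition is $\le\frac34$, and the chain rule gives $(\frac34)^{\boosts}$ overall. With $\boosts\ge 33\cutsize\log n$ this is $\le n^{-13\cutsize}$. A union bound over $n^{\cutsize+1}$ triples, $O(\log n)$ phases, and two directions yields $n^{-7}$.

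The main obstacle is the adaptive conditioning step. The adversary may see $T$ and the data structure state, so the guarantee in \cref{thm:local_vc} must hold against an adaptive adversary for a fixed triple, which that theorem indeed states. The guarantee in \cref{thm:sampling} is used only at sampling time, before any adversarial reaction. I would formalize this as a martingale/chain-rule argument over repetitions. I would also double-check the subtlety that $m$ changes between sampling and use, which affects only the loop bound and not the guarantee.
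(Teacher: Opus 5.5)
Your proposal is correct and follows essentially the same route as the paper. The common ingredients are a per-repetition survival bound of $\frac34$ from \cref{thm:sampling} and \cref{thm:local_vc}, boosting via the chain rule, the longest-surviving triple in the oblivious case, a union bound over the at most $n^{\kappa+1}$ triples of \cref{lemm:triple_count} in the adaptive undirected case, and monotonicity of readiness under insertions. The only difference is cosmetic: you combine the phases and the two directions by a union bound over per-phase bad events rather than the paper's product over a chain of conditional events, which works equally well and avoids needing the paper's Markov-property remark.
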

\begin{restatable}{lemma}{lemmFlowLoop} \label{lemm:correct_flow_loop}
    Suppose that either (1) $G$ is directed or undirected, $\cA$ is oblivious, and $\boosts \ge 27 \log n$ or (2) $G$ is undirected, $\cA$ is randomness-adaptive, and $\boosts \ge \kappa \cdot (33 \log n + 288)$.
    Assume that the graph and the reversed graph are \ready{\cutsize}{\frac{m}{48 \cutsize}} at the start of the for loop on \cref{line:boost_flows} of \cref{alg:main}.
    Then, with probability at least $1 - n^{-7}$, this for loop only terminates once the graph is $(\cutsize + 1)$-vertex connected.
\end{restatable}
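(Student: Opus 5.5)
The plan is to analyze the second part of the round, the for loop on \cref{line:boost_flows}, by fixing the moment right before the loop terminates and arguing about the $\kappa$-separation triples that still exist then. Since the graph and $\reversed{G}$ are \ready{\cutsize}{\frac{m_0}{48\cutsize}} at the start of the loop, where $m_0$ is the edge count at the moment the samples $T_i,\reversed{T_i},u_i$ were drawn, readiness is monotone under insertions. Hence every $\kappa$-separation triple $(L,S,R)$ that exists at any later time satisfies $\volout{L}\ge \frac{m_0}{48\cutsize}$ and $\volin{R}\ge\frac{m_0}{48\cutsize}$, with volumes measured in the current graph. One subtlety needs care: \cref{thm:sampling} was applied to the graph at sampling time, and volumes only grow. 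I will therefore use readiness measured at sampling time, and show that if $(L,S,R)$ is a triple at a later time, then it was also a separation triple at sampling time (edges are only added, so $E(L,R)=\emptyset$ then too). By readiness at that time, its volumes were already at least $\frac{m_0}{48\cutsize}$.

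Next, fix any $\kappa$-separation triple $(L,S,R)$ of the graph at sampling time. For a single index $i$, I will show that with probability at least $\frac18$ some pair $(s,t)$ processed in iteration $i$ has either $s\in L$ and $t=u_i\in R$ in $G$, or the symmetric situation in $\reversed{G}$. By \cref{lemm:unbiased_sampling}, $u_i\in L\cup R$ with probability $\ge\frac12$. If $u_i\in R$, then by \cref{thm:sampling} applied to $A=L$, the set $T_i$ meets $L$ with probability $\ge\frac12$. If $u_i\in L$, then in $\reversed{G}$ the triple $(R,S,L)$ has $\volop^{+}_{\reversed G}(R)=\volin{R}$ large, so $\reversed{T_i}$ meets $R$ with probability $\ge\frac12$. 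These samples are independent, giving $\ge\frac14$. Independence across the $\boosts$ indices then makes the failure probability at most $(3/4)^{\boosts}$. For the oblivious case I take $(L,S,R)$ to be the last $\kappa$-triple alive before the loop ends; that triple is determined by the fixed insertion sequence together with the algorithm's outputs, and I need to argue the outputs carry no information about the samples. I expect that step to be delicate. For the randomness-adaptive undirected case, I union bound over all candidate triples, using \cref{lemm:triple_count} with $\nu=\frac{m_0}{96\kappa}$ or by counting directly: at most $n^{\kappa+1}$ triples, since in undirected graphs $L$ is a union of components of $G\setminus S$ and large volume restricts the choices. With $\boosts\ge\kappa(33\log n+288)$, $(3/4)^{\boosts}n^{\kappa+1}\le n^{-7}$.

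Finally, given a good pair $(s,t)$ for $(L,S,R)$, I argue that the loop cannot terminate while $(L,S,R)$ is still a separation triple. When the algorithm reaches pair $(s,t)$, \cref{thm:maxflow_vc} reports an $(s,t)$-vertex cut of size $\kappa$ as long as one exists, and the while loop keeps processing insertions. So after the pair finishes, no $(s,t)$ $\kappa$-cut exists, and in particular $(L,S,R)$ is dead; triples never revive under insertions. Since every triple that exists at the end existed at sampling time, and each such triple is killed with high probability, the loop ends only when the graph is $(\kappa+1)$-connected.

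The main obstacle is the counting step for the adaptive case. \cref{lemm:triple_count} counts $(\kappa,<2\nu)$ triples, but here the triples have large volume, so I would instead count via $S$ ($\le n^\kappa$ choices) times the choice of $L$. In undirected graphs, $L$ must be a union of components of $G\setminus S$. I must then bound the number of unions having large volume on both sides, perhaps using minimality of $\kappa$ so that components attach to all of $S$, or else absorb this into the constant $288\kappa$.
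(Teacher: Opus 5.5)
Your overall structure matches the paper's proof. The shared ingredients are the per-index bound from \cref{lemm:unbiased_sampling} and \cref{thm:sampling}, independence across the $\boosts$ indices, the fact that insertions never create separation triples, and the fact that \cref{thm:maxflow_vc} keeps processing insertions until a triple with a good pair is dead.

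The genuine gap is the counting step in the randomness-adaptive case, and the route you propose for it fails. \cref{lemm:triple_count} only counts triples with $\vol{L}<2\nu$. With $\nu=\frac{m_0}{96\kappa}$, a $(\kappa,\frac{m_0}{48\kappa})$-ready graph has no such triples at all, so the lemma says nothing about the large-volume triples you must union bound over. The bound $n^{\kappa+1}$ is also false here, because $L$ can be any nonempty proper union of components of $G\setminus S$. For example, let $\kappa=1$ and take $64$ triangles sharing a vertex $v$. Then $m=192$ and every component of $G\setminus\{v\}$ has volume $4=\frac{m}{48}$, so the graph is $(1,\frac{m}{48})$-ready. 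Yet it has $2^{64}-2$ separation triples with $S=\{v\}$, vastly more than $n^2=129^2$.

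The missing idea (the paper's \cref{lemm:triple_count_volume}) is to apply readiness to each component separately. Every component $C$ of $G\setminus S$ gives its own $\kappa$-separation triple $(C,S,V\setminus(C\cup S))$, so $\vol{C}\ge\frac{m_0}{48\kappa}$. Since the total volume is $2m_0$, there are at most $96\kappa$ components. Hence there are at most $2^{96\kappa}$ choices of $L$ per $S$, and at most $n^{\kappa}2^{96\kappa}$ triples in total. This is exactly what the $288\kappa$ term in $\boosts$ is for: $2^{-\boosts/3}\le n^{-11\kappa}2^{-96\kappa}$, so the union bound leaves failure probability at most $n^{-10\kappa}\le n^{-7}$. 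The fact that each component attaches to all of $S$ does not bound the number of components and is not what is needed.

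In the oblivious case, the step you call delicate disappears once you choose the triple correctly. ``The last triple alive before the loop ends'' depends on the loop's termination time, which depends on the very samples you want to apply \cref{thm:sampling} to. Arguing that the outputs carry no information about the samples is neither true nor needed. Instead, conditioned on the history up to the sampling step, fix the $\kappa$-separation triple that survives longest under the predetermined insertion sequence. This choice is a function of the current graph and the fixed sequence only, so the fresh samples $T_i,\reversed{T_i},u_i$ are independent of it and the single-triple bound $(3/4)^{\boosts}\le n^{-7}$ applies. Because insertions never create triples, the graph is $(\kappa+1)$-connected exactly when this triple dies.

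Finally, the per-index success probability must be $\frac14$, not the $\frac18$ you first announce. With $\frac18$ and $\boosts=\ceil{27\log n}$ you would only get failure probability about $n^{-5.2}$.
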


\begin{proof}[Proof of \cref{thm:kappa_correct} from \cref{lemm:correct_nu_loop,lemm:correct_flow_loop}]
    By \cref{lemm:correct_nu_loop}, we have that with probability at least $1 - n^{-7}$, the for loop on \cref{line:nu_loop} of \cref{alg:main} only terminates once the graph and the reversed graph are \ready{\cutsize}{\frac{m}{48 \cutsize}}.
    Condition on this event.
    Then, by \cref{lemm:correct_flow_loop}, we have that with probability at least $1 - n^{-7}$, the for loop on \cref{line:boost_flows} of \cref{alg:main} only terminates once the graph is $(\cutsize + 1)$-vertex connected.
    By union bounding over these two events, we have that with probability at least $1 - 2 n^{-7} \geq 1 - n^{-6}$, the good event that the graph is $(\cutsize + 1)$-vertex connected by the end of the iteration occurs, as desired.
\end{proof}

\subsection{Proof of \cref{lemm:correct_nu_loop}}
We first analyze a single iteration of the repeat loop on \cref{line:first_boost}.

\begin{claim} \label{clm:correct_single_cut}
    Let $G'$ be a directed graph on $n$ vertices and $m$ edges and let $\cA$ be a randomness-adaptive adversary.
    Fix an integer $\volest \leq \frac{m}{48 \cutsize}$,
    and let $(L, S, R)$ be any \volcut{\cutsize}{<}{2 \volest} in $G'$ with $\volout[G']{L} \geq \volest$.
    Then, with probability at least $\frac{1}{4}$, after executing \crefrange{line:loop_start}{line:loop_end} of \cref{alg:main} on $G'$ with $\cA$, the triple $(L, S, R)$ is no longer a valid \volcut{\cutsize}{<}{2 \volest}, that is, either $\volout[G']{L} \geq 2 \volest$, or $(L, S, R)$ is no longer a separation triple because an edge from $L$ to $R$ has been inserted.
\end{claim}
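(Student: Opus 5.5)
The plan is to combine the two constant-probability guarantees, \cref{thm:sampling} and \cref{thm:local_vc}, each succeeding with probability at least $\frac12$, so that the success probability is at least $\frac14$. Fix $(L, S, R)$ as in the claim; it is fixed \emph{before} \cref{line:loop_start} is executed. So its choice is independent of the fresh random bits used by \textsc{SampleRoots} in this iteration and by the data structures initialized afterwards, even though $\cA$ is randomness-adaptive.

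\emph{Step 1 (sampling).} Since $\volout[G']{L} \ge \volest$ at the time of sampling, \cref{thm:sampling} gives $\Pr[L \cap T \neq \emptyset] \ge \frac12$. Condition on this event and fix some $s \in L \cap T$. If several exist, take the first one in the order the for loop on \cref{line:terminal_loop} processes $T$; this order is determined by the sampled $T$.

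\emph{Step 2 (local structure).} Consider the moment when the loop on \cref{line:terminal_loop} reaches $s$. There are two cases.

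If $(L, S, R)$ has already become invalid by then, through an inserted $L \to R$ edge or by $\volout{L}$ reaching $2\volest$, we are done. Validity is monotone: once an $L\to R$ edge exists or the volume is $\ge 2\volest$, it stays so under insertions.

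Otherwise $(L,S,R)$ is an $s$-rooted \volvertcut{\cutsize}{<}{2\volest}{s}, in particular an \volvertcut{\cutsize}{\le}{2\volest}{s}. We now apply \cref{thm:local_vc} with volume parameter $2\volest$, checking its hypotheses:
\begin{itemize}
\item $2\volest \le \frac{m}{24\cutsize}$ follows from $\volest \le \frac{m}{48\cutsize}$, and $m$ only grows.
\item $\cutsize \le \frac n2$ holds by the loop range on \cref{line:kappa_loop}.
\item $\mindegout[G'] > \cutsize$ holds because \crefrange{line:deg_check}{line:deg_report} removed all trivial cuts of size $\cutsize$, and the graph is $\cutsize$-connected so degrees are $\ge\cutsize$. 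For the reversed graph the in-degree condition gives the same.
\end{itemize}
The randomness of this data structure is drawn fresh at its initialization, after $(L,S,R)$ and $s$ are determined, so \cref{thm:local_vc} applies to this particular triple even against $\cA$. With probability $\ge\frac12$, the structure outputs $\bot$ only once $(L,S,R)$ is no longer a valid \volvertcut{\cutsize}{\le}{2\volest}{s}. The while loop on \cref{line:local_vc} exits exactly upon $\bot$, so at that point either an $L\to R$ edge exists or $\volout{L} > 2\volest \ge 2\volest$. Either way the triple is no longer a valid \volcut{\cutsize}{<}{2\volest}, and it stays invalid for the remaining iterations.

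\emph{Main obstacle.} The delicate point is the conditioning in Step 2. The triple's status at the time $s$ is reached depends on the adversary's adaptive insertions during earlier terminals' structures. The argument must therefore apply \cref{thm:local_vc} conditionally on the whole history up to the initialization of $s$'s structure, using that its guarantee holds for any triple valid at initialization against any randomness-adaptive adversary. The Step 1 event, together with the per-history bound from Step 2, then multiplies to give $\frac12\cdot\frac12=\frac14$.
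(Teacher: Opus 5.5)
Your proposal is correct and follows the same route as the paper. \cref{thm:sampling} gives $L \cap T \neq \emptyset$ with probability at least $\frac12$; then \cref{thm:local_vc}, with volume parameter $2\volest$ and justified by the degree checks guaranteeing $\mindegout[G] > \cutsize$, keeps the while loop for $s$ running with probability at least $\frac12$; the chain rule gives $\frac14$. Your explicit handling of the case where the triple becomes invalid before $s$ is reached, and of conditioning on the history up to the initialization of $s$'s structure, spells out what the paper's chain-rule step leaves implicit.
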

\begin{proof}
    We first note that, after \crefrange{line:deg_check}{line:deg_report}, we have $\mindegout[G] > \kappa$ and $\mindegout[\reversed{G}] = \mindegin[G] > \kappa$. Therefore, we may use \cref{thm:local_vc} and the \textsc{IncrementalRootedVC} data structure here.
    
    Let $\event$ denote the event that the call to \textsc{SampleRoots} on \cref{line:loop_start} of \cref{alg:main} returns a set $T$ with $L \cap T \neq \emptyset$.
    By \cref{thm:sampling}, $\Pr[\event] \ge \frac{1}{2}$.
    Crucially, this holds for any randomness-adaptive adversary $\cA$, since this sampling step happens before any edge insertions are performed on $G'$, and since it only depends on the current graph $G'$ (and not on what future edge insertions are performed).
    
    Suppose that $\event$ happens and let $s \in L \cap T$ be a vertex in this intersection.
    Then, by \cref{thm:local_vc}, with probability at least $\frac{1}{2}$, for any randomness-adaptive adversary $\cA$, the call to \textsc{IncrementalRootedVC} on \cref{line:local_vc} with argument $s$ will only output $\bot$ once $(L, S, R)$ is no longer a valid \volvertcut{\cutsize}{<}{2 \volest}{s}. 
    Hence, in the iteration of the for loop on \cref{line:terminal_loop} with $u = s$, the while loop on \cref{line:local_vc} does not terminate until $(L, S, R)$ is no longer a valid \volcut{\cutsize}{<}{2 \volest}.
    By the chain rule of probability, we thus have that with probability at least $\frac{1}{4}$, \crefrange{line:loop_start}{line:loop_end} of \cref{alg:main} will only terminate once $(L, S, R)$ is no longer a valid \volcut{\cutsize}{<}{2 \volest}, as desired.
\end{proof}

With this claim, we now proceed by showing that the for loop on \cref{line:first_boost} boosts the probability of success high enough so that we can union bound later.
\begin{claim} \label{clm:correct_first_boost}
    Let $G'$ be a directed graph on $n$ vertices and $m$ edges and let $\cA$ be a randomness-adaptive adversary.
    Fix some arbitrary $\volest \leq \frac{m}{48 \cutsize}$, and let $(L, S, R)$ be any \volcut{\cutsize}{\geq}{\volest} in $G'$.
    Then, with probability at least $1 - 2^{-\boosts/3}$, after executing the for loop on \cref{line:first_boost} of \cref{alg:main} on $G'$ with $\cA$, the triple $(L, S, R)$ is no longer a valid \volcut{\cutsize}{<}{2 \volest}, that is, either $\volout[G']{L} \geq 2 \volest$ or an edge from $L$ to $R$ has been inserted.
\end{claim}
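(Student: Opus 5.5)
We prove \cref{clm:correct_first_boost} by iterating \cref{clm:correct_single_cut} over the $\boosts$ iterations of the repeat loop on \cref{line:first_boost}, with a stopping-time style argument. Index the iterations $j = 1, \dots, \boosts$ and let $\event_j$ be the good event that, at the end of iteration $j$, the triple $(L, S, R)$ is no longer a valid \volcut{\cutsize}{<}{2\volest}. Both ways of invalidation, namely $\volout{L} \ge 2\volest$ and the insertion of an edge from $L$ to $R$, are monotone under edge insertions. Hence once $\event_j$ happens, $\event_{j'}$ holds for every $j' \ge j$. It therefore suffices to show that the event that $\event_j$ fails for all $j$ has probability at most $2^{-\boosts/3}$.

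Fix an iteration $j$ and condition on the entire history up to its start: the random bits used so far and the edges inserted so far. Assume $\event_{j-1}$ has failed, so $(L, S, R)$ is still a valid \volcut{\cutsize}{<}{2\volest} in the current graph. We must also keep the hypothesis $\volout{L} \ge \volest$ of \cref{clm:correct_single_cut}. It holds at the start and is preserved, since out-volumes only grow. We also have $\volest \le \frac{m}{48\cutsize}$ with the current $m$, because $m$ only increases.

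Next we apply \cref{clm:correct_single_cut} to the current graph. Its guarantee holds against any randomness-adaptive adversary, and the fresh random bits of iteration $j$ (sampling in \textsc{SampleRoots} and the internal randomness of \textsc{IncrementalRootedVC}) are independent of the conditioned history. So we get $\Pr[\event_j \mid \text{history}, \ol{\event_{j-1}}] \ge \frac14$. The main subtlety is justifying this conditioning. The adversary's future behaviour may depend on past bits, but for any fixed history it is simply some randomness-adaptive adversary acting on a fixed graph. \cref{clm:correct_single_cut} quantifies over all such adversaries and graphs, so the bound applies pointwise in the history.

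Chaining gives $\Pr\bigl[\bigcap_j \ol{\event_j}\bigr] \le (3/4)^{\boosts}$. Since $\log_2(4/3) \approx 0.415 \ge 1/3$, this is at most $2^{-\boosts/3}$.
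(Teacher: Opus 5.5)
Your proposal is correct and follows the same route as the paper. Both apply \cref{clm:correct_single_cut} in each of the $\boosts$ iterations, conditioned on the triple still being valid, and chain the conditional failure probabilities of at most $\frac{3}{4}$ to get $(3/4)^{\boosts} \le 2^{-\boosts/3}$. You also make explicit what the paper leaves implicit: the bad events are nested by monotonicity, the hypotheses $\volout{L} \ge \volest$ and $\volest \le \frac{m}{48\cutsize}$ persist under insertions, and conditioning on the adaptive history is legitimate.
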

\begin{proof}
    We execute the for loop on \cref{line:first_boost} of \cref{alg:main} on $G'$ with $\cA$.
    Let $\event_i$ be the bad event that after the $i$-th iteration of this for loop,
    $(L, S, R)$ is still a valid \volcut{\cutsize}{<}{2 \volest}.
    By \cref{clm:correct_single_cut}, we have that $\Pr[\event_1] \leq \frac{3}{4}$ and $\Pr[\event_i \mid \event_{i - 1}] \leq \frac{3}{4}$ for each $i \geq 2$.
    By the chain rule of probability, we thus have that
    \[
        \Pr[\event_{\boosts}] = \Pr[\event_1] \cdot \prod_{i = 2}^{\boosts} \Pr[\event_i \mid \event_{i - 1}] \leq \left(\frac{3}{4}\right)^{\boosts} \leq 2^{-\boosts / 3},
    \]
    which concludes the proof.
\end{proof}

Next, we show that each iteration of the for loop on \cref{line:nu_loop} preserves the invariant that both $G$ and $\reversed{G}$ are \ready{\cutsize}{\volest} with high probability.
We first consider the case of an oblivious adversary.
Crucially, we use the fact that edge insertions cannot create new \volcut{\cutsize}{<}{2\volest}s.

\begin{claim} \label{clm:correct_nu_iteration_oblivious}
    Consider some iteration $i$ of the for loop on \cref{line:nu_loop} of \cref{alg:main} with parameters $\cutsize$ and $\volest \leq \frac{m}{48 \cutsize}$ on a \emph{directed or undirected graph $G$ with an oblivious adversary $\cA$}, with $\boosts \ge 27 \log n$.
    Assume further that the graph $G$ and the reversed graph $\reversed{G}$ are \ready{\cutsize}{\volest} at the start of the iteration.
    Then, with probability at least $1 - n^{-8}$, the good event $\event_{\cutsize, \volest}$ that the graph and the reversed graph are \ready{\cutsize}{2 \volest} by the end of the iteration occurs.
\end{claim}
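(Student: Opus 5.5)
The plan is to handle $G$ and $\reversed{G}$ separately, and to use obliviousness to reduce the claim to a single fixed separation triple. By symmetry (the repeat loop runs once on $G$ and once on $\reversed{G}$), each half will fail with probability at most $\frac12 n^{-8}$, and a union bound then gives $1-n^{-8}$ overall.

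First, the monotonicity facts. Edge insertions never create a new separation triple: a triple that is valid after an insertion was already valid before it. They also never decrease $\volout{L}$, and $\kappa$-connectivity is preserved. Hence being \ready{\cutsize}{\volest} is monotone under insertions, and any \volcut{\cutsize}{<}{2\volest} existing at some later time $t$ was already a \volcut{\cutsize}{<}{2\volest} at every earlier time. At those earlier times it also has $\volout{L}\ge\volest$, since the graph is \ready{\cutsize}{\volest}. Similarly, once $\reversed{G}$ is made \ready{\cutsize}{2\volest}, it stays so while we process $G$, and vice versa.

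Now fix the processing of $G$ and suppose the event fails. Then when the repeat loop on \cref{line:first_boost} for $G$ terminates, at some (random) time $t$, some triple $(L,S,R)$ is still a \volcut{\cutsize}{<}{2\volest}.

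The main obstacle is that this triple depends on $t$, which depends on the algorithm's randomness. I would resolve this using obliviousness. The insertion sequence $e_1,e_2,\dots$ is fixed in advance, so for each prefix length $j$ the set $\mathcal{C}_j$ of \volcut{\cutsize}{<}{2\volest}s in the graph after $j$ insertions is deterministic, and the sets $\mathcal{C}_j$ are nested decreasingly. Let $j^*$ be the last index, within the total of at most $m$ insertions, for which $\mathcal{C}_{j^*}\ne\emptyset$, and fix one triple $(L^*,S^*,R^*)\in\mathcal{C}_{j^*}$. This triple is deterministic and valid at every $j\le j^*$.

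If the loop terminates at a time $t$ at which some triple survives, then $t\le j^*$ and $(L^*,S^*,R^*)$ is still valid at time $t$. Call such a termination time a bad termination. By \cref{clm:correct_first_boost}, applied with the current graph at the start of the loop and the fixed triple $(L^*,S^*,R^*)$, which is a \volcut{\cutsize}{\geq}{\volest} there, the triple survives with probability at most $2^{-\boosts/3}$. Strictly, if $j^*$ precedes the start of the loop there is nothing to prove. Thus a bad termination has probability at most $2^{-\boosts/3}\le 2^{-9\log n}=n^{-9}\le\frac12 n^{-8}$, using $\boosts\ge 27\log n$.

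Before applying \cref{clm:correct_first_boost}, I would check its hypotheses. We need $\volest\le\frac{m}{48\cutsize}$, and the degree checks on \crefrange{line:deg_check}{line:deg_report} must ensure $\mindegout>\kappa$ for \cref{thm:local_vc}. \cref{clm:correct_single_cut} already notes the latter. Also, an oblivious adversary is a special case of a randomness-adaptive one, so the claim applies.

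The same argument applies to $\reversed{G}$ with $\volout$ replaced by $\volin$. Combining the two halves by the union bound yields the claim.
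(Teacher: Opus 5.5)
Your proposal is correct and follows essentially the same route as the paper. You use monotonicity of $(\kappa,<2\nu)$-separation triples under insertions, and obliviousness to fix a deterministic longest-surviving triple; your nested family $\mathcal{C}_j$ and index $j^*$ make the paper's ``triple that remains valid for the longest'' precise. You then apply \cref{clm:correct_first_boost} to bound the failure probability by $2^{-\boosts/3}\le n^{-9}$ and take a union bound over $G$ and $\reversed{G}$. The paper leaves two of your details implicit: that this triple has $\volout{L}\ge\nu$ at the start of the loop, and that readiness of one graph persists while the other is processed.
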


\begin{proof}
    Let $G'$ be either $G$ or $\reversed{G}$.
    We use $G'_0$ to denote the state of $G'$ at the start of the iteration $i$,
    and $G'_1$ to denote the state of $G'$ after the iteration.
    By assumption, $G'_0$ is \ready{\cutsize}{\volest}, and, in particular, $\kappa$-connected.
    Since $G'_0$ is a spanning subgraph of $G'_1$, the graph $G'_1$ is also $\kappa$-connected.
    In a $\kappa$-connected graph, the property of being a \volcut{\cutsize}{<}{2 \volest} is monotone under edge insertions:
    a triple $(L, S, R)$ can stop being a \volcut{\cutsize}{<}{2 \volest} because it is no longer a $\kappa$-separation triple or because $\volout{L}$ becomes too big, but a triple cannot suddenly start being one.

    Since the adversary $\cA$ is oblivious, it will perform the same sequence of edge insertions on $G'_0$ independent of what the algorithm does.
    Among all \volcut{\cutsize}{<}{2 \volest}s in $G'_0$, let $(L, S, R)$ be a triple that remains a \volcut{\cutsize}{<}{2 \volest}
    for the longest under this sequence of edge insertions.
    Then, $G'_1$ has a \volcut{\cutsize}{<}{2 \volest} if and only if $(L, S, R)$ is a \volcut{\cutsize}{<}{2 \volest} in $G'_1$.

    By \cref{clm:correct_first_boost}, we have that, with probability at least $1 - 2^{-\boosts/3}$, the for loop of \cref{line:first_boost} of \cref{alg:main} run on $G'$ will only terminate once $(L, S, R)$ is not a valid \volcut{\cutsize}{<}{2 \volest}.
    Since $\boosts \ge 27 \log n$, we have $1 - 2^{-\boosts/3} \ge 1 - n^{-9}$.

    By union bounding over the two graphs $G$ and $\reversed{G}$, we thus have that with probability at least $1 - 2 n^{-9} \geq 1 - n^{-8}$, both the graph and the reversed graph are \ready{\cutsize}{2 \volest} by the end of the iteration, as desired.
\end{proof}

If $\cA$ is a randomness-adaptive adversary, then we can no longer speak of the \volcut{\cutsize}{<}{2 \nu} that survives the longest.
Instead, we need to bound the number of \volcut{\cutsize}{<}{2 \nu}s,
and then union bound over all these triples. Here, our approach only works for undirected graphs, since for directed graphs there may be exponentially many choices for the sets $L$ and $R$.

\lemmTripleCount*
\begin{proof}
    If $(L, S, R)$ is a \volcut{\cutsize}{<}{2 \nu} in an undirected graph $G$,
    then: $|S| = \kappa$, the sets $L, R$ are non-empty, $G \setminus S$ is disconnected, $\vol{L} < 2 \nu$, and there are no edges between $L$ and $R$.
    Since $|S| = \kappa$, there are at most $\binom{n}{\kappa} \le n^{\kappa}$ choices for the set $S$.
    What remains is to show that, for each set $S$, there are at most $n$ choices for the set $L$.
    
    Since there are no edges between $L$ and $R$, the set $L$ is the union of one or more connected components of $G \setminus S$.
    For each such component $C_i$, the triple $(C_i, S, V(G) \setminus (C_i \cup S))$ is a $\kappa$-separation triple in $G$.
    Therefore, since $G$ is $(\kappa, \nu)$-ready, each such component $C_i$ has $\vol{C_i} \ge \nu$.
    Since $\vol{L} < 2 \nu$, the set $L$ consists of exactly one connected component.
    The number of components of $G \setminus S$ is at most $n$, so there are also at most $n$ choices for the set $L$.
\end{proof}

\cref{lemm:triple_count} fails on directed graphs: while there are $n^\cutsize$ choices for $S$, there might be exponentially many choices for $L$.
For example, consider a 4-cycle $u, v, w, x, u$ and let $\cutsize < \nu < n$.
Replace $u$ and $w$ by $\nu$ copies that form a (directed) clique; $v$ by $n$ copies that form an independent set, and $x$ by $\kappa$ copies that form a clique. See \cref{fig:many_directed}.
The resulting graph $G$ has $\Theta(n)$ vertices and a unique minimum vertex cut $S$ that consists of all copies of $x$.
For the corresponding separating triples $(L, S, R)$, all copies of $w$ lie in $L$, and all copies of $u$ lie in $R$, but each copy of $v$ may lie in either $L$ or $R$.
Thus, the number of $\kappa$-separating triples is $2^n$, and the number of  \volcut{\cutsize}{<}{2 \nu}s is $\binom{n}{\nu-1}$.

\begin{figure}
    \centering
    \includegraphics[width=0.5\textwidth]{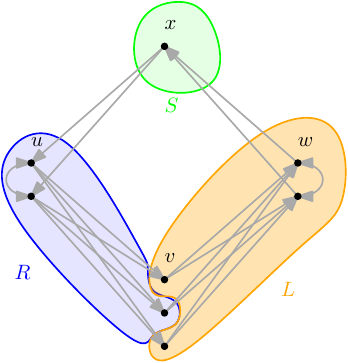}
    \caption{\cref{lemm:triple_count} does not extend to directed graphs. This graph has many separating triples for the same $S$, as each copy of $v$ may lie in $L$ or $R$.}
    \label{fig:many_directed}
\end{figure}

\begin{claim} \label{clm:correct_nu_iteration_adaptive}
    Consider some iteration $i$ of the for loop on \cref{line:nu_loop} of \cref{alg:main} with parameters $\cutsize$ and $\volest \leq \frac{m}{48 \cutsize}$ on an \emph{undirected graph $G$ with a randomness-adaptive adversary $\cA$}, with $\boosts \ge 33 \kappa \log n$.
    Assume further that the graph $G$ and the reversed graph $\reversed{G}$ are \ready{\cutsize}{\volest} at the start of the iteration.
    Then, with probability at least $1 - n^{-8}$, the good event $\event_{\cutsize, \volest}$ that the graph and the reversed graph are \ready{\cutsize}{2 \volest} by the end of the iteration occurs.
\end{claim}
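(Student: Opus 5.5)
Since $\cA$ is randomness-adaptive, we cannot single out a ``last surviving'' triple as in \cref{clm:correct_nu_iteration_oblivious}. Instead, we take a union bound over all candidate triples that exist at the start of the iteration, which \cref{lemm:triple_count} makes possible. Let $G'$ be either $G$ or $\reversed{G}$ (for undirected $G$ the two coincide as undirected graphs, and $\volout{L}$ is just the volume of $L$). Let $G'_0$ be the graph at the start of iteration $i$. By assumption $G'_0$ is \ready{\cutsize}{\volest}, so it is $\kappa$-connected.

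\textbf{Step 1: monotonicity.} Edge insertions keep the graph $\kappa$-connected, and they cannot create a new \volcut{\cutsize}{<}{2\volest}: the sets $(L,S,R)$ are fixed, $\volout{L}$ only grows, and inserted edges can only destroy the separation property. Hence every \volcut{\cutsize}{<}{2\volest} that exists at any moment during the iteration was already one in $G'_0$. Let $\mathcal{F}$ be the family of \volcut{\cutsize}{<}{2\volest}s in $G'_0$. Because $G'_0$ is \ready{\cutsize}{\volest}, every triple in $\mathcal{F}$ also has $\volout{L}\ge\volest$. By \cref{lemm:triple_count}, $|\mathcal{F}|\le n^{\kappa+1}$. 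The family $\mathcal{F}$ is determined by $G'_0$ alone, i.e.\ before any randomness of this iteration is drawn, so the adversary cannot influence which triples we union bound over.

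\textbf{Step 2: per-triple bound and union bound.} Fix $(L,S,R)\in\mathcal{F}$. By \cref{clm:correct_first_boost}, which holds against any randomness-adaptive adversary, the probability that $(L,S,R)$ is still a valid \volcut{\cutsize}{<}{2\volest} after the repeat loop on \cref{line:first_boost} for $G'$ is at most $2^{-\boosts/3}\le 2^{-11\kappa\log n}=n^{-11\kappa}$. A union bound over $\mathcal{F}$ gives failure probability at most $n^{\kappa+1}\cdot n^{-11\kappa}\le n^{-9}$, using $\kappa\ge1$ (then $10\kappa-1\ge 9$). A further union bound over $G'\in\{G,\reversed{G}\}$ gives $2n^{-9}\le n^{-8}$.

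\textbf{Step 3: handling the two passes.} The pass on $\reversed{G}$ inserts further edges after the pass on $G$ has finished, but by Step 1 this cannot resurrect a triple for $G$. So if both passes succeed, at the end of the iteration neither $G$ nor $\reversed{G}$ has a \volcut{\cutsize}{<}{2\volest}, and both remain $\kappa$-connected; that is, both are \ready{\cutsize}{2\volest}. For the pass on $\reversed{G}$, the family $\mathcal{F}$ should be taken at the start of that pass rather than at the start of the iteration. Readiness is monotone under insertions, so \cref{lemm:triple_count} still applies at that moment. The remaining subtlety is that this graph now depends on the randomness used in the $G$-pass. I handle this by conditioning on the history up to that point: the bound of \cref{clm:correct_first_boost} holds for every fixed starting graph, so it also holds conditionally.

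I expect the main obstacle to be exactly this adaptivity point: justifying that the union bound over $\mathcal{F}$ is legitimate even though the adversary's later insertions depend on our random bits. The argument for this is that $\mathcal{F}$ is fixed before the pass's random choices are made, and that \cref{clm:correct_first_boost} quantifies over all randomness-adaptive adversaries for each fixed triple.
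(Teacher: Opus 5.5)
Your proposal is correct and follows essentially the paper's proof: monotonicity of $(\kappa, <2\nu)$-separation triples under insertions, the $n^{\kappa+1}$ bound from \cref{lemm:triple_count} on the family fixed before the iteration's randomness, the per-triple bound $2^{-\boosts/3} \le n^{-11\kappa}$ from \cref{clm:correct_first_boost}, and union bounds over triples and over $G, \reversed{G}$. Your Step~3 only makes explicit the conditioning for the second pass, which the paper leaves implicit. The paper simply uses the family from the start of the iteration for both passes; this suffices because, by monotonicity, the family at the start of the $\reversed{G}$-pass is a subset of it.
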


\begin{proof}
    Let $G'$ be either $G$ or $\reversed{G}$. We use $G'_0$ to denote the state of $G'$ at the start of the iteration $i$,
    and $G'_1$ to denote the state of $G'$ after the iteration.
    By assumption, $G'_0$ is \ready{\cutsize}{\volest}, and, in particular, $\kappa$-connected.
    Since $G'_0$ is a spanning subgraph of $G'_1$, the graph $G'_1$ is also $\kappa$-connected.
    As in the proof of \cref{clm:correct_nu_iteration_oblivious}, we observe that the property
    of being a \volcut{\cutsize}{<}{2 \volest} is monotone under edge insertions.
    Thus, a triple $(L, S, R)$ can only be a \volcut{\cutsize}{<}{2 \volest} in $G'_1$ if it is a \volcut{\cutsize}{<}{2 \volest} in $G'_0$.

    By \cref{clm:correct_first_boost}, we have that with probability at least $1 - 2^{-\boosts/3}$, the for loop of \cref{line:first_boost} of \cref{alg:main} run on $G'$ will only terminate once $(L, S, R)$ is not a valid \volcut{\cutsize}{<}{2 \volest} for any fixed \volcut{\cutsize}{\geq}{\volest} $(L, S, R)$ in $G'$.
    Since $\boosts \ge 33 \cutsize \log n$, we have $1 - 2^{-\boosts / 3} \ge 1 - n^{-(\cutsize+10)}$.
    By \cref{lemm:triple_count}, at most $n^{\cutsize+1}$ triples $(L, S, R)$ are \volcut{\cutsize}{<}{2 \volest}s in $G'_0$.
    Thus, we can union bound over all such triples to obtain that with probability at least $1 - n^{-9}$, the for loop of \cref{line:first_boost} of \cref{alg:main} run on $G'$ will only terminate once there are no more valid \volcut{\cutsize}{<}{2 \volest}s, that is, once $G'$ is \ready{\cutsize}{2 \volest}.

    By union bounding over the two graphs $G$ and $\reversed{G}$, we thus have that with probability at least $1 - 2 n^{-9} \geq 1 - n^{-8}$, both the graph and the reversed graph are \ready{\cutsize}{2 \volest} by the end of the iteration, as desired.
\end{proof}

Finally, we show \cref{lemm:correct_nu_loop} by union bounding over all iterations of the for loop on \cref{line:nu_loop}.

\lemmNuLoop*

\begin{proof}
    Before the for loop on \cref{line:nu_loop} of \cref{alg:main}, we trivially have that the graph and the reversed graph are \ready{\cutsize}{1}, and so we have $\Pr[\event_{\cutsize, 1}] \geq 1 - n^{-8}$ by \cref{clm:correct_nu_iteration_oblivious} (or \cref{clm:correct_nu_iteration_adaptive}).
    That claim also implies that $\Pr[\event_{\cutsize, \volest} \mid \event_{\cutsize, \volest / 2}] \geq 1 - n^{-8}$ for each $\volest \in \{2, 4, 8, \ldots, 2^{\floor{\log \frac{m}{48 \cutsize}}}\}$.
    Notice again that each iteration uses independent randomness, and so our sequence of events forms a Markov chain, meaning that $\Pr[\event_{\cutsize, 2^i} \mid \event_{\cutsize, 2^{i - 1}} \cap \ldots \cap \event_{\cutsize, 1}] = \Pr[\event_{\cutsize, 2^i} \mid \event_{\cutsize, 2^{i - 1}}]$ for all $i \in \{1, \ldots, \log \frac{m}{48 \cutsize}\}$.
    We can use the law of total probability with this Markov property to obtain that
    \[
        \Pr[\event_{\cutsize, 2^{\floor{\log \frac{m}{48 \cutsize}}}}] \geq \Pr[\event_{\cutsize, 1}] \cdot \prod_{i = 1}^{\floor{\log \frac{m}{48 \cutsize}}} \Pr[\event_{\cutsize, 2^i} \mid \event_{\cutsize, 2^{i-1}}] \geq (1 - n^{-8})^{\log \frac{m}{48 \cutsize}} \geq 1 - n^{-7},
    \]
    where the last inequality follows from Bernoulli's inequality and the fact that $\log m \leq n$.
    As $2^{\floor{\log \frac{m}{48 \cutsize}}} \geq \frac{m}{96 \cutsize}$, with probability at least $1 - n^{-7}$, the for loop on \cref{line:nu_loop} of \cref{alg:main} only terminates once the graph and the reversed graph are \ready{\cutsize}{2 \cdot \frac{m}{96 \cutsize}}, i.e., \ready{\cutsize}{\frac{m}{48 \cutsize}}.
\end{proof}

\subsection{Proof of \cref{lemm:correct_flow_loop}}

The last missing piece now is to prove \cref{lemm:correct_flow_loop}.
That is, we now assume that both the graph and the reversed graph are \ready{\cutsize}{\frac{m}{48 \cutsize}} and discuss the for loop on \cref{line:boost_flows} of \cref{alg:main}.
We will first argue that, with high probability, the sampling in the for loop on \cref{line:middle_boost} hits at least one vertex from each of $L$ and $R$ for every $\cutsize$-separation triple $(L, S, R)$.
We will then appeal to \cref{thm:maxflow_vc} to show that the for loop on \cref{line:boost_flows} will only terminate once the graph is $(\cutsize + 1)$-vertex connected with high probability.

\begin{claim} \label{clm:correct_flow_sampling}
    Assume that both the graph $G$ and its reversed graph are \ready{\cutsize}{\frac{m}{48 \cutsize}}.
    Fix a $\cutsize$-separation triple $(L, S, R)$ in $G$.
    With probability at least $1-2^{-\boosts/3}$, there exists an iteration $i$ of the for loop on \cref{line:middle_boost} such that either $(T_i \times \{u_i\}) \cap (L \times R) \neq \emptyset$ or $(\reversed{T_i} \times \{u_i\}) \cap (R \times L) \neq \emptyset$.
\end{claim}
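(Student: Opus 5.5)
The plan is to show that a single iteration $i$ of the for loop on \cref{line:middle_boost} succeeds with probability at least $\frac14$, independently of the other iterations. The bound $(3/4)^{\boosts} \le 2^{-\boosts/3}$ then follows exactly as in \cref{clm:correct_first_boost}. The triple $(L, S, R)$ is fixed, and all sampling in this loop happens at one moment on the current graph, before any further insertions. So there is no adaptivity issue here: the iterations use independent randomness on the same fixed graph, and the failure probability is just the product of per-iteration failure probabilities.

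The first step for a single iteration is to establish the volume bounds. Because $G$ is \ready{\cutsize}{\frac{m}{48\cutsize}}, it has no \volcut{\cutsize}{<}{\frac{m}{48\cutsize}}, so $\volout[G]{L} \ge \frac{m}{48\cutsize}$. Likewise, because $\reversed{G}$ is ready, applying this to the triple $(R, S, L)$ in $\reversed{G}$ gives $\volin[G]{R} = \volout[\reversed{G}]{R} \ge \frac{m}{48\cutsize}$. By \cref{thm:sampling} with $\volest = \frac{m}{48\cutsize}$, we then get $\Pr[L \cap T_i = \emptyset] \le \frac12$ and $\Pr[R \cap \reversed{T_i} = \emptyset] \le \frac12$.

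Next I split on where $u_i$ lands. \Cref{lemm:unbiased_sampling} applies because $\cutsize \le \frac n2$, which holds by the range of the loop on \cref{line:kappa_loop}; it gives $\Pr[u_i \in L \cup R] \ge \frac12$. Since $u_i$ is sampled independently of $T_i$ and $\reversed{T_i}$, I condition on $u_i$. If $u_i \in R$, then with probability at least $\frac12$ some $s \in T_i \cap L$ exists, and then $(s, u_i) \in (T_i \times \{u_i\}) \cap (L \times R)$. If $u_i \in L$, then with probability at least $\frac12$ some $t \in \reversed{T_i} \cap R$ exists, and then $(t, u_i) \in (\reversed{T_i} \times \{u_i\}) \cap (R \times L)$. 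Either way, the iteration succeeds with probability at least $\frac12 \cdot \frac12 = \frac14$.

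The main point to check is the independence used in the conditioning. The three samples in one iteration must use independent randomness, and all three must be taken on the same graph. The latter holds because nothing is inserted between \cref{line:middle_boost} and \cref{line:sampling_end}; the pitfalls paragraph says explicitly that the sampling happens at that very moment. A second detail is the reversed-graph correspondence: I will verify that $\volout[\reversed{G}]{R} = \volin[G]{R}$, and that $(R, S, L)$ is a separation triple of $\reversed{G}$ whenever $(L, S, R)$ is one of $G$. Both are immediate from the definitions.
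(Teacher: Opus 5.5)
Your proposal is correct and follows essentially the same route as the paper. Both arguments derive the volume lower bounds from the readiness of $G$ and $\reversed{G}$, apply \cref{thm:sampling} to $T_i$ and $\reversed{T_i}$, condition on whether $u_i$ lands in $R$ or $L$ via \cref{lemm:unbiased_sampling} to get a per-iteration success probability of at least $\frac14$, and use independence across iterations to conclude $1-(3/4)^{\boosts} \ge 1-2^{-\boosts/3}$. Your explicit checks that $\cutsize \le \frac{n}{2}$ and that no insertions occur during the sampling loop are details the paper leaves implicit.
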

\begin{proof}
    First, fix some $\cutsize$-separation triple $(L, S, R)$ in $G$, and for each $1 \le i \leq \boosts$ define the good event $\event_i$ that $(T_i \times \{u_i\}) \cap (L \times R) \neq \emptyset$ or $(\reversed{T_i} \times \{u_i\}) \cap (R \times L) \neq \emptyset$.
    As both $G$ and its reversed graph are \ready{\cutsize}{\frac{m}{48 \cutsize}}, we have that $\volout{L} \geq \frac{m}{48 \cutsize}$ and $\volout[\reversed{G}]{R} \geq \frac{m}{48 \cutsize}$.
    We can thus use \cref{thm:sampling} to obtain that with probability at least $\frac{1}{2}$, we have $T_i \cap L \neq \emptyset$, and also with probability at least $\frac 1 2$, we have $\reversed{T_i} \cap R \neq \emptyset$.
    Similarly, we can use \cref{lemm:unbiased_sampling} to obtain that with probability at least $\frac{1}{2}$, we will have $u_i \in (L \cup R)$.
    
    Now since the sets $T_i$, $\reversed{T_i}$ and $\{u_i\}$ are sampled independently for each iteration $i$, we have that
    \[
        \Pr[\event_i \mid u_i \in R] \geq \Pr[T_i \cap L \neq \emptyset] \geq \frac 1 2,
    \]
    and similarly
    \[
        \Pr[\event_i \mid u_i \in L] \geq \Pr[\reversed{T_i} \cap R \neq \emptyset] \geq \frac 1 2.
    \]
    We can thus use the law of total probability to conclude that
    \[
        \Pr[\event_i] \geq \frac 1 2 \cdot (\Pr[u_i \in R] + \Pr[u_i \in L]) = \frac 1 2 \cdot \Pr[u_i \in (L \cup R)] \geq \frac 1 4.
    \]
    Now we can use independence of the iterations $i$ to conclude that
    \[
        \Pr\!\left[\bigcup_{i=1}^{\boosts} \event_i\right] = 1 - \prod_{i=1}^{\boosts} (1 - \Pr[\event_i]) \geq 1 - \left(\frac 3 4\right)^{\boosts} \geq 1 - 2^{-\boosts/3},
    \]
    that is, with probability at least $1 - 2^{-\boosts/3}$, there exists an iteration $i$ of the for loop on \cref{line:middle_boost} such that either $(T_i \times \{u_i\}) \cap (L \times R) \neq \emptyset$ or $(\reversed{T_i} \times \{u_i\}) \cap (R \times L) \neq \emptyset$.
\end{proof}

\begin{lemma} \label{lemm:triple_count_volume}
    If $G$ is an \emph{undirected} graph with $m$ edges that is $(\kappa, \nu)$-ready, then there are at most $n^{\kappa} \cdot 2^{2 m / \nu}$ triples $(L, S, R)$ that are $\kappa$-separation triples.
\end{lemma}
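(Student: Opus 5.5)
Following the proof of \cref{lemm:triple_count}, I would count the triples by first choosing $S$ and then choosing $L$. Since $|S| = \kappa$, there are at most $\binom{n}{\kappa} \le n^{\kappa}$ choices for $S$. The remaining task is to show that, for fixed $S$, there are at most $2^{2m/\nu}$ valid choices of $L$. Once $S$ and $L$ are fixed, $R = V \setminus (L \cup S)$ is determined.

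Fix $S$. In an undirected graph, a separation triple $(L, S, R)$ has no edges between $L$ and $R$, so $L$ must be a union of connected components of $G \setminus S$. The same holds for $R$, which is the union of the remaining components. Let $C_1, \dots, C_r$ be these components. Each $(C_i, S, V \setminus (C_i \cup S))$ is a $\kappa$-separation triple, provided $r \ge 2$; if $r < 2$, then $S$ admits no separation triple at all and contributes nothing. Since $G$ is $(\kappa, \nu)$-ready, it has no $(\kappa, <\!\nu)$-separation triple, so $\vol{C_i} \ge \nu$ for every $i$. Here I use that in an undirected graph $\volout{C_i}$ equals the degree sum.

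The components are disjoint, so $\sum_i \vol{C_i} \le \vol{V} = 2m$. With the undirected convention $\deg = \deg^+$ the total degree is $2m$; under the paper's directed convention $\vol{} = \volout{} + \volin{}$ it would be $4m$, so I will fix the convention so that $\sum_v \dgout{v} = 2m$, and in either case a constant factor is absorbed. This gives $r \cdot \nu \le 2m$, so $r \le 2m/\nu$. Each choice of $L$ is a subset of $\{C_1, \dots, C_r\}$, so there are at most $2^r \le 2^{2m/\nu}$ choices for $L$. Multiplying over the choices of $S$ gives the bound $n^\kappa \cdot 2^{2m/\nu}$.

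The only delicate step is matching the volume convention to the constant $2$ in the exponent. With undirected degrees, $\sum_v \deg(v) = 2m$, and $\volout{C_i} \ge \nu$ is exactly the readiness condition, so the count comes out as stated. Everything else mirrors \cref{lemm:triple_count} directly.
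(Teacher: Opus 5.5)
Your proposal is correct and follows the paper's proof essentially verbatim. The paper bounds the choices of $S$ by $n^\kappa$, uses readiness to get $\volout{C_i}\ge\nu$ for each component of $G\setminus S$, concludes there are at most $2m/\nu$ components, and hence at most $2^{2m/\nu}$ choices of $L$. Your detour about a possible $4m$ convention is unnecessary. If you use $\volout{}$ on both sides (readiness and $\volout{V}=2m$), you get exactly $r\le 2m/\nu$, and a constant factor in the exponent would not be ``absorbed'' in an exact bound anyway.
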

\begin{proof}
    The proof is similar to that of \cref{lemm:triple_count}.
    If $(L, S, R)$ is a $\kappa$-separation triple in an undirected graph $G$,
    then: $|S| = \kappa$, the sets $L, R$ are non-empty, $G \setminus S$ is disconnected, and there are no edges between $L$ and $R$.
    Since $|S| = \kappa$, there are at most $\binom{n}{\kappa} \le n^{\kappa}$ choices for the set $S$.
    What remains is to show that, for each set $S$, there are at most $2^{2 m / \nu}$ choices for the set $L$.
    
    Since there are no edges between $L$ and $R$, the set $L$ is the union of one or more connected components of $G \setminus S$.
    For each such component $C_i$, the triple $(C_i, S, V(G) \setminus (C_i \cup S))$ is a $\kappa$-separation triple in $G$.
    Since $G$ is $(\kappa, \nu)$-ready, each such component $C_i$ has $\volout{C_i} \ge \nu$.
    As $G$ is undirected, $\volout{G} = 2 m$, so in particular, there are at most $2 m / \nu$ such components.
    The set $L$ is uniquely determined by a subset of these components, so there are at most $2^{2 m / \nu}$ choices for the set $L$.
\end{proof}

\begin{claim} \label{clm:correct_flow_sampling_forall}
    Suppose that $G$ is undirected, $\cA$ is randomness-adaptive, and $\boosts \ge \kappa (33 \log n + 288)$.
    Assume that both the graph $G$ and its reversed graph are \ready{\cutsize}{\frac{m}{48 \cutsize}}.
    Then, with probability at least $1 - n^{-7}$, for every $\cutsize$-separation triple $(L, S, R)$ in $G$, there exists an iteration $i$ of the for loop on \cref{line:middle_boost} such that either $(T_i \times \{u_i\}) \cap (L \times R) \neq \emptyset$ or $(\reversed{T_i} \times \{u_i\}) \cap (R \times L) \neq \emptyset$.
\end{claim}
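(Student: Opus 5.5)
The plan is a union bound over all $\cutsize$-separation triples, combining \cref{clm:correct_flow_sampling} with the counting bound of \cref{lemm:triple_count_volume}.

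First, fix the graph $G$ at the moment the sampling in the for loop on \cref{line:middle_boost} is performed. All samples $T_i$, $\reversed{T_i}$, $u_i$ are drawn from this current graph before any further insertions happen. So the set of $\cutsize$-separation triples to consider is determined before the randomness is drawn, and adaptivity of $\cA$ plays no role here. For each fixed triple, \cref{clm:correct_flow_sampling} gives failure probability at most $2^{-\boosts/3}$.

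Next, count the triples. Both $G$ and $\reversed{G}$ are \ready{\cutsize}{\frac{m}{48\cutsize}}, and $G$ is undirected. Applying \cref{lemm:triple_count_volume} with $\volest = \frac{m}{48\cutsize}$ gives at most
\[
n^{\cutsize}\cdot 2^{2m/\volest} = n^{\cutsize}\cdot 2^{96\cutsize}
\]
$\cutsize$-separation triples. The factor $2^{96\cutsize}$ is exactly what the additive term $288\cutsize$ in $\boosts$ is designed to absorb, since $288/3 = 96$.

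Then the union bound gives total failure probability at most
\[
n^{\cutsize}\, 2^{96\cutsize}\, 2^{-\boosts/3}
\le n^{\cutsize}\, 2^{96\cutsize}\, 2^{-11\cutsize\log n - 96\cutsize}
= n^{\cutsize - 11\cutsize}
= n^{-10\cutsize}
\le n^{-7},
\]
using $\cutsize\ge1$.

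The only delicate point is justifying the union bound against a randomness-adaptive adversary. The triples must be those of the graph at sampling time, and \cref{clm:correct_flow_sampling} must hold per triple with fresh independent randomness. Both hold because \crefrange{line:middle_boost}{line:sampling_end} perform all sampling at once, before any insertion. I would state this explicitly, and note that the readiness hypothesis refers to that same moment.
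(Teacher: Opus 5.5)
Your proposal is correct and follows the paper's proof essentially verbatim: a union bound, via \cref{clm:correct_flow_sampling}, over the at most $n^{\cutsize} 2^{96\cutsize}$ triples given by \cref{lemm:triple_count_volume}, where $2^{-\boosts/3} \le 2^{-96\cutsize} n^{-11\cutsize}$ yields total failure probability $n^{-10\cutsize} \le n^{-7}$. Your explicit remark that all sampling happens at a single moment before any insertion, so that adaptivity plays no role here, is a point the paper leaves implicit.
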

\begin{proof}
    We apply \cref{clm:correct_flow_sampling} to every $\cutsize$-separation triple $(L, S, R)$ in $G$ and union bound over all such triples.
    By \cref{lemm:triple_count_volume}, there are at most $n^{\kappa} \cdot 2^{2 \cdot 48 \kappa}$ such triples.
    For each triple, by \cref{clm:correct_flow_sampling}, with probability at least $1 - 2^{-\boosts/3} \ge 1 - 2^{-96 \kappa} \cdot n^{-11 \kappa}$,
    there exists an iteration $i$ of the for loop on \cref{line:middle_boost} such that either $(T_i \times \{u_i\}) \cap (L \times R) \neq \emptyset$ or $(\reversed{T_i} \times \{u_i\}) \cap (R \times L) \neq \emptyset$.
    By a union bound, the probability that this happens simultaneously for every triple is at least $1 - n^{\kappa} \, 2^{96 \kappa} \cdot 2^{-96 \kappa}\, n^{-11 \kappa} = 1 - n^{-10 \kappa} \ge 1 - n^{-7}$.
\end{proof}

Now we can finally show \cref{lemm:correct_flow_loop} and thus conclude the proof of correctness.
\lemmFlowLoop*
\begin{proof}
    Suppose that $(L, S, R)$ is a $\cutsize$-separation triple in $G$ for which there is an iteration $i$ of the for loop on \cref{line:middle_boost} such that either $(T_i \times \{u_i\}) \cap (L \times R) \neq \emptyset$ or $(\reversed{T_i} \times \{u_i\}) \cap (R \times L) \neq \emptyset$.
    We now consider this $i$-th iteration of the for loop on \cref{line:boost_flows}.

    If $(T_i \times \{u_i\}) \cap (L \times R) \neq \emptyset$, then there exist vertices $s \in T_i \cap L$ and $t = u_i$.
    Consider now the corresponding iteration of the for loop on \cref{line:first_flow_loop} with $s$ and $t$.
    By \cref{thm:maxflow_vc}, the call to \textsc{IncrementalMaxFlowVC} on \cref{line:first_flow} will report an $(s, t)$-vertex cut of size $\cutsize$ in $G$ while such a cut exists.
    Since $(L, S, R)$ is an $(s, t)$-separation triple of size $\cutsize$, this means that the for loop on \cref{line:first_flow_loop} will only terminate once $(L, S, R)$ is no longer a valid $\cutsize$-separation triple.

    By an equivalent argument, if $(\reversed{T_i} \times \{u_i\}) \cap (R \times L) \neq \emptyset$, then the for loop on \cref{line:second_flow_loop} will only terminate once $(L, S, R)$ is no longer a valid $\cutsize$-separation triple.
    In this case, the call to \textsc{IncrementalMaxFlowVC} on \cref{line:second_flow} will report an $(s, t)$-vertex cut of size $\cutsize$ in $\reversed{G}$ while such a cut exists for $s \in \reversed{T_i} \cap R$ and $t = u_i$.

    If $G$ is undirected, $\cA$ is randomness-adaptive, and $\boosts \ge \kappa (33 \log n + 288)$,
    then \cref{clm:correct_flow_sampling_forall} shows that, with probability at least $1 - n^{-7}$,
    this argument applies simultaneously to all $\cutsize$-separation triples $(L, S, R)$ in $G$.
    Hence, with probability at least $1 - n^{-7}$, the for loop on \cref{line:boost_flows} of \cref{alg:main} will only terminate once the graph is $(\cutsize + 1)$-vertex connected, as desired.

    If instead $\cA$ is oblivious and $\boosts \ge 27 \log n$,
    then let $(L, S, R)$ be the $\kappa$-separation triple in $G$ that survives for the longest.
    By \cref{clm:correct_flow_sampling}, with probability at least $1 - 2^{-\boosts/3} \ge 1 - n^{-7}$,
    this argument applies to this particular triple $(L, S, R)$.
    Hence, with probability at least $1 - n^{-7}$, the for loop on \cref{line:boost_flows} of \cref{alg:main} will only terminate once $(L, S, R)$ is no longer a $\kappa$-separation triple.
    But then, the graph is $(\cutsize + 1)$-vertex connected, as desired.
\end{proof}

\section{Sampling} \label{sec:sampling}

In this section, we prove \cref{thm:sampling,lemm:unbiased_sampling}, which analyze the subroutines \textsc{SampleRoots} and \textsc{UnbiasedSample}.
The first subroutine samples edges uniformly at random and returns their tail vertices, ensuring that any set of vertices with large volume is hit with constant probability.
\thmSampling*
\begin{proof}
    The algorithm samples a multiset $E'$ of $\floor{\frac{2m}{\volest}}$ edges independently and uniformly at random with replacement, and returns the set
    \[
        T = \{u \in V(G) \mid \exists (u,v) \in E'\}
    \]
    of vertices with some outgoing edge in $E'$.
    Now fix any set $A \subseteq V(G)$ with $\volout{A} \geq \volest$.
    Each sampled edge originates from $A$ independently with probability $\frac{\volout{A}}{m} \geq \frac{\volest}{m}$, so the probability that no sampled edge originates from $A$ is
    \[
        \left(1 - \frac{\volout{A}}{m}\right)^{\floor{\frac{2 m}{\volest}}} \leq \left(1 - \frac{\volest}{m}\right)^{\frac{m}{\volest}} \leq e^{-1} \leq \frac{1}{2}.
    \]
    As this is exactly the probability that $A \cap T = \emptyset$, we have established the claim.
\end{proof}
\iffalse
\begin{proof}[Proof of \cref{thm:sampling}]
    The algorithm runs the algorithm from \cref{clm:single_sampling} twice, obtaining the sets $T_1$ and $T_2$, and then returns the union $T = T_1 \cup T_2$.
    Now consider any separation triple $(L, S, R)$ with $\volest \leq \volout{L} \leq \volout{R}$ and $|S| = \cutsize$.
    By \cref{clm:single_sampling}, we have that $\Pr[L \cap T_1 = \emptyset] \leq \frac{1}{2}$ and $\Pr[L \cap T_2 = \emptyset] \leq \frac{1}{2}$, as well as $\Pr[R \cap T_1 = \emptyset] \leq \frac{1}{2}$ and $\Pr[R \cap T_2 = \emptyset] \leq \frac{1}{2}$.
    As the $T_i$ are chosen independently, we thus have
    \[
        \Pr[L \cap T = \emptyset] = \Pr[L \cap T_1 = \emptyset] \cdot \Pr[L \cap T_2 = \emptyset] \leq \frac{1}{4},
    \]
    and similarly $\Pr[R \cap T = \emptyset] \leq \frac{1}{4}$.
    By union bounding over these two events, we thus have that $\Pr[L \cap T = \emptyset \lor R \cap T = \emptyset] \leq \frac{1}{2}$, as desired.
    The size of $T$ is at most $\frac{4 m}{\volest}$, and the algorithm runs in $\cO(m)$ time.
\end{proof}
\fi

If we only want to hit some vertex on either side of the cut and the cut size is not too large, then we can simply sample a vertex uniformly at random.
\lemmUnbiased*
\begin{proof}
    Let $(L, S, R)$ be any separation triple with $|S| = \kappa \le \frac{n}{2}$. Then $|L \cup R| = n - |S| \ge \frac{n}{2}$,
    thus,
    \[
    \Pr[u \in (L \cup R)] = \frac{|L \cup R|}{n} \ge \frac{1}{2}.
    \]
\end{proof}

\section{Incremental Rooted Cuts} \label{sec:dyn_rooted_vc}

Mirroring the definition of a \volcut{\cutsize}{\le}{\volest},
a \emph{$(\le\!\cutsize, \le\! \volest)$-edge cut} is an edge cut $(L, R)$ with $|\bndout{L}| \le \cutsize$ and $\volout{L} \le \volest$.
\cite{DBLP:conf/soda/ForsterNYSY20} design a local algorithm for computing a small rooted vertex or edge cut.
Given a directed graph $G$, two integers $\cutsize, \volest$, and a vertex $s \in V(G)$,
their algorithm either computes an edge cut $(L, R)$ with $s \in L$, $|\bndout{L}| \le \cutsize$, and $\volout{L} \in \cO(\cutsize \volest)$,
or certifies that no \voledgecut{\le\!\cutsize}{\le}{\volest}{s} exists. Crucially, if $G$ is available in memory, their algorithm runs in $\cO(\volest \cutsize^2)$ time, with no dependence on $m$.
More precisely, they show:

\begin{theorem}[Corollary A.2 in \cite{DBLP:conf/soda/ForsterNYSY20}] \label{theo:static_local_edge_cuts}
    Given a directed graph $G$ with $m$ edges and two integers $\cutsize, \volest$ with $\volest \leq \frac{m}{8 \cutsize}$, as well as a vertex $s \in V(G)$, there is a randomized (Monte Carlo) algorithm that runs in $\cO(\volest\cutsize^2)$ time and outputs
    \begin{itemize}
        \item the symbol ``$\bot$'' indicating that, with probability $\frac{1}{2}$, there is no \voledgecut{\le\!\cutsize}{\le}{\volest}{s}, or
        \item an \voledgecut{\cutsize}{\le}{10\volest\cutsize}{s}.
    \end{itemize}
\end{theorem}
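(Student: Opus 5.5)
Because this is Corollary~A.2 of \cite{DBLP:conf/soda/ForsterNYSY20}, I would reprove it by reconstructing their local Ford--Fulkerson-type procedure with random sinks. The algorithm runs at most $\cutsize+1$ iterations. It maintains a set of edge-disjoint paths starting at $s$, and the graph $G'$ is $G$ with the edges of these paths reversed, i.e., the residual graph. In each iteration it runs a DFS in $G'$ from $s$ and stops once the out-volume of explored vertices (in $G$) reaches a threshold of about $8\volest\cutsize$.
\begin{itemize}
\item If the DFS gets stuck earlier, the algorithm outputs the reached set $X$ as $L$.
\item Otherwise it picks a uniformly random explored edge $(x,y)$ and reverses the tree path from $s$ to $y$, which augments one unit of flow to $y$.
\end{itemize}
If all $\cutsize+1$ iterations complete without the DFS getting stuck, it outputs $\bot$. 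The hypothesis $\volest\le \frac{m}{8\cutsize}$ ensures that the threshold is at most $m$, so the DFS can actually reach it unless it gets stuck.

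\textbf{Cut guarantee.} Suppose the DFS gets stuck at a reached set $X$, and consider an edge $(u,x)$ with $u\notin X$ and $x\in X$. If it lay on a path, its reversal $(x,u)$ would be a residual edge leaving $X$. Every out-edge of $X$ that is not on a path is also residual. Since nothing leaves $X$ in $G'$, all edges of $\bndout{X}$ are used by paths and no edge of $\bndin{X}$ is used. Flow conservation then gives $|\bndout{X}|$ equal to the net flow out of $X$, which is at most the number of paths, at most $\cutsize$. Hence $(X, V\setminus X)$ is an $s$-rooted edge cut of size at most $\cutsize$. Its out-volume is below the threshold plus one vertex's degree, which is $O(\volest\cutsize)$; with careful constants this is at most $10\volest\cutsize$.

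\textbf{Correctness of $\bot$.} Fix an $s$-rooted $(\le\!\cutsize,\le\!\volest)$ edge cut $(L,R)$. The main step is to show that with probability at least $1/2$ every sampled endpoint $y$ lies outside $L$. The explored out-volume is at least $8\volest\cutsize$, while the explored edges with head in $L$ are at most $\volout{L}\le\volest$. So each iteration fails with probability at most $\frac{1}{8\cutsize}$, and a union bound over $\cutsize+1$ iterations gives total failure probability at most $1/2$.

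When all sinks lie outside $L$, every augmenting path leaves $L$ and thus uses at least one edge of $\bndout{L}$ net. After at most $\cutsize$ such paths, $L$ has no residual out-edges, so the DFS in some iteration among the first $\cutsize+1$ gets stuck inside $L$. Therefore $\bot$ is never output when such a cut exists and the sampling succeeds. The delicate point is the net-flow accounting for paths that re-enter $L$ via reversed edges; I would handle it with the standard argument that the net flow across $(L,R)$ grows by one per augmentation and is bounded by $|\bndout{L}|$.

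\textbf{Running time.} Each DFS touches $O(\volest\cutsize)$ edges, so the $\cutsize+1$ iterations cost $O(\volest\cutsize^2)$ in total. This requires the adjacency lists to be stored so that the reversed path edges can be updated locally and then reverted, which keeps the work independent of $m$.
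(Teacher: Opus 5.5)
Your probability step has a genuine gap, and the cause is that you augment to the \emph{head} $y$ of the sampled edge. The claim that at most $\volout{L}\le\volest$ explored edges have their head in $L$ is false. Residual edges entering $L$ from $R$ (unused edges of $\bndin{L}$, and reversed flow edges) also have heads in $L$. An $s$-rooted $(\le\!\cutsize,\le\!\volest)$-edge cut constrains only $|\bndout{L}|$ and $\volout{L}$, never $|\bndin{L}|$.

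This breaks the algorithm itself, not just the write-up. Take $\cutsize=1$ and $L=\{s,a_1,\dots,a_d\}$, with edges $s\to a_i$ for all $i$ and a single exit $a_1\to r_1$. Let $R$ be a long directed cycle through $r_1$ whose vertices each also have edges to all of $a_1,\dots,a_d$, listed before their cycle edge, so that $\volout{L}=d+1=:\volest$.
\begin{itemize}
\item A $1-O(1/d)$ fraction of the explored edges point into $\{a_2,\dots,a_d\}$.
\item For such a head $a_i$, the reversed tree path ends inside $L$.
\item Afterwards $s$ still reaches $R$ in the residual graph. If the tree path was $s\to a_i$, it does so via $s\to a_1\to r_1$. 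Otherwise it uses the unused edge $s\to a_i$ followed by the reversed last tree edge out of $a_i$.
\end{itemize}
So the second and final search does not get stuck, and you return $\bot$ with probability $1-O(1/d)$.

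The fix is the one used in \cref{alg:local_ec} and in the sketch after \cref{theo:static_local_edge_cuts}: reverse the tree path to the \emph{tail} $u$ of the sampled edge. The tail is an explored vertex, so this path exists. Explored edges with tail in $L$ are out-edges of $L$ in the current residual graph $H$, so there are at most $\volout[H]{L}$ of them. Moreover $\volout[H]{L}\le\volout{L}\le\volest$, because by \cref{lem:path_reversal} a path reversal never increases $\volout[H]{L}$. A tail outside $L$ then raises the net flow out of $L$ by one, and the rest of your argument goes through.

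For the ratio $\volest/(8\cutsize\volest)$ you also need the sample space itself to have size $8\cutsize\volest$. So cap the number of explored residual edges, as the paper does, rather than the $G$-out-volume of explored vertices. Under your rule, if a vertex counts as soon as it is discovered, a single high-out-degree vertex of $R$ can trigger the threshold after only a handful of explored edges. Those edges may all have tails in $L$, so the denominator is not $8\cutsize\volest$. With the edge cap, a stuck search gives $\volout{X}=|E_G(X,X)|+|\bndout{X}|<8\cutsize\volest+\cutsize$, which is within the required $10\volest\cutsize$.
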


The idea behind this algorithm is to mimic the Ford--Fulkerson max-flow algorithm: we repeatedly grow a search tree from $s$ to find augmenting paths, but cap each search at $8 \cutsize \volest$ explored edges.
Whenever this cap is reached, we sample one of the explored edges uniformly at random and send one unit of flow along the tree path to the tail of this edge, i.e. reverse all edges on this path.
If instead a search gets stuck before reaching the cap, the algorithm returns the set of vertices reachable from $s$ in the residual graph as a certificate for a cut; and if no search gets stuck within $\cutsize + 1$ attempts, the algorithm reports that no cut has been found.

To see why this succeeds, suppose $G$ has an \voledgecut{\cutsize}{\le}{\volest}{s} $(L, R)$.
In a single iteration, the tail of the sampled edge lies in $R$ with probability $1 - \cO(\frac{1}{\cutsize})$, in which case the augmenting path crosses the cut $(L, R)$ and saturates one of its edges.
By a union bound, with constant probability, the first $\cutsize$ augmentations saturate every edge of $\bndout{L}$.
The next search then cannot leave $L$, so it gets stuck after exploring at most $\volout{L} \le \volest < 8 \cutsize \volest$ edges, and the algorithm returns the set of vertices reachable from $s$ in the residual graph as a certificate for an \voledgecut{\cutsize}{\le}{10\volest\cutsize}{s}.

There is a standard reduction from rooted (or $s$-$t$) vertex cut to rooted (or $s$-$t$) edge cut in directed graphs that involves splitting every vertex into two new vertices. By applying this reduction to \cref{theo:static_local_edge_cuts},
\cite{DBLP:conf/soda/ForsterNYSY20} obtain a local algorithm for computing rooted vertex cuts.

\subparagraph{Our contribution.} We extend their algorithm to the incremental setting. For directed edge cuts, we obtain a data structure that maintains such a cut under edge insertions by growing the search tree via a single-source incremental reachability structure. Our approach generalizes the incremental $s$-$t$ cut algorithm of \cite{GuptaK21} to the rooted, low-volume setting.

\begin{restatable}[\textsc{IncrementalRootedEC}]{theorem}{theoIncrementalEC} \label{thm:local_ec}
    Given a directed graph $G$ with $m$ edges and two positive integers $\cutsize, \volest$ with $\volest \leq \frac{m}{2 \cutsize}$, as well as a vertex $s \in V(G)$, there is a data structure that maintains $G$ under $\ell$ edge insertions supplied by a randomness-adaptive adversary $\cA$.
    After every insertion, the data structure reports either an \voledgecut{\cutsize}{\le}{3 \volest \cutsize}{s} or the symbol $\bot$.
    Let $(L, V(G) \setminus L)$ be one particular \voledgecut{\cutsize}{\le}{\volest}{s} in $G$.
    With probability $\ge \frac{1}{2}$, the data structure only outputs $\bot$ once $(L, V(G) \setminus L)$ is no longer a valid \voledgecut{\cutsize}{\le}{\volest}{s}.
    The total running time is $\cO(\volest  \cutsize^2 + \ell)$.
\end{restatable}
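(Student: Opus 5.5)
We make the Ford--Fulkerson-style local algorithm behind \cref{theo:static_local_edge_cuts} incremental. The structure maintains a flow $f$ from $s$ of value $j \le \cutsize$, consisting of $j$ unit paths, each ending at the tail of a sampled edge. It also maintains the residual graph $G_f$ and a single-source incremental reachability tree from $s$ in $G_f$, built with the folklore structure behind \cref{lemm:incReachability}.

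The search explores residual edges out of reached vertices and counts them. It stops in one of two ways. First, the count may reach the cap $2\cutsize\volest$. Then we sample one explored edge uniformly at random, augment one unit along the tree path to its tail, discard the tree, increment $j$, and start a fresh search. Second, the search may get stuck, meaning every residual edge out of the reached set $L'$ is explored. Then we report $(L', V\setminus L')$. Its out-boundary in $G$ consists of saturated flow edges, so $|\bndout{L'}| \le j \le \cutsize$, and $\volout{L'}$ is at most the cap plus $O(\cutsize)$ flow-reversal corrections, which is $\le 3\volest\cutsize$.

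When an edge $(u,v)$ is inserted while we are stuck, it is an edge of $G_f$. If $u \in L'$, we add it to the explored frontier and resume the search. If $u \notin L'$, we only store it in the adjacency list and spend $O(1)$. Once $j = \cutsize+1$ augmentations have happened, we output $\bot$ forever.

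Running time: each search explores at most $O(\cutsize\volest)$ edges, and there are at most $\cutsize+1$ searches, each followed by an augmentation costing $O(\cutsize\volest)$. Each insertion costs $O(1)$ beyond the exploration it triggers, which is already charged to the cap. The total is therefore $O(\volest\cutsize^2+\ell)$. The condition $\volest \le \frac{m}{2\cutsize}$ guarantees the cap can in principle be reached, and it keeps the initialization local: we never touch more than $O(\cutsize\volest)$ edges. We need adjacency lists with insertion in $O(1)$ and a per-vertex pointer to the next unexplored edge.

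Correctness for the fixed cut $(L, V\setminus L)$. The key claim: whenever the cap is reached while $(L,V\setminus L)$ is still a valid $s$-rooted $(\cutsize,\le\!\volest)$-edge cut, at most $\volout{L} \le \volest$ of the $2\cutsize\volest$ explored edges have their tail in $L$. So with probability at least $1-\frac{1}{2\cutsize}$, the sampled tail lies outside $L$, the augmenting path crosses $\bndout{L}$, and one more cut edge becomes saturated. Flow can be pushed back across the cut only through reversed edges, and the invariant is that the net flow across $\bndout{L}$ equals the number of crossing augmentations.

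The main obstacle is adaptivity. The adversary sees the random choices and can insert edges that invalidate the cut in some branches but not others. To handle this, condition on the first $\cutsize+1$ samples sequentially. Each sample is drawn after the explored set is fixed, so for each augmentation that happens while the cut is still valid, the conditional probability of landing in $L$ is at most $\frac{1}{2\cutsize}$, regardless of the history. A union bound over at most $\cutsize$ such augmentations then gives failure probability at most $\frac12$.

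On the good event, after at most $\cutsize$ augmentations (fewer if $|\bndout{L}|<\cutsize$) all edges of $\bndout{L}$ are saturated, and any $(\cutsize+1)$-st augmentation cannot cross. The search is then confined to $L$, where it explores at most $\volest$ edges plus new insertions inside $L$. It stays stuck, so it never reaches the cap or outputs $\bot$, while $L$ remains a valid cut. Because the search is not capped, inserted edges out of $L$ either break the cut (allowing $\bot$ later) or push $\volout{L}$ above $\volest$; either way validity ends.

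I would verify carefully that the count of tail-in-$L$ explored edges is bounded by $\volest$ even though the residual graph contains reversed flow edges. This is where the cap constant $2\cutsize\volest$, rather than FNYSY's $8\cutsize\volest$, needs care; if it fails, I would raise the cap and adjust the reported volume bound.
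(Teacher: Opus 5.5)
\textbf{Review.} Your construction and analysis are the paper's (\cref{alg:inc_local_ec} with \cref{lem:getcut_correct,lemm:one_cut_detected}): an incremental reachability tree in the residual graph capped at $2\volest\cutsize$ explored edges, a uniformly sampled explored edge, at most $\cutsize+1$ augmentations, and a union bound over per-augmentation failure probabilities $\le \frac{1}{2\cutsize}$. One step in your argument is wrong, though. An inserted edge from $L$ to $V\setminus L$ raises $|\bndout{L}|$ and $\volout{L}$ by one each. So it does \emph{not} end validity when $|\bndout{L}|<\cutsize$ and $\volout{L}<\volest$ held beforehand. The search then escapes $L$, may hit the cap, and further augmentations happen while the cut is still valid. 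Arguing from ``the moment of saturation'' therefore does not finish the proof.

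You can close this with your own invariant used as a potential. The quantity $D=|\bndout[G]{L}|-|\bndout[H]{L}|$ (your net flow out of $L$) is unchanged by insertions. It grows by exactly one per augmentation whose sampled tail lies outside $L$, and is unchanged otherwise (\cref{lem:path_reversal}). Take the good event to be that each of the first $\cutsize$ augmentations performed while the cut is valid has its tail outside $L$; your conditional argument bounds its failure probability by $\frac12$. Validity is monotone under insertions. Hence, if the cut is still valid during the $(\cutsize+1)$-st search, all $\cutsize$ earlier augmentations were performed while it was valid, and all crossed. Then $D=\cutsize\ge|\bndout[G]{L}|$, forcing $|\bndout[H]{L}|=0$. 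The search is confined to $L$, explores at most $\volout[H]{L}\le\volest<2\volest\cutsize$ edges, and never reaches the cap. So $\bot$ is not output while the cut is valid.

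The step you flag does hold. You should not raise the cap, because the reported bound $\volout[G]{L'}=|E_G(L',L')|+|\bndout[G]{L'}|<2\volest\cutsize+\cutsize\le 3\volest\cutsize$ depends on it. The explored edges with tail in $L$ are distinct edges of $H$ with tail in $L$, so there are at most $\volout[H]{L}$ of them. Reversing a path that starts at $s\in L$ never increases $\volout[H]{L}$ (\cref{lem:path_reversal}), and an insertion adds the same edge to $G$ and $H$. Hence $\volout[H]{L}\le\volout[G]{L}\le\volest$; in flow terms, the difference is the net flow out of $L$, which is nonnegative.

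Finally, the hypothesis $\volest\le\frac{m}{2\cutsize}$ is not about locality; it is what makes the reported cut proper. A stuck search has explored exactly the edges of $E_H(L',L')$, which number $|E_G(L',L')|$. This is fewer than $2\volest\cutsize\le m$, so $L'\neq V(G)$. Also state the correction term precisely as $|\bndout[G]{L'}|\le\cutsize$ rather than $O(\cutsize)$, since the constant $3$ needs it.
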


Then, via the aforementioned standard reduction, we obtain the subroutine for vertex cuts used by \cref{alg:main}:

\theoLocalVC*

\subsection{Edge Cuts}

\begin{algorithm}[t]
\algrenewcommand{\algorithmicloop}{\textbf{repeat}}
\caption{Rooted edge cut algorithm (Algorithm 2 in \cite{DBLP:conf/soda/ForsterNYSY20}).}
\label{alg:local_ec}
\begin{algorithmic}[1]
\newcommand{\Edfs}{E_{\mathrm{DFS}}}
\Function{LocalEC}{$s, \volest, \cutsize$}
    \Loop{} $\cutsize+1$ times
        \State Grow a DFS tree $T$ starting from $s$ and stop once exactly $8 \cutsize \volest$ edges have been visited.
        \State Let $E_{\mathrm{DFS}}$ be the set of edges visited.
        \If{$|\Edfs| < 8 \cutsize \volest$} \Return $V(T)$
        \Else
            \State Sample an edge $(u, v) \in \Edfs$ uniformly at random.
            \State Reverse the direction of edges in the path $P_{s,u}$ in $T$ from $s$ to $u$.
        \EndIf
    \EndLoop
    \Return $\bot$
\EndFunction
\end{algorithmic}
\end{algorithm}

The local algorithm of \cite{DBLP:conf/soda/ForsterNYSY20} for computing rooted edge cuts (\cref{theo:static_local_edge_cuts}) is depicted in \cref{alg:local_ec}.
The tree $T$ grown by this algorithm need not be a DFS tree: any way of growing a rooted directed tree (BFS, etc.) will do.
To make \cref{alg:local_ec} incremental, we grow a reachability tree $T$ under edge insertions as in \cref{clm:inc_reachability}.
Whenever an edge is inserted, we check whether it can be used to extend $T$, and if so, continue growing $T$ from there.
This way, we can ``pause'' \cref{alg:local_ec} whenever it outputs a cut,
and ``resume'' the algorithm whenever an edge is inserted.

The resulting algorithm is depicted in \cref{alg:inc_local_ec}.
We use $T(u)$ to denote the parent of a node $u$ in the reachability tree,
and explicitly maintain the set $L$ of vertices reachable from $s$ via this tree.
We use $H$ to denote the current state of the graph (after some edge reversals),
and $G$ to denote the original state of the graph.
Note that $H$ need not be simple: reversing an edge $(u,v)$ produces $(v,u)$, which may already be present in $H$, so $H$ can contain parallel edges and is thus a multigraph.
Since \cref{alg:inc_local_ec} restores $H$ to $G$ before returning $\bot$,
we need not spend $\Theta(m)$ time creating a fresh copy of $G$ for each of the many instances we run (in the context of \cref{alg:main});
instead, a single copy of $G$ can be passed to them one after another.

\begin{algorithm}[t]
\caption{Algorithm for incremental $s$-rooted $(\cutsize, \le\!\volest)$-edge cut}
\label{alg:inc_local_ec}
\begin{algorithmic}[1]
\Function{IncrementalRootedEC}{$G, s \in V(G), \cutsize, \volest$}
    \State $H \gets$ implicit copy of $G$
    \State Implicitly set $T(v) \gets \bot$ for all $v \in V(G)$
    \For{$k = 0, \dots, \cutsize$}
        \State $T(s)\gets\!\top$, $L \gets \{s\}$, $E_L \gets \emptyset$
        \State \Call{GrowReachabilityTree}{$\incout[H]{s}$}
        \While{$|E_L| < 2 \volest \cutsize$} \label{line:while}
            \State \textbf{Output} $L$
            \State $e = (u, v) \gets$ next edge insertion
            \State Add $e$ to $G$ and $H$
            \If{$T(u) \ne \bot$}
                \State \Call{GrowReachabilityTree}{$e$}
            \EndIf
        \EndWhile
        \State Sample an edge $(u,v) \in E_L$ uniformly at random
        \State Reverse the direction of edges on the path $P_{s,u}$ in $T$ from $s$ to $u$
        \State $T(v)\gets\!\bot$ for all $v \in L$
    \EndFor
    \State Undo all edge reversals done on $H$
    \State \textbf{Output} $\bot$
\EndFunction

\Statex

\Function{GrowReachabilityTree}{$A$}
    \For{$e=(u,v) \in A$ while $|E_L| < 2\volest\cutsize$}
        \State Add $e$ to $E_L$
        \If{$T(v) = \bot$}
        \State Add $v$ to $L$
            \State $T(v) \gets u$
            \State \Call{GrowReachabilityTree}{$\incout[H]{v}$}
        \EndIf
    \EndFor
\EndFunction
\end{algorithmic}
\end{algorithm}

To argue correctness of \cref{alg:inc_local_ec}, we need the following lemma from \cite{DBLP:conf/soda/ForsterNYSY20} about the effect of reversing a path on the cut size and volume.

\begin{lemma}[{\cite[Lemma 3.5]{DBLP:conf/soda/ForsterNYSY20}}]
\label{lem:path_reversal}
    Let $H$ be a directed multigraph, $L \subseteq V(H)$ a set with $s \in L$, and $P_{s,v}$ a path in $H$ from $s$ to some vertex $v$.
    Suppose we reverse the direction of every edge of $P_{s,v}$.
    If $v \notin L$, then $|\bndout[H]{L}|$ and $\volout[H]{L}$ each decrease by exactly one; otherwise, both remain unchanged.
\end{lemma}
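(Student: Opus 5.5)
We prove the statement by direct accounting of how the two quantities change edge by edge. Write $P_{s,v} = (s = x_0, x_1, \dots, x_k = v)$. Reversing $P_{s,v}$ removes each edge $(x_{i-1}, x_i)$ from $H$ and inserts $(x_i, x_{i-1})$. Since $H$ is a multigraph, we track multiplicities, so parallel edges cause no trouble. Both $\volout[H]{L}$ and $|\bndout[H]{L}|$ change only through the $k$ swapped edges. We sum the contribution of each swap and show that the sum telescopes.

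For a single swap $(a,b) \mapsto (b,a)$, the change in $\volout[H]{L}$ is $[b \in L] - [a \in L]$. The old edge contributed one unit to the out-degree of $a$, and the new edge contributes one unit to the out-degree of $b$.

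For $|\bndout[H]{L}|$, the edge $(a,b)$ lies in $\bndout[H]{L}$ exactly when $a \in L$ and $b \notin L$, and the reversed edge $(b,a)$ lies in it exactly when $b \in L$ and $a \notin L$. The change is therefore $[b \in L, a \notin L] - [a \in L, b \notin L]$. Checking the four membership cases shows this also equals $[b \in L] - [a \in L]$: both sides are $0$ if $a,b$ are both in $L$ or both outside, $+1$ if only $b \in L$, and $-1$ if only $a \in L$.

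Summing over $i = 1, \dots, k$ telescopes. Each quantity changes by
\[
\sum_{i=1}^{k} \pa{[x_i \in L] - [x_{i-1} \in L]} = [v \in L] - [s \in L] = [v \in L] - 1,
\]
using $s \in L$. This is $-1$ if $v \notin L$ and $0$ otherwise, which is exactly the claim.

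The argument needs one care point. The path need not be simple, and it could repeat an edge in the multigraph sense. We treat it as a sequence of distinct edge occurrences, so each reversal acts on its own edge copy, and the per-edge computations remain independent and additive. If $k = 0$, then $v = s \in L$ and nothing changes, consistent with the formula.

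I expect no real obstacle. The only step needing attention is verifying the four-case identity for the boundary term.
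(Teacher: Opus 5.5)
Your proof is correct and is essentially the paper's argument: your telescoping sum $\sum_{i}([x_i\in L]-[x_{i-1}\in L])$ is a formal version of the paper's count of crossings from $V(H)\setminus L$ into $L$ minus crossings from $L$ to $V(H)\setminus L$ along $P_{s,v}$. The only minor difference is that the paper gets the volume statement from the boundary statement, noting that a reversal never changes the number of edges with both endpoints in $L$, whereas you compute the out-degree change directly.
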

\begin{proof}
    Reversing an edge that leaves $L$ turns it into an edge entering $L$, and vice versa; reversing an edge with both endpoints inside (or both outside) $L$ leaves the cut unaffected. Hence the change in $|\bndout[H]{L}|$ equals the number of times $P_{s,v}$ crosses from $V(H) \setminus L$ back to $L$ minus the number of times it crosses from $L$ to $V(H) \setminus L$.

    Since $P_{s,v}$ starts at $s \in L$, if $v \notin L$ it crosses from $L$ to $V(H) \setminus L$ exactly one more time than it crosses back, so reversing its edges decreases $|\bndout[H]{L}|$ by exactly one. If instead $v \in L$, the two kinds of crossings occur equally often, so $|\bndout[H]{L}|$ remains unchanged. The same argument applies to $\volout[H]{L}$, since a reversal never changes the number of edges with both endpoints in $L$.
\end{proof}

The following two lemmas establish the correctness of \cref{alg:inc_local_ec} in the incremental setting. They are analogous to \cite[Lemmas A.4 and A.5]{DBLP:conf/soda/ForsterNYSY20}, but since we do not require a $(1+\varepsilon)$-approximation of the cut size, our proofs are simpler and give a better bound on the volume.

\begin{lemma}
\label{lem:getcut_correct}
    Suppose that \cref{alg:inc_local_ec} (after any number of edge insertions) outputs $L$. Then $L \neq V(G)$, $|\bndout[G]{L}| \le \cutsize$, and $\volout[G]{L} \le 3\volest\cutsize$.
\end{lemma}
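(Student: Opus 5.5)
Whenever $L$ is output, we are inside the while loop of iteration $k \le \cutsize$, so the reachability tree has been grown until it either ran out of edges or hit the cap. Since $|E_L| < 2\volest\cutsize$ at that moment, the growth stopped because it was exhausted: every edge of $H$ leaving a vertex of $L$ has been put into $E_L$. The plan is to use this to show $\bndout[H]{L} = \emptyset$ and $\volout[H]{L} = |E_L| < 2\volest\cutsize$. Then we transfer both facts back to $G$ through the $k$ path reversals done so far, applying \cref{lem:path_reversal} to the fixed set $L$.

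\emph{Step 1 (closure in $H$).} We check the invariant that, while the loop is waiting, every edge of $\incout[H]{v}$ for $v \in L$ lies in $E_L$. This holds because \textsc{GrowReachabilityTree} scans all out-edges of each newly added vertex when the cap is not reached. It is also maintained under insertions, since an inserted $(u,v)$ with $T(u) \ne \bot$ is passed to \textsc{GrowReachabilityTree}. Inserted edges are added to $H$ unreversed, and reversals only happen between iterations, after which $T$ and $L$ are reset and rebuilt. Consequently every out-edge of $L$ in $H$ ends in $L$, so $\bndout[H]{L}=\emptyset$. Also $\volout[H]{L} = |E_L| < 2\volest\cutsize$, because each edge of $E_L$ is an out-edge of a vertex in $L$ and each such edge is counted once.

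\emph{Step 2 (undoing reversals).} Obtain $G$ from $H$ by re-reversing the $k$ paths $P_{s,u_1},\dots,P_{s,u_k}$ in reverse order. Each path starts at $s \in L$, since $L$ always contains $s$. Reversing a path $s \to u$ is undone by reversing the path $u \to s$. Viewed for the fixed set $L$, undoing one path changes $|\bndout{L}|$ and $\volout{L}$ by exactly the negation of the change caused when it was applied. By \cref{lem:path_reversal}, that change is either $0$ or $-1$ for both quantities simultaneously. Hence undoing it increases each quantity by at most $1$. With $k \le \cutsize$ undone paths we get $|\bndout[G]{L}| \le 0 + \cutsize$ and $\volout[G]{L} \le 2\volest\cutsize + \cutsize \le 3\volest\cutsize$, using $\volest \ge 1$. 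One subtlety is that \cref{lem:path_reversal} is stated for a path in the current multigraph. Each $P_{s,u_j}$ was a genuine path in $H$ at the time of its reversal, and later insertions do not affect it, so this applies.

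\emph{Step 3 ($L \ne V(G)$).} If $L = V(G)$, then $\volout[H]{V} = m + (\text{inserted edges}) \ge m$. Since $|E_L| = \volout[H]{L}$ and $|E_L| < 2\volest\cutsize \le m$ by the hypothesis $\volest \le \frac{m}{2\cutsize}$, this is a contradiction. Here the relevant $m$ is the edge count, which only grows.

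The main obstacle is Step 2: making the bookkeeping of \cref{lem:path_reversal} rigorous for a set $L$ defined only at the end, with the paths interleaved with insertions. We handle it by treating the insertions as commuting with the reversals, since inserted edges are never on the reversed paths' earlier states. We then apply the lemma to the fixed $L$ once per path.
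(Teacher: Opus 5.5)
Your proof is correct and follows essentially the same route as the paper. Exhaustion of the search gives $\bndout[H]{L}=\emptyset$ and $|E_L|<2\volest\cutsize$, and \cref{lem:path_reversal} applied to the final set $L$ charges each of the at most $\cutsize$ reversals at most one unit of discrepancy between $G$ and $H$. The paper tracks this as a forward potential $|\bndout[G]{L}|-|\bndout[H]{L}|$ and obtains the volume bound from $E_G(L,L)=E_H(L,L)=E_L$, while you undo the reversals backward and bound the volume directly; the two arguments are equivalent.
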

\begin{proof}
    We first show that $L \neq V(G)$. Since \cref{alg:inc_local_ec} only outputs a set $L$ inside the while loop on \cref{line:while}, we have $|E_L| < 2\volest\cutsize$.
    Therefore, in the current iteration of the outer for loop,
    the condition $|E_L| < 2\volest\cutsize$ was true in all calls to \textsc{GrowReachabilityTree},
    so $L$ is indeed the set of vertices reachable from $s$ in $H$, and $E_L$ contains all outgoing edges in $H$ from $L$.
    In particular, $|\bndout[H]{L}| = 0$ and $E_L = E_H(L, L)$.
    This is true even in the presence of edge insertions:
    if the inserted edge $(u,v)$ has $u \notin L$, then it cannot be reached from $s$, and if $u \in L$, then we call \textsc{GrowReachabilityTree}$(\{(u,v)\})$, which visits any newly reachable vertices and edges. Note that the check $T(u) \ne \bot$ in \cref{alg:inc_local_ec} is equivalent to $u \in L$.

    Since $H$ is obtained from $G$ by reversing edges, which preserves their endpoints, $|E_{G}(L, L)| = |E_H(L, L)| = |E_L| < 2\volest\cutsize \le m$, using $\volest \le \frac{m}{2\cutsize}$.
    In particular, we cannot have $L = V(G)$, as this would imply $|E_G(L,L)| = |E(G)| = m$.
    This shows $L \neq V(G)$.

    Consider $D \coloneqq |\bndout[G]{L}| - |\bndout[H]{L}|$ over the execution.
    Initially $H = G$, so $D$ starts out at $0$; an edge insertion adds its edge to $G$ and $H$ in the same orientation and leaves $D$ unchanged;
    and by \cref{lem:path_reversal}, reversing an augmenting path $P_{s,u}$ keeps $|\bndout[G]{L}|$ fixed while decreasing $|\bndout[H]{L}|$ by at most one, so it increases $D$ by at most one.
    As there are at most $\cutsize$ augmentations, $D = |\bndout[G]{L}| - |\bndout[H]{L}| \le \cutsize$. Since $|\bndout[H]{L}| = 0$ at the end of the search, we obtain $|\bndout[G]{L}| \le \cutsize$.

    Finally, $\volout[G]{L}$ counts the original edges with tail in $L$, namely those within $L$ together with those leaving $L$, so
    \[
        \volout[G]{L} = |E_G(L, L)| + |\bndout[G]{L}| = |E_L| + |\bndout[G]{L}| \le 2\volest\cutsize + \cutsize \le 3\volest\cutsize,
    \]
    using $\volest \ge 1$.
\end{proof}

\begin{lemma} \label{lemm:one_cut_detected}
    Let $(L, V(G) \setminus L)$ be one particular \voledgecut{\cutsize}{\le}{\volest}{s} in $G$.
    Then, with probability $\ge \frac{1}{2}$, \cref{alg:inc_local_ec} run on the graph $G$ with edge insertions supplied by a randomness-adaptive adversary $\cA$
    only outputs $\bot$ once $(L, V(G) \setminus L)$ is no longer a valid \voledgecut{\cutsize}{\le}{\volest}{s}.
\end{lemma}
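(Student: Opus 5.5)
We follow the static analysis of \cite{DBLP:conf/soda/ForsterNYSY20}, adapted to an adaptive adversary. Fix the particular \voledgecut{\cutsize}{\le}{\volest}{s} $(L^*, V(G)\setminus L^*)$, and write $L^*$ to avoid clashing with the algorithm's variable $L$. We track it only while it is still a valid \voledgecut{\cutsize}{\le}{\volest}{s}: once an insertion increases $|\bndout[G]{L^*}|$ beyond $\cutsize$ or $\volout[G]{L^*}$ beyond $\volest$, there is nothing left to prove. An insertion of an edge with tail outside $L^*$ changes neither quantity and is harmless. So assume the cut stays valid up to the moment $\bot$ would be output; we must show this moment is reached with probability at most $\frac12$.

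\textbf{Key invariant.} Call an augmentation (iteration $k$ of the outer for loop) \emph{good} if the sampled edge $(u,v)\in E_L$ has $u\notin L^*$. Define $c = |\bndout[H]{L^*}|$. This quantity is unchanged by the algorithm's own operations except path reversals. By \cref{lem:path_reversal}, a good augmentation decreases $c$ by exactly one and a bad one leaves it unchanged. Inserting an edge with tail in $L^*$ and head outside increases $c$ by one and also increases $|\bndout[G]{L^*}|$ by one. Hence
\[
  c \le |\bndout[G]{L^*}| - \#\{\text{good augmentations so far}\} \le \cutsize - \#\{\text{good augmentations so far}\}.
\]
We also need the analogous bound for $\volout[H]{L^*}$: it is at most $\volout[G]{L^*}\le\volest$, because reversals only decrease it (\cref{lem:path_reversal}). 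If all $\cutsize+1$ augmentations were needed and at least $\cutsize$ of them were good, then just before the last augmentation we have $c=0$. The reachable set from $s$ in $H$ is then contained in $L^*$, so $|E_L| \le \volout[H]{L^*}\le \volest < 2\volest\cutsize$. The while loop therefore never exits, and $\bot$ is never output unless the cut stops being valid. It follows that $\bot$ requires at least one bad augmentation among the first $\cutsize$ (indeed, $\cutsize$ good ones among the first $\cutsize$ suffice).

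\textbf{Probability of a bad augmentation.} When an augmentation occurs, $|E_L| = 2\volest\cutsize$. The edges of $E_L$ whose tail lies in $L^*$ number at most $\volout[H]{L^*}\le\volest$. So, conditioned on the entire history (the adversary's choices and all previous random bits), the freshly sampled edge is bad with probability at most $\volest/(2\volest\cutsize) = \frac{1}{2\cutsize}$. This conditional bound is the crux against a randomness-adaptive adversary. It holds because the sample is drawn at the moment of augmentation, with randomness unseen by the adversary, from a set whose composition the adversary has already fixed. A union bound over at most $\cutsize$ augmentations shows that some augmentation among the first $\cutsize$ is bad with probability at most $\frac12$. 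This union bound is valid for adaptively defined events, since each event has conditional probability at most $\frac{1}{2\cutsize}$ given the past.

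The main obstacle I expect is making the invariant airtight under interleaved insertions. Insertions between augmentations can raise $\volout[H]{L^*}$ and $c$, and the tree $T$ is grown incrementally rather than from scratch. I will handle this by always comparing $H$-quantities to the current $G$-quantities, which are assumed to remain within $(\cutsize,\volest)$. I also need to check that $|E_L|$ equals exactly $2\volest\cutsize$ at sampling time, so that the denominator bound holds; it does, since the loop in \textsc{GrowReachabilityTree} stops at that threshold.
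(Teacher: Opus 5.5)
Your proposal is correct and follows the paper's proof essentially step for step. It uses the same invariant, that $|\bndout[G]{L}| - |\bndout[H]{L}|$ counts the successful augmentations (unchanged by insertions), to force $|\bndout[H]{L}| = 0$ in the $(\kappa+1)$-st iteration. It also uses the same per-sample failure bound $\frac{\nu}{2\nu\kappa} = \frac{1}{2\kappa}$, with a union bound that stays valid against a randomness-adaptive adversary because each sample is drawn from the current $H$. The only slip is wording: the phrase ``at least $\kappa$ of them were good'' should read ``the first $\kappa$ were all good,'' which is what your conclusion actually uses.
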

\begin{proof}
    For the outer for loop of \cref{alg:inc_local_ec}, we say that the $i$'th iteration \emph{succeeds}
    if the tail of the sampled edge lies outside of $L$.
    We denote the event of this happening by $\event_i$,
    and the good event that the first $\cutsize$ iterations all succeed by $\event$.

    First we show that, if $\event$ happens, then the $(\cutsize+1)$'st iteration of the outer for loop does not terminate while
    $(L, V(G) \setminus L)$ is still a valid \voledgecut{\cutsize}{\le}{\volest}{s} in $G$;
    so in particular, \cref{alg:inc_local_ec} cannot output $\bot$ during this time.
    Suppose that $(L, V(G) \setminus L)$ is currently a valid \voledgecut{\cutsize}{\le}{\volest}{s} in $G$.
    By \cref{lem:path_reversal}, whenever an iteration of the outer for loop of \cref{alg:inc_local_ec} succeeds,
    the quantity $|\bndout[G]{L}| - |\bndout[H]{L}|$ increases by one.
    Furthermore, edge insertions do not change this quantity.
    Since $(L, V(G) \setminus L)$ is valid, we have $|\bndout[G]{L}| \le \cutsize$, and if $\event$ happens, then the first $\cutsize$ iterations all succeeded.
    Hence we have $|\bndout[G]{L}| - |\bndout[H]{L}| = \cutsize \ge |\bndout[G]{L}|$, forcing $|\bndout[H]{L}| = 0$.
    Therefore, in the $\cutsize+1$'st iteration, there are no edges in $H$ from $L$ to $V(G) \setminus L$.
    Thus, the tree $T$ is always contained in $L$, and at most $\volout[H]{L}$ edges can get added to $E_L$.
    Also, since $(L, V(G) \setminus L)$ is valid, $\volout[H]{L} \le \volout[G]{L} \le \volest$.
    Hence, in the $\cutsize+1$'st iteration of the outer for loop, the while loop on \cref{line:while} cannot terminate while $(L, V(G) \setminus L)$ is a valid \voledgecut{\cutsize}{\le}{\volest}{s} in $G$.
    
    What remains is to analyze the probability that $\event$ happens.
    To bound $\Pr[\event_i]$, let $E_L$ be the set of edges we sample from in iteration $i$, and let $A = \{(u,v) \in E_L \mid u \in L\}$ be the subset of those edges whose tail lies in $L$. Observe that $\Pr[\event_i] = 1 - \frac{|A|}{|E_L|}$.
    Every edge in $A$ is an out-edge of $L$ in $H$, so $|A| \le \volout[H]{L} \le \volout{L}$, where the second inequality holds because, by \cref{lem:path_reversal}, each augmentation only decreases $\volout[H]{L}$.
    Moreover, $|E_L| \ge 2\volest\cutsize$ due to the while loop on \cref{line:while}.
    Therefore, we have
    \[
        1 - \Pr[\event_i] = \frac{|A|}{|E_L|} \le \frac{\volout{L}}{|E_L|} \le \frac{\volest}{2\volest\cutsize} = \frac{1}{2\cutsize}.
    \]
    Note that the event $\event_i$ only depends on the current state of the graph $H$ at the time of the sampling,
    and not on what future edge insertions happen.
    Thus, this holds even if the adversary $\cA$ is randomness-adaptive.
    Finally, by the union bound, we have
    \[
    \Pr[\event] = \Pr[\event_1 \cap \dots \cap \event_\cutsize] \ge 1 - \sum_{i=1}^{\cutsize}(1 - \Pr[\event_i]) \ge 1 - \frac{\cutsize}{2 \cutsize} = \frac{1}{2}.
    \]

\end{proof}

Next, we analyze the running time of \cref{alg:inc_local_ec} in the incremental setting.

\begin{lemma}
\label{lem:local_ec_running_time}
    Running \cref{alg:inc_local_ec} for $\ell$ edge insertions takes $\cO(\volest \cutsize^2 + \ell)$ total time.
\end{lemma}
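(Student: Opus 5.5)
The plan is a charging argument that splits the work into per-iteration search work, path reversals, and constant per-insertion overhead. Throughout, we use that $H$ and all bookkeeping ($T(\cdot)$, $L$, $E_L$) are stored implicitly. Concretely, $T(v)=\bot$ is realized by a timestamp or by the explicit reset loop over $L$, so initialization costs nothing proportional to $m$.

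First, fix one iteration $k$ of the outer for loop. Every call to \textsc{GrowReachabilityTree} iterates over edges and adds each scanned edge to $E_L$. The for loop stops as soon as $|E_L| \ge 2\volest\cutsize$, so at most $2\volest\cutsize$ edge scans occur in the iteration. Each scan does $O(1)$ work besides possibly recursing. The recursion is charged to the vertex $v$ added to $L$, and there are at most $|E_L|+1$ such vertices, since each was discovered through a scanned edge. Two details need care here.
\begin{itemize}
\item A recursive call on $\incout[H]{v}$ must be able to stop early and later resume, so we keep for each $v \in L$ a pointer into its adjacency list. Then resuming never rescans edges, and inserted edges out of $v$ are appended to the list.
\item The paused state must be consistent. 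When the budget is hit mid-iteration, the while loop exits, so no resumption is needed across a pause that ends the iteration. When the while loop pauses for an insertion, the DFS has fully finished, because $|E_L|<2\volest\cutsize$ means no call was cut off.
\end{itemize}
Hence the tree growth in one iteration costs $O(\volest\cutsize)$.

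Second, bound the end-of-iteration work. Sampling a uniform edge of $E_L$ takes $O(1)$ if $E_L$ is stored as an array. The reversed path $P_{s,u}$ lies in $T$, which has at most $|L| = O(\volest\cutsize)$ vertices. Resetting $T(v)\gets\bot$ for $v \in L$ also costs $O(|L|)$. Each reversal must be $O(1)$ in a multigraph adjacency representation: store each edge in the out-list of its tail and in the in-list of its head, and move it between lists in constant time via stored positions. All reversals are also logged so that they can be undone at the end, and undoing them costs the same as performing them. Over $\cutsize+1$ iterations, all of this sums to $O(\volest\cutsize^2)$.

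Third, charge each insertion $O(1)$ for adding $e$ to $G$ and $H$ and checking $T(u)\ne\bot$. Any growth it triggers is already counted in the per-iteration $O(\volest\cutsize)$ budget, since every edge it causes to be scanned enters $E_L$. This gives $O(\ell)$ for insertions, and the total is $O(\volest\cutsize^2 + \ell)$.

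The main obstacle is making sure nothing secretly costs $\Theta(m)$ or $\Theta(\deg)$ outside the budget. Scanning an adjacency list might seem to cost more than the number of edges added to $E_L$, but every scanned edge is added to $E_L$, so it does not. The implicit copy of $G$ and the implicit initialization of $T$ must cost nothing proportional to $m$, and the resumable-pointer argument handles this.
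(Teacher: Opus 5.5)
Your decomposition matches the paper's proof. Tree growth in one outer iteration is charged to $E_L$, with $|E_L| \le 2\nu\kappa$. Sampling, reversing $P_{s,u}$ and resetting $T$ cost $O(\nu\kappa)$ per iteration. The $\kappa+1$ iterations together with the undo log give $O(\nu\kappa^2)$, and each insertion pays $O(1)$.

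However, your accounting never charges one line of \cref{alg:inc_local_ec}: the \textbf{Output} $L$ step inside the while loop, which runs once for every insertion processed. You list the per-insertion work as adding $e$ to $G$ and $H$ and testing $T(u) \ne \bot$. If $L$ is actually written out, each output costs $\Theta(|L|)$. While the algorithm sits in the while loop, $|L|$ can be close to $2\nu\kappa$. That gives $O(\nu\kappa\,\ell)$ for the insertions rather than $O(\ell)$, so the claimed bound does not follow. The fix, which the paper uses, is to keep $L$ in a linked list and output a pointer to it in $O(1)$ time. This works because $L$ only grows within an outer iteration and is discarded when the iteration ends.

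There is a smaller omission of the same kind. To reverse $P_{s,u}$ in $O(|P_{s,u}|)$ time, you need, for every $v \in L$, a handle to the tree edge $(T(v), v)$ itself, not just the parent vertex $T(v)$. Otherwise, locating that edge means searching $T(v)$'s adjacency list, which needs a separate argument. Store the edge pointer at the moment $T(v)$ is set, as the paper does.

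Your resumable-pointer bullet, by contrast, is unnecessary. As your own second bullet shows, the search is never suspended mid-scan across an insertion, and inserted edges are handled by the explicit call on $\{e\}$. With the pointer-to-$L$ output and the stored tree-edge handles, your argument is complete and coincides with the paper's.
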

\begin{proof}
    We assume that $G$ is provided as doubly linked adjacency lists, and that $H$ is not an actual copy of $G$ but $G$ itself, with edges reversed in place.
    This way, an edge reversal can be performed in $\cO(1)$ time, and by keeping track of the pointer changes during an edge reversal, we can undo each edge reversal in $\cO(1)$ time as well.
    
    Similarly, we assume that the array $T$ is pre-initialized, that is $T(v) \gets \bot$ for all $v \in V(G)$, and we will restore $T$ to this state at the end of each iteration of the outer for loop.
    Also, whenever \textsc{GrowReachabilityTree} sets $T(v) \gets u$, we additionally store a pointer to the corresponding edge $(u, v)$ in $T(v)$'s adjacency list.
    This allows us to find the edge $(T(v), v)$ to be reversed in $\cO(1)$ time.

    Furthermore, we do not output the vertices of $L$ one by one; instead, we output a pointer to the linked list storing $L$, in $\cO(1)$ time.

    Consider one iteration of the outer for loop.
    Across this iteration, \textsc{GrowReachabilityTree} runs in $\cO(|E_L|)$ total time: each iteration of its for loop takes $\cO(1)$ time, adding one edge to $E_L$, and possibly adding one vertex to $L$ as well as updating its entry in $T$.
    Resetting $T$, $L$, and $E_L$ at the end of the iteration therefore also takes $\cO(|E_L|)$ time, as does sampling an edge $(u, v) \in E_L$ and reversing the tree path $P_{s,u}$.
    Since $|E_L| \le 2\volest\cutsize$, one iteration takes $\cO(\volest \cutsize)$ time, so the $\cutsize + 1$ iterations take $\cO(\volest \cutsize^2)$ time in total.
    These iterations reverse $\cO(\volest \cutsize^2)$ edges overall, so undoing all reversals at the end takes $\cO(\volest \cutsize^2)$ time as well.
    Finally, per edge insertion the algorithm spends only $\cO(1)$ time, to check whether $T(u) \ne \bot$ and to add the new edge to $G$ and $H$, which sums up to $\cO(\ell)$ time over all $\ell$ insertions.
    Altogether, the running time is $\cO(\volest \cutsize^2 + \ell)$.
\end{proof}

We are now ready to prove \cref{thm:local_ec}:
\theoIncrementalEC*
\begin{proof}
    By \cref{lem:getcut_correct}, if \cref{alg:inc_local_ec} outputs a set $L$ then $(L, V(G) \setminus L)$ is an \voledgecut{\cutsize}{\le}{3 \volest \cutsize}{s}.
    By \cref{lemm:one_cut_detected}, with probability at least $\frac{1}{2}$, the algorithm only outputs $\bot$ once
    one particular \voledgecut{\cutsize}{\le}{\volest}{s} is no longer valid, even if the adversary is randomness-adaptive.
    By \cref{lem:local_ec_running_time}, the total running time of \cref{alg:inc_local_ec} is $\cO(\volest \cutsize^2 + \ell)$.
\end{proof}

\subsection{Vertex Cuts}

We now use \cref{thm:local_ec} to prove \cref{thm:local_vc}.
Let $G$ be a directed graph with $n$ vertices and $m$ edges, and let $s \in V(G)$. We define the \emph{split graph} $G'$ of $G$ with respect to $s$ as follows. Every vertex $v \in V(G) \setminus \{s\}$ is split into an \emph{in-vertex} $v^-$ and an \emph{out-vertex} $v^+$ joined by an \emph{internal edge} $(v^-\!, v^+)$; every edge entering $v$ in $G$ now enters $v^-$, and every edge leaving $v$ now leaves $v^+$. Formally, $V(G') = \{v^-\!, v^+ \mid v \in V(G) \setminus \{s\}\} \cup \{s\}$ and $E(G') = \{(u^+\!, v^-) \mid (u,v) \in E(G)\} \cup \{(v^-\!, v^+) \mid v \in V(G) \setminus \{s\}\}$, where for convenience we write $s^- = s^+ = s$. Observe that $G'$ has $n' = 2n-1$ vertices and $m' = m+n-1$ edges.

The following two lemmas relate $G$ and $G'$, reducing the problem of finding a vertex cut in $G$ to that of finding an edge cut in $G'$. Their proofs are similar to those of \cite[Lemmas 4.4 and 4.5]{DBLP:conf/soda/ForsterNYSY20}, but improve the constants in the volume bounds.

\begin{lemma}
    \label{lem:vc_to_ec}
    For a directed graph $G$, a vertex $s \in V(G)$, and two positive integers $\cutsize, \volest$ with $\mindegout[G] > \cutsize$, let $(L, S, R)$ be an \volvertcut{\cutsize}{\le}{\volest}{s} in $G$, and let $G'$ be the split graph of $G$ with respect to $s$. Put $L' = \{v^-\!,v^+ \mid v \in L\} \cup \{v^- \mid v \in S\}$. Then $(L', V(G') \setminus L')$ is an \voledgecut{\cutsize}{\le}{2\volest}{s} in $G'$.
\end{lemma}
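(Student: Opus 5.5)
We check the three requirements on $(L', V(G')\setminus L')$ in order: it is a proper cut containing $s$, it has at most $\cutsize$ out-edges, and its out-volume is at most $2\volest$.

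\emph{Properness.} Since $s \in L$ and $s^- = s^+ = s$, we have $s \in L'$. Because $R \neq \emptyset$, any $r \in R$ satisfies $r \neq s$. Then $r^-, r^+ \notin L'$, so $L' \neq V(G')$.

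\emph{Boundary.} We enumerate the edges of $G'$ with tail in $L'$ and head outside $L'$, by type.
\begin{itemize}
\item \emph{Internal edges $(v^-, v^+)$.} For $v \in L \setminus \{s\}$ both endpoints lie in $L'$. For $v \in R$ the tail $v^-$ lies outside $L'$. For $v \in S$ the tail $v^-$ lies in $L'$ and the head $v^+$ does not. So exactly the $|S| = \cutsize$ internal edges of $S$ leave $L'$. (Note $s \notin S$, since $s \in L$.)
\item \emph{Original edges $(u^+, v^-)$ with $u \in L$.} Since $E_G(L, R) = \emptyset$, we have $v \in L \cup S$, so $v^- \in L'$ and the edge does not leave $L'$.
\item \emph{Original edges $(u^+, v^-)$ with $u \notin L$.} These have $u^+ \notin L'$, since $u^+ \in L'$ forces $u \in L$. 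Their tails are outside $L'$, so they cannot leave it.
\end{itemize}
Hence $|\bndout[G']{L'}| = \cutsize$.

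\emph{Volume.} Each vertex contributes the following out-degree in $G'$:
\begin{itemize}
\item each $v^+$ with $v \in L$ contributes $\dgout[G]{v}$;
\item each $v^-$ with $v \in L \setminus \{s\}$ contributes $1$;
\item each $v^-$ with $v \in S$ contributes $1$.
\end{itemize}
Summing gives
\[
\volout[G']{L'} \le \volout[G]{L} + |L| + \cutsize .
\]
This is the one place that needs care. We use $\mindegout[G] > \cutsize$, which gives $\dgout[G]{v} \ge 1$ for every $v$ and hence $|L| \le \volout[G]{L}$. It remains to absorb the $+\cutsize$ term. Fix any $v \in L$. All out-neighbours of $v$ lie in $L \cup S$, so $\dgout[G]{v} > \cutsize$ gives $\volout[G]{L} \ge \cutsize + 1$.

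A naive total of $\volout{L} + |L| + \cutsize$ could reach nearly $3\volest$, so a sharper count of $|L|$ is needed to stay within $2\volest$. The plan is to charge the $+1$ terms to edges of $G$:
\begin{itemize}
\item for $v \in L \setminus \{s\}$, charge the $1$ to one edge entering $v$ from $L$;
\item for $v \in S$, charge the $1$ to one edge entering $v$ from $L$, which exists in the minimal case. If it does not, we instead bound $|L| + \cutsize \le |L| + \cutsize \cdot 1$ using $\dgout{u} \ge \cutsize + 1$ for $u \in L$.
\end{itemize}
The cleanest route is to show $|L| + \cutsize \le \volout[G]{L}$ directly. Pick $u \in L$. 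Its at least $\cutsize + 1$ out-edges go to distinct vertices of $(L \setminus \{u\}) \cup S$, which yields $\volout{L} \ge \dgout{u} + (|L| - 1) \ge \cutsize + |L|$. Here the first term uses $\dgout{u} \ge \cutsize + 1$, and the second uses that every other vertex of $L$ has out-degree at least $1$.

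Combining,
\[
\volout[G']{L'} \le 2\,\volout[G]{L} \le 2\volest ,
\]
so $(L', V(G') \setminus L')$ is an \voledgecut{\cutsize}{\le}{2\volest}{s} in $G'$.
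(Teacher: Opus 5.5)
Your proof is correct and follows the paper's argument. You use the same edge-type enumeration to show that exactly the $\cutsize$ internal edges of $S$ leave $L'$, and the same decomposition $\volout[G']{L'} = \volout[G]{L} + (|L|-1) + \cutsize$. You then close it the same way, by absorbing the extra terms into $\volout[G]{L}$ via $\mindegout[G] > \cutsize$. The paper uses $|L| + \cutsize - 1 \le |L|\cutsize \le \volout[G]{L}$, whereas you use one vertex of out-degree at least $\cutsize+1$ plus $|L|-1$ vertices of out-degree at least $1$.

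Delete the abandoned charging paragraph. It is not needed, and it would not work as stated, since a vertex of $L \setminus \{s\}$ need not have any in-edge from $L$.
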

\begin{proof}
    We first show that $(L', V(G') \setminus L')$ is an $s$-rooted $\cutsize$-edge cut in $G'$.
    Since $(L, S, R)$ is a separation triple, $L$ has no edges to $R$, so every out-edge of $L$ ends in $L \cup S$; as the in-copies of $L$ and $S$ both lie in $L'$, no edge of $G'$ corresponding to an out-edge of $L$ in $G$ crosses the edge cut.
    The only edges in $\bndout[G']{L'}$ are therefore the internal edges $(v^-\!, v^+)$ with $v \in S$, of which there are exactly $|S| = \cutsize$.
    Hence $(L', V(G') \setminus L')$ is a $\cutsize$-edge cut, and $s \in L'$ since $s \in L$.

    Next we bound $\volout[G']{L'}$. Splitting $L'$ into its three parts,
    \[
        \volout[G']{L'} =  \underbrace{\sum_{v \in L} \dgout[G']{v^+}}_{A} + \underbrace{\sum_{v \in L \setminus \{s\}} \dgout[G']{v^-}}_{B} + \underbrace{\sum_{v \in S} \dgout[G']{v^-}}_{C}.
    \]
    The sum $A$ counts the out-edges of the vertices $\{v^+ \mid v \in L\}$, which are in bijection with the out-edges of the vertices of $L$ in $G$; hence $A = \volout[G]{L}$.
    The sums $B$ and $C$ count internal edges. Each in-vertex $v^-$ has out-degree $1$, so $B = |L| - 1$ and $C = |S|$.
    Using $\volout[G]{L} \le \volest$ and $|S| = \cutsize$,
    \[
        \volout[G']{L'} = A + B + C = \volout[G]{L} + (|L| - 1) + |S| \le \volest + |L| + \cutsize - 1.
    \]
    Since $|L| \ge 1$ and $\cutsize \ge 1$, we have $(|L| - 1)(\cutsize - 1) \ge 0$, that is, $|L| + \cutsize - 1 \le |L|\cutsize$.
    Moreover, since every vertex of $L$ has out-degree greater than $\cutsize$, we have $\volout[G]{L} = \sum_{v \in L} \dgout[G]{v} \ge |L|\cutsize$, so that $|L|\cutsize \le \volout[G]{L} \le \volest$.
    Combining these two bounds yields
    \[
        \volout[G']{L'} \le \volest + |L| + \cutsize - 1 \le \volest + |L|\cutsize \le 2\volest,
    \]
    which shows that $(L', V(G') \setminus L')$ is an \voledgecut{\cutsize}{\le}{2\volest}{s} in $G'$.
\end{proof}

\begin{lemma}
    \label{lem:ec_to_vc}
    For a directed graph $G$, a vertex $s \in V(G)$, and two positive integers $\cutsize, \volest$, let $G'$ be the split graph of $G$ with respect to $s$.
    Suppose that $\cutsize \le \frac{n}{2}$, $\mindegout[G] > \cutsize$, and $\volest \le \frac{m'}{4}$, and let $(L', V(G') \setminus L')$ be a \voledgecut{\cutsize}{\le}{\volest}{s} in $G'$.
    Let $L = \{v \in V(G) \mid \{v^-\!,v^+\} \subseteq L'\}$.
    Then $(L, \nbrout[G]{L}, V(G) \setminus (L \cup \nbrout[G]{L}))$ is an \volvertcut{\cutsize}{\le}{\volest}{s} in $G$.
\end{lemma}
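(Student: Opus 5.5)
The plan is to verify the properties of a separation triple one at a time for $S \coloneqq \nbrout[G]{L}$ and $R \coloneqq V(G) \setminus (L \cup S)$: $s \in L$, no edges from $L$ to $R$, $|S| \le \cutsize$, $\volout[G]{L} \le \volest$, and $R \neq \emptyset$. Three of these are routine.
\begin{itemize}
\item Since $s^- = s^+ = s \in L'$, we get $s \in L$.
\item $E_G(L, R) = \emptyset$ holds by construction, because every out-neighbor of $L$ lies in $L \cup S$.
\item For the volume, each out-edge $(u,v)$ of $u \in L$ in $G$ corresponds to the out-edge $(u^+, v^-)$ of $u^+ \in L'$ in $G'$, and distinct edges correspond to distinct edges. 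Hence $\volout[G]{L} \le \volout[G']{L'} \le \volest$.
\end{itemize}

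For the cut size, I will build an injection from $S$ into $\bndout[G']{L'}$. Take $v \in S$. Then $v \notin L$, while some $u \in L$ has $(u,v) \in E(G)$, so $u^+ \in L'$. There are two cases.
\begin{itemize}
\item If $v^- \notin L'$, map $v$ to the edge $(u^+, v^-)$, which leaves $L'$.
\item If $v^- \in L'$, then $v^+ \notin L'$ since $v \notin L$, and we map $v$ to the internal edge $(v^-, v^+)$, which also leaves $L'$.
\end{itemize}
Edges of the first kind have pairwise distinct heads $v^-$, and edges of the second kind are distinct internal edges, so the two kinds never coincide. Hence the map is injective and $|S| \le |\bndout[G']{L'}| \le \cutsize$. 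If the definition of a \volvertcut{\cutsize}{\le}{\volest}{s} requires $|S| = \cutsize$ exactly, I would pad $S$ with vertices moved from $R$. Moving vertices out of $R$ cannot create edges from $L$ to $R$, and $L$ is unchanged. The only care needed is that $R$ stays nonempty, which follows from a strict version of the bound in the next paragraph.

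The step I expect to be the real obstacle is $R \neq \emptyset$, i.e.\ $L \cup S \neq V(G)$. This is where $\cutsize \le \frac n2$, $\mindegout[G] > \cutsize$ and $\volest \le \frac{m'}{4}$ are all used. Suppose for contradiction that $L \cup S = V(G)$. Then $|L| = n - |S| \ge n - \cutsize \ge \frac n2$. Since every vertex has out-degree at least $\cutsize + 1$, this gives $\volest \ge \volout[G]{L} \ge (\cutsize+1)\frac n2$.

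On the other hand, every edge of $G$ has its tail in $L$ or in $S$. Each vertex of $S$ has out-degree at most $n-1$, so $m \le \volout[G]{L} + \cutsize(n-1) \le \volest + \cutsize(n-1)$. Substituting into $4\volest \le m' = m + n - 1$ yields $3\volest \le (\cutsize+1)(n-1)$, i.e.\ $\volest < (\cutsize+1)\frac n2$, contradicting the lower bound above. The same computation should leave enough slack to justify the padding step when $|S| < \cutsize$.
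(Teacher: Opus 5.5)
Your proof is correct. For $s \in L$, the cut size and the volume it matches the paper. Your injection of $\nbrout[G]{L}$ into $\bndout[G']{L'}$ (via $(u^+,v^-)$ or the internal edge $(v^-,v^+)$) is the paper's witness argument, phrased with edges instead of the vertices $v^-, v^+ \in \nbrout[G']{L'}$. Your bound $\volout[G]{L} \le \volout[G']{L'}$ is the paper's $(*)$ without the correction term $-|L\setminus\{s\}|$.

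The real difference is in showing $L \cup \nbrout[G]{L} \neq V(G)$. The paper sets $k = |\nbrout[G]{L}|$ and splits on whether $m' \ge 2nk$. In the first case it needs the sharper $(*)$, i.e.\ the internal edges counted in $\volout[G']{L'}$, to reach $m < m$. In the second case it uses $\dgout[G]{v} \ge k$ to get $\volest \ge (n-k)k > \frac{m'}{4}$. You instead combine $|L| \ge n-\cutsize \ge \frac n2$ with the full strength $\dgout[G]{v} \ge \cutsize+1$, which gives $\volest \ge (\cutsize+1)\frac n2$ against $3\volest \le (\cutsize+1)(n-1)$. The extra $+1$ absorbs the $n-1$ internal edges in $m' = m+n-1$. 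So you need neither a case split nor the correction term, and your version is shorter.

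The padding step is unnecessary and slightly off target. The paper's proof also only establishes $|\nbrout[G]{L}| \le \cutsize$. Where the lemma is applied, $G$ is $\cutsize$-connected, so the vertex cut $\nbrout[G]{L}$ automatically has size exactly $\cutsize$. Padding would replace the specific triple named in the lemma, and the set $\nbrout[G]{L}$ that \cref{thm:local_vc} actually outputs, with a different one. Your observation that the same computation gives $|L| \le n-\cutsize-1$, and so leaves room in $R$, is nonetheless correct.
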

\begin{proof}
    Note that $s \in L$, since $s \in L'$ and $s^- = s^+ = s$.
    We first bound the cut size. Fix $v \in \nbrout[G]{L}$ together with an edge $(u,v) \in \bndout[G]{L}$; since $u \in L$, we have $\{u^-\!,u^+\} \subseteq L'$. If $v^- \notin L'$, then $(u^+,v^-) \in \bndout[G']{L'}$, so $v^- \in \nbrout[G']{L'}$. Otherwise $v^- \in L'$, and since $v \notin L$ we must have $v^+ \notin L'$; then $(v^-,v^+) \in \bndout[G']{L'}$, so $v^+ \in \nbrout[G']{L'}$. In either case at least one of $v^-\!, v^+$ lies in $\nbrout[G']{L'}$, and these witnesses are distinct for different vertices $v \in \nbrout[G]{L}$. Hence,
    \[
        |\nbrout[G]{L}| \le |\nbrout[G']{L'}| \le |\bndout[G']{L'}| \le \cutsize.
    \]
    Next we bound the volume:
    \begin{align*}
        \volout[G]{L} = \sum_{v \in L} \dgout[G]{v} \overset{(1)}{=} \sum_{v \in L} \dgout[G']{v^+} \overset{(2)}{\le} \volout[G']{L'} - |L \setminus \{s\}|, \tag{$*$} \label{eq:star}
    \end{align*}
    where $(1)$ uses $\dgout[G']{v^+} = \dgout[G]{v}$, and $(2)$ holds because for each $v \in L$ we have $\{v^-\!, v^+\} \subseteq L'$ with $\dgout[G']{v^-} = 1$ for $v \ne s$, so $\{v^+ \mid v \in L\}$ and $\{v^- \mid v \in L \setminus \{s\}\}$ are disjoint subsets of $L'$. Since $\volout[G']{L'} \le \volest$, this shows $\volout[G]{L} \le \volest$.

    It remains to show that the cut is valid, i.e., that $V(G) \setminus (L \cup \nbrout[G]{L}) \ne \emptyset$, or equivalently $L \cup \nbrout[G]{L} \ne V(G)$.
    Suppose for contradiction that $L \cup \nbrout[G]{L} = V(G)$, and let $k = |\nbrout[G]{L}|$; since $L$ and $\nbrout[G]{L}$ are disjoint, $|L| = n - k$.
    We distinguish two cases. If $m' \ge 2nk$, then
    \[
        m = \volout[G]{L} + \volout[G]{\nbrout[G]{L}}.
    \]
    For the first term, \eqref{eq:star} gives
    \[
        \volout[G]{L} \le \volout[G']{L'} - |L \setminus \{s\}| \le \volest - (n-k-1) \le \frac{m'}{4} - n + k + 1.
    \]
    For the second term, since $G$ is simple, each of the $k$ vertices of $\nbrout[G]{L}$ has out-degree at most $n-1$, so
    \[
        \volout[G]{\nbrout[G]{L}} \le k (n - 1) = nk - k \le \frac{m'}{2} - k.
    \]
    Adding these yields
    \[
        m \le \frac{m'}{4} - n + k + 1 + \frac{m'}{2} - k < m' - n + 1 = m,
    \]
    a contradiction which shows that $m' < 2nk$. But then
    \[
        \volest \ge \volout[G]{L} \overset{(3)}{\ge} |L|\cdot k = (n - k) \cdot k \overset{(4)}{\ge} \frac{n}{2} \cdot k = \frac{nk}{2} \overset{(5)}{>} \frac{m'}{4} \ge \volest,
    \]
    where $(3)$ holds because every $v \in L$ has $\dgout[G]{v} > \cutsize \ge k$, $(4)$ uses $k \le \cutsize \le \frac{n}{2}$, and $(5)$ uses $m' < 2nk$. This is again a contradiction; therefore, $L \cup \nbrout[G]{L} \ne V(G)$.
\end{proof}

We now combine this reduction with our incremental edge-cut data structure (\cref{thm:local_ec}) to prove \cref{thm:local_vc}.
Besides a copy of $G$, our \textsc{IncrementalRootedVC} data structure will require a copy of the split graph $G'$ as doubly linked adjacency lists, in which \emph{every} vertex (including $s$) is split; that is, $V(G') = \{v^-\!, v^+ \mid v \in V(G)\}$ and $E(G') = \{(u^+\!, v^-) \mid (u,v) \in E(G)\} \cup \{(v^-\!, v^+) \mid v \in V(G)\}$.
This modification makes $G'$ independent of $s$, so a single copy of $G'$ can be reused for all invocations of \textsc{IncrementalRootedVC} in \cref{alg:main},
even if these instances use different source vertices $s$: during an invocation, some edges of $G'$ might get reversed, but \cref{alg:inc_local_ec} undoes all these reversals before returning, leaving $G'$ in its original state for the next invocation.

This copy of $G'$ can be maintained with constant overhead per edge insertion:
Whenever an edge $(u,v)$ gets inserted into $G$, we insert the edge $(u^+, v^-)$ into $G'$.
Finally, we note that, although $s$ is split in $G'$, the adjacency list of $G'$
encodes all the information necessary for us to run \cref{alg:inc_local_ec} on the split graph where the vertex $s$ is not split.
Indeed, the out-neighbors of the unsplit vertex $s$ are precisely the out-neighbors of the split vertex $s^+$, and we can interpret any edge leading to the split vertex $s^-$ as an edge leading to $s$ instead.

\theoLocalVC*
\begin{proof}
    Our data structure is given $G$ together with the split graph $G'$ in which every vertex, including $s$, is split, as described above. For the analysis we instead regard $G'$ as the split graph of $G$ with $s$ left unsplit, exactly as in \cref{lem:vc_to_ec,lem:ec_to_vc}; then $G'$ has $n' = 2n-1$ vertices and $m' = m+n-1$ edges.
    We defer the explanation of how the algorithm operates on the fully split graph it actually receives to the implementation details below.

    Our data structure runs \textsc{IncrementalRootedEC} (\cref{thm:local_ec}) on $G'$ with parameters $\cutsize' = \cutsize$, $\volest' = 2\volest$, and source $s$. Each edge $(u,v)$ inserted into $G$ is passed on as the edge $(u^+, v^-)$ inserted into $G'$. Whenever \textsc{IncrementalRootedEC} reports an edge cut $(L', V(G') \setminus L')$, we implicitly form $L = \{v \in V(G) \mid \{v^-\!, v^+\} \subseteq L'\}$ and output the separator $S = \nbrout[G]{L}$; whenever it reports $\bot$, we output $\bot$. In the implementation details below, we explain how to compute $S$ efficiently.

    \subparagraph{Correctness.}
    We first check that \textsc{IncrementalRootedEC} is invoked with valid parameters: $\cutsize'$ and $\volest'$ are positive, and \cref{thm:local_ec} requires $\volest' \le \frac{m'}{2\cutsize}$, which holds since $\volest' = 2\volest \le \frac{m}{12\cutsize} \le \frac{m'}{2\cutsize}$.

    Suppose \textsc{IncrementalRootedEC} reports a cut $(L', V(G') \setminus L')$. By \cref{thm:local_ec} this is a \voledgecut{\cutsize}{\le}{3\volest'\cutsize}{s} in $G'$, that is, a $(\cutsize, \le\!6\volest\cutsize)$-edge cut. Since $6\volest\cutsize \le 6 \cdot \frac{m}{24\cutsize} \cdot \cutsize = \frac{m}{4} \le \frac{m'}{4}$, and since $\cutsize \le \frac{n}{2}$ and $\mindegout[G] > \cutsize$, \cref{lem:ec_to_vc} applies and shows that $(L, \nbrout[G]{L}, V(G) \setminus (L \cup \nbrout[G]{L}))$ is an \volvertcut{\cutsize}{\le}{6\volest\cutsize}{s} in $G$.
    Hence the vertex cut $S = \nbrout[G]{L}$ that we output belongs to a valid \volvertcut{\cutsize}{\le}{6\volest\cutsize}{s}.

    It remains to show that we do not output $\bot$ prematurely. For this, let $(L, S, R)$ be the particular \volvertcut{\cutsize}{\le}{\volest}{s} from the statement of the theorem. %Suppose that $G$ has a \volvertcut{\cutsize}{\le}{\volest}{s} $(L, S, R)$.
    By \cref{lem:vc_to_ec}, the pair $(L', V(G') \setminus L')$ with $L' = \{v^-\!,v^+ \mid v \in L\} \cup \{v^- \mid v \in S\}$ is a \voledgecut{\cutsize}{\le}{2\volest}{s} in $G'$; applying the lemma to the current graph after each insertion, it remains valid for as long as $(L,S,R)$ remains a valid \volvertcut{\cutsize}{\le}{\volest}{s}.
    Since each insertion into $G$ induces exactly one insertion into $G'$, a randomness-adaptive adversary against our data structure induces a randomness-adaptive adversary against \textsc{IncrementalRootedEC}.
    Thus, by \cref{thm:local_ec} (with $\volest' = 2\volest$), with probability at least $\frac{1}{2}$, \textsc{IncrementalRootedEC} outputs $\bot$ only once $(L', V(G') \setminus L')$ is no longer valid; in that case our data structure outputs $\bot$ only once $(L,S,R)$ is no longer valid.

    \subparagraph{Implementation details.}
    The algorithm is actually given the graph $G'$ in which \emph{every} vertex, including $s$, is split. To make it behave as on the single-copy split graph above, we run \textsc{IncrementalRootedEC} from the source $s^+$ and, at the start of each of its outer iterations, additionally mark $s^-$ as reached, adding it to $L'$.
    The only edge out of $s^-$ is the internal edge $(s^-, s^+)$, whose endpoints both lie in $L'$, so this reproduces the reachability from an unsplit $s$, at $\cO(1)$ extra cost per iteration.

    To maintain the vertex cut $S$ efficiently, we first make a few observations.
    Recall that $L = \{v \in V(G) \mid \{v^-\!, v^+\} \subseteq L'\}$ and $S = \nbrout[G]{L}$.
    A vertex $v$ enters $L$ exactly when both $v^-$ and $v^+$ have been added to $L'$.
    When this happens, $S$ must be updated in two ways:
    since $v$ is now in $L$, it leaves $\nbrout[G]{L}$ and must be removed from $S$ if present; and every out-neighbor of $v$ in $G$ that is not already in $L$ must be added to $S$.

    We maintain $L$ implicitly: for each vertex $v \in V(G)$ we record whether $v^- \in L'$ and whether $v^+ \in L'$.
    This lets us test whether $v \in L$ in $\cO(1)$ time.
    We update this information in $\cO(1)$ time whenever \textsc{IncrementalRootedEC} adds a vertex to $L'$. 
    We maintain $S$ explicitly as a doubly linked list together with, for each vertex $u \in V(G)$, a pointer to its position in $S$ (or $\bot$ if $u \notin S$). This way, a vertex can be added to or removed from $S$ in $O(1)$ time.
    Whenever \textsc{IncrementalRootedEC} adds the vertex $v^+$ or $v^-$ to $L'$,
    we test whether $v$ should be added to $L$,
    and if so, update $S$ as described above.
    Maintaining $S$ this way takes $\cO(\volout[G]{L})$ time per outer iteration of the for loop in \cref{alg:inc_local_ec}.

    There is one technical complication here.
    The bound $\volout[G]{L} \le \volout[G']{L'} \le 6\volest\cutsize$ from \cref{lem:getcut_correct} and \cref{lem:ec_to_vc} only holds when \textsc{IncrementalRootedEC} \emph{outputs} the set $L'$; while $L'$ is still growing, $\volout[G]{L}$ can be far larger, so updating $S$ immediately after every change to $L$ could take $\gg \volest \cutsize$ time.
    To avoid this, we record all changes to $L$ in a linked list and apply them to $S$ only once \textsc{IncrementalRootedEC} outputs $L'$, at which point $\volout[G]{L} \le 6\volest\cutsize$.
    If \cref{alg:inc_local_ec} instead exits the while loop and resets $L$ to $\{s\}$, we discard this list and reset $L$ and $S$.

    \subparagraph{Running time.}
    By \cref{thm:local_ec}, \textsc{IncrementalRootedEC} on $G'$ with parameters $\cutsize$ and $2\volest$ processes $\ell$ insertions in $\cO(2\volest\cutsize^2 + \ell) = \cO(\volest\cutsize^2 + \ell)$ time.
    Maintaining $L$ adds a constant-factor overhead compared to how \textsc{IncrementalRootedEC} maintains $L'$.
    Maintaining $S$ costs $\cO(\volest\cutsize)$ per iteration of the for loop in \cref{alg:inc_local_ec}; over the at most $\cutsize + 1$ such iterations, this sums to $\cO(\volest\cutsize^2)$.
    Since $G'$ is given as input and restored after use, no time is spent constructing it.
    In total, the data structure runs in $\cO(\volest\cutsize^2 + \ell)$ time.
\end{proof}

\section{Auxiliary data structures} \label{sec:aux_datastructs}

This section covers the existing data structures for incremental reachability and incremental max-flow from the literature that we use in \cref{alg:main}.
For completeness, we describe how each of these structures works internally.
We start by presenting a folklore data structure for maintaining single-source reachability.

\begin{theorem}[{\cite{italiano1986}}] \label{clm:inc_reachability}
    Given a graph $G$ with $n$ vertices and a source vertex $u \in V(G)$, there is a data structure that maintains $G$ under edge insertions and, after every insertion, reports whether every vertex $v \in V(G)$ is reachable from $u$ in $G$.
    The total running time is $\cO(m)$, where $m$ is the number of edges in $G$ after all edge insertions.
\end{theorem}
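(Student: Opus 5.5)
The plan is the classical incremental search: maintain a set $R$ of vertices known to be reachable from $u$, with a counter $|R|$, and grow it exactly when an edge insertion makes new vertices reachable. I would store a boolean array $\mathrm{mark}(v)$ for $v \in V(G)$, initially true only for $u$, so $R = \{u\}$ and the counter is $1$. Then I run a graph search (DFS or BFS) from $u$ over the initial edges, marking each newly discovered vertex and incrementing the counter. After this step and after every insertion, the data structure reports ``every vertex is reachable'' if and only if the counter equals $n$; each report takes $\cO(1)$ time.

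To handle the insertion of an edge $(x,y)$, I first append it to $x$'s adjacency list. If $\mathrm{mark}(x)$ is false, nothing else happens. If $\mathrm{mark}(x)$ is true and $\mathrm{mark}(y)$ is false, I mark $y$ and resume the search from $y$, scanning adjacency lists and marking every unmarked vertex that is found. The invariant is that $R$ is exactly the set of vertices reachable from $u$ in the current graph, and that every outgoing edge of a vertex in $R$ has been scanned. I would show by induction over insertions that this invariant holds. If $x \notin R$, then no path from $u$ can use the new edge, because reaching $x$ would put $x$ in $R$; so reachability is unchanged. If $x \in R$, the resumed search from $y$ discovers precisely the vertices that become newly reachable, since any new path from $u$ leaves the old set $R$ for the first time either through $(x,y)$ or through an edge that has already been scanned and therefore leads into $R$. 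The correctness of the reported answer follows directly from the invariant.

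For the running time, each vertex is marked at most once, since marks are never removed. Its adjacency list is scanned in full only once, namely at the moment it is marked. Edges inserted later at a marked tail are handled in $\cO(1)$ by the insertion step itself. So the total scanning cost is $\cO(n + m)$, and each insertion adds $\cO(1)$ overhead, which gives $\cO(n+m)$ overall. The only subtle point, and the one I expect to need care, is the $n$ term: the bound $\cO(m)$ needs either $n \in \cO(m)$ or an implicit initialization of the array. I would resolve this the way the paper already does in \cref{alg:inc_local_ec}, by treating the array as implicitly initialized, so that only marked vertices are ever touched and each mark is charged to the edge that discovered it.

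\cref{lemm:incReachability} then follows by running this structure on both $G$ and $\reversed{G}$ from the same $u$. $G$ is strongly connected exactly when both counters equal $n$.
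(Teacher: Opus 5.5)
Your proposal is correct and follows the same approach as the paper: maintain the set of vertices reachable from $u$ together with a counter, extend the search from $y$ exactly when an inserted edge $(x,y)$ has a reached tail and an unreached head, charge each edge once at its insertion and once when its tail is reached, and answer by comparing the counter to $n$. Two points you make precise are left implicit in the paper's short proof: the closure invariant that justifies correctness, and the observation that the $n$ term vanishes under implicit array initialization because each mark other than $u$'s is charged to the edge that discovered it.
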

\begin{proof}
    The data structure maintains a set $T$ of vertices reachable from $u$.
    Upon insertion of an edge $(v, w)$, we check whether $v \in T$.
    If so, and if $w \notin T$, we add $w$ to $T$ and recurse on the outgoing edges of $w$.
    This process takes $\cO(m)$ time in total over all edge insertions, as every edge is processed at most twice: when it is inserted and when its tail is added to $T$.
    Maintaining a counter for $|T|$, we can then answer whether all vertices are reachable from $u$ in $\cO(1)$ time by checking whether $|T| = n$.
\end{proof}

Next, we will use this data structure to maintain whether a given graph undergoing edge insertions is strongly connected.
This is done by maintaining two copies of the data structure, on the graph and the reversed graph, respectively.

\lemmIncConn*
\begin{proof}
    We fix an arbitrary vertex $u \in V(G)$ and maintain the single-source reachability from $u$ in both the graph $G$ and the reversed graph $\reversed{G}$ with the incremental data structure from \cref{clm:inc_reachability}, where the total running time is $\cO(m)$.
    We show that $G$ is strongly connected if and only if every vertex $v \in V(G)$ is reachable from $u$ both in $G$ and in $\reversed{G}$.

    If there is a vertex $v$ that is not reachable from $u$ in $G$, then $G$ is trivially not strongly connected.
    Similarly, if there is a vertex $v$ that is not reachable from $u$ in $\reversed{G}$, then $u$ is not reachable from $v$ in $G$, so $G$ is also not strongly connected.

    Conversely, assume that every vertex is reachable from $u$ in both $G$ and $\reversed{G}$.
    Now fix any two vertices $v, w \in V(G)$.
    Since $v$ is reachable from $u$ in $\reversed{G}$, the vertex $u$ is reachable from $v$ in $G$; and by assumption $w$ is reachable from $u$ in $G$.
    Concatenating these two paths shows that $w$ is reachable from $v$. As $v$ and $w$ were chosen arbitrarily, $G$ is strongly connected.
\end{proof}

Finally, we mention the data structure we use to maintain exact $(s, t)$-vertex cuts under edge insertions, using max-flow.
The max-flow data structure of \cite{GuptaK21} internally uses the reachability structure introduced above to maintain a reachability tree in the residual graph, updating the flow and recomputing the tree whenever the flow value increases.
Since we are only interested in vertex cuts of size $\cutsize$, we can terminate this process once the flow becomes too big.

\thmMaxFlowVC*
\begin{proof}
    The data structure maintains a maximum $(s, t)$-flow in a modified graph $G'$ on $2 n$ vertices and $m + 2n$ edges using the incremental max-flow data structure of \cite{GuptaK21}.
    The graph $G'$ is obtained from $G$ in the standard way by splitting each vertex $v \in V(G)$ into two vertices $v^-\!, v^+$, adding an edge $(v^-\!, v^+)$ of capacity $1$, and an edge $(v^+\!, v^-)$ of infinite capacity;
    and replacing each edge $(u, v) \in E(G)$ with an edge $(u^+\!, v^-)$ of infinite capacity.
    Then, a minimum $(s, t)$-vertex cut in $G$ corresponds to a minimum $(s^+\!, t^-)$-edge cut in $G'$.
    
    More precisely, if $(L', R')$ is a minimum $(s^+\!, t^-)$-edge cut in $G'$,
    then $S$ is a minimum $(s, t)$-vertex cut where $S = \set{v \in V(G) \mid v^- \in L' \land v^+ \in R'}$.
    The data structure of \cite{GuptaK21} maintains not only the current flow value, but also the set of vertices $L'$ of some minimum $(s^+\!, t^-)$-edge cut $(L', R')$ in $G'$.
    Thus, the structure can also maintain the set $S$ of the corresponding $(s, t)$-vertex cut in $G$.

    The total running time of the data structure of \cite{GuptaK21} for a sequence of $\ell$ edge insertions is $O(f m + \ell)$
    where $f$ is the final flow value. In our case, we will terminate the augmentation process once the flow value exceeds $\cutsize$ and report that no $(s, t)$-vertex cut of size $\cutsize$ exists, and so the runtime will be $O(\cutsize m + \ell)$.
    We also need an additional $O(\cutsize \ell)$ total time to output the set $S$ after every edge insertion.
    Thus, the total running time is $O(\cutsize (m + \ell))$.
\end{proof}

\section{Declaration of AI usage}
All lemmas, theorems, algorithms, and results were obtained without the use of artificial intelligence.
We used Gemini 3.1 Pro for creating and modifying the tikz source code of \cref{fig:lower_bound_transformation}.
In the final writing stages, we used large language models for spell-checking. 

\printbibliography

\end{document}